\newcolumntype{C}[1]{>{\centering\arraybackslash}p{#1}}
\newtheorem{theorem}{Theorem}[section]
\newenvironment{proof}[1][Proof]{\begin{trivlist}
\item[\hskip \labelsep {\bfseries #1}]}{\end{trivlist}}
\newenvironment{remark}[1][Remark]{\begin{trivlist}
\item[\hskip \labelsep {\bfseries #1}]}{\end{trivlist}}
\newenvironment{notation}[1][Notation]{\begin{trivlist}
\item[\hskip \labelsep {\bfseries #1}]}{\end{trivlist}}
\journal{Journal}
\definecolor{darkgreen}{rgb}{0,0.5,0}
\newcommand{\Div}[1]{\nabla \cdot {#1}}
\newcommand{\Grad}[1]{\nabla {#1}}
\newcommand{\boundary}[1]{\Gamma^{\mathrm{#1}}}
\newcommand{\hboundary}[1]{\Gamma_{h}^{\mathrm{#1}}}
\newcommand{\avg}[1]{\{\!\{#1\}\!\}}
\newcommand{\jump}[1]{\llbracket {#1} \rrbracket }
\newcommand{\jumporiented}[1]{\left[ {#1} \right] }
\newcommand{\intdomain}[2]{ \left( {#1},{#2} \right)_{\Omega_{h}} }
\newcommand{\intinteriorfaces}[2]{ \left( {#1},{#2} \right)_{\Gamma_{h}^{\mathrm{int}} }}
\newcommand{\intele}[2]{ \left( {#1},{#2} \right)_{\Omega_{e}} }
\newcommand{\intelenp}[2]{ \left( {#1},{#2} \right)_{\Omega_{e}^{n+1}} }
\newcommand{\inteleface}[2]{ \left( {#1},{#2} \right)_{\partial\Omega_{e}} }
\newcommand{\intelefaceInterior}[2]{ \left( {#1},{#2} \right)_{\partial\Omega_{e}\setminus\Gamma_h }}
\newcommand{\intelefaceDirichlet}[2]
{ {\left( {#1},{#2} \right)}_{\partial\Omega_{e}\cap \Gamma^{\rm{D}}_{h}}}
\newcommand{\intelefaceNeumann}[2]{ {\left( {#1},{#2} \right)}_{\partial\Omega_{e}\cap \Gamma^{\rm{N}}_{h}}}
\newcommand{\plotgridsine}[1]{
	\begin{tikzpicture}[scale=0.30\textwidth/1.25cm]
	\def\width{1.0}
	\def\left{-0.5}
	\def\right{0.5}
	\def\A{#1}
	\foreach \y in {-0.5,-0.375,-0.25,-0.125,0,0.125,0.25,0.375,0.5}%
	{
		\draw[variable=\x,domain=\left:\right,samples=40] plot ({\x+sin(deg(2*pi*(\y-\left)/\width))*\A},{\y+sin(deg(2*pi*(\x-\left)/\width))*\A},0);
		\draw[variable=\x,domain=\left:\right,samples=40] plot ({\y+sin(deg(2*pi*(\x-\left)/\width))*\A},{\x+sin(deg(2*pi*(\y-\left)/\width))*\A},0);
	}
	%
	\draw[draw=white,ultra thin] (-0.625,-0.625) rectangle (0.625,0.625);
	\end{tikzpicture}
}
\begin{document}

\begin{frontmatter}

\title{High-order arbitrary Lagrangian--Eulerian discontinuous Galerkin\\ methods for the incompressible Navier--Stokes equations}

\author{Niklas Fehn\corref{correspondingauthor1}}
\cortext[correspondingauthor1]{Corresponding author at: Institute for Computational Mechanics, Technical University of Munich, Boltzmannstr. 15, 85748 Garching, Germany. Tel.: +49 89 28915300; fax: +49 89 28915301}
\ead{fehn@lnm.mw.tum.de}
\author{Johannes Heinz\corref{correspondingauthor2}}
\ead{johannes.heinz@tum.de}
\author{Wolfgang A. Wall}
\ead{wall@lnm.mw.tum.de}
\author{Martin Kronbichler}
\ead{kronbichler@lnm.mw.tum.de}
\address{Institute for Computational Mechanics, Technical University of Munich,\\ Boltzmannstr. 15, 85748 Garching, Germany}

\begin{abstract}
This paper presents robust discontinuous Galerkin methods for the incompressible Navier--Stokes equations on moving meshes. High-order accurate arbitrary Lagrangian--Eulerian formulations are proposed in a unified framework for both monolithic as well as projection or splitting-type Navier--Stokes solvers. The framework is flexible, allows implicit and explicit formulations of the convective term, and adaptive time-stepping. The Navier--Stokes equations with ALE transport term are solved on the deformed geometry storing one instance of the mesh that is updated from one time step to the next. Discretization in space is applied to the time discrete equations so that all weak forms and mass matrices are evaluated at the end of the current time step. This design ensures that the proposed formulations fulfill the geometric conservation law automatically, as is shown theoretically and demonstrated numerically by the example of the free-stream preservation test. We discuss the peculiarities related to the imposition of boundary conditions in intermediate steps of projection-type methods and the ingredients needed to preserve high-order accuracy. We show numerically that the formulations proposed in this work maintain the formal order of accuracy of the Navier--Stokes solvers. Moreover, we demonstrate robustness and accuracy for under-resolved turbulent flows.
\end{abstract}

\begin{keyword}
arbitrary Lagrangian--Eulerian (ALE), incompressible Navier--Stokes, discontinuous Galerkin, matrix-free methods, projection methods
\end{keyword}

\end{frontmatter}

\section{Introduction}\label{Intro}

The arbitrary Lagrangian--Eulerian (ALE) continuum mechanics description is the basis of many methods to capture flow problems on deforming domains. A very prominent class of applications are fluid--structure interaction problems with moderate deformations of the structure, where moderate means that a mesh moving or mesh smoothing algorithm is able to handle the mesh deformation of the fluid mesh following the deformations imposed at the fluid--structure interface, as opposed to very large deformations and topological changes that require other, geometrically more flexible techniques. ALE methods have a long tradition and have first been developed for finite difference methods~\cite{Hirt1974}, see also~\cite{Donea1982} for a review on the early development of this methodology, as well as~\cite{Donea2017} for a survey of ALE methods. It has later been developed for finite element discretizations of the compressible Navier--Stokes equations in~\cite{Donea1977,Donea1982}, of the incompressible Navier--Stokes equations using linear elements in~\cite{Hughes1981} and spectral element discretizations in~\cite{Beskok2001}, and also for finite volume discretizations, see for example~\cite{Lesoinne1996}. In the context of discontinuous Galerkin (DG) discretizations, this technique has first been applied to the compressible Navier--Stokes equations being solved on the deforming domain~\cite{Lomtev1999,Nguyen2010,Mavriplis2011}, or solving transformed equations on a reference domain~\cite{Persson2009,Schnucke2018arxiv}. For the incompressible Navier--Stokes equations, the development of ALE-DG methods lagged somewhat behind as detailed below. An ALE method satisfying the geometric conservation law (GCL)~\cite{Thomas1979} is able to preserve a constant flow state on moving meshes. The GCL has extensively been discussed in the context of finite volume discretizations of the compressible Navier--Stokes equations, see for example the review article~\cite{Farhat2004} and references therein. Here, we especially refer to two works~\cite{Foerster2006,Etienne2009} addressing the solution of incompressible flow problems and being particularly relevant for the present work. Following the design described in~\cite{Foerster2006} one can easily construct incompressible flow solvers that automatically fulfill the geometric conservation law, and we therefore do not discuss this topic at length in the present work.

Before discussing ALE-DG methods for incompressible flows, let us first summarize the key issue of incompressible Navier--Stokes DG solvers in the Eulerian case with static meshes. The main problem originates from the nonlinearity of the convective term in case the numerical velocity field is not pointwise divergence-free like with standard~$L^2$-conforming spaces. In this circumstance, the energy stability derived e.g.~for an upwind flux with linear transport terms of constant speed is lost. Special DG discretizations that are exactly mass conserving and energy stable have been discussed in early mathematical literature on DG methods for the incompressible Navier--Stokes equations, see~\cite{Cockburn2005, Cockburn2007, Cockburn2009}. However, the importance of this aspect and its relevance for practical problems have long not been realized in application-oriented DG literature, where no attempts have been made to fulfill these properties~\cite{Bassi2006, Shahbazi07, Hesthaven07, Botti11, Ferrer11, Klein13}. It was later found in~\cite{Steinmoeller13,Joshi16,Krank17,Fehn18a} that compliance with (i)~the divergence-free constraint and (ii)~inter-element continuity of the normal velocity is crucial in obtaining robust solvers for engineering applications where the solution is under-resolved, such as turbulent flows. Methods explicitly addressing these two requirements can be categorized into two groups, those fulfilling them exactly for example by choosing appropriate function spaces such as~$H(\mathrm{div})$-conforming spaces where the divergence of the velocity space lies in the pressure space~\cite{Cockburn2005, Cockburn2007, Cockburn2009, Lehrenfeld2016, Piatkowski2018, Schroeder2018, Rhebergen2018}, and those fulfilling them weakly by appropriate stabilization terms~\cite{Joshi16, Krank17, Fehn18a, Schroeder2017, Akbas2018}. These constraints on the velocity solution can be imposed in form of a postprocessing step or in an inbuilt/monolithic way, independently of the two categories a method belongs to. A comparative study of the two categories has been shown recently in~\cite{Fehn19Hdiv} in the context of under-resolved turbulent flows.

ALE formulations for the incompressible Navier--Stokes equations using DG discretizations have been presented in~\cite{Ferrer12b,Wang2018}, with the mesh motion being restricted to rigid body rotations without mesh deformation in~\cite{Ferrer12b}. Both methods are based on the dual splitting projection scheme proposed by~\cite{Orszag1986,Karniadakis1991} and use a discontinuous Galerkin discretization of the velocity--pressure coupling terms without integration by parts, originating from the method proposed in~\cite{Hesthaven07}, that has been shown to be unstable for small time step sizes by independent studies, see~\cite{Fehn17,Xu2019}. Both works~\cite{Ferrer12b,Wang2018} use equal-order poylnomials for velocity and pressure, but it was shown in~\cite{Fehn17} that the dual splitting scheme per se is not inf--sup stable as is sometimes believed. Morever, the works~\cite{Ferrer12b,Wang2018} enforce the important aspect of~$H(\mathrm{div})$-conformity discussed above neither exactly nor weakly, so that the robustness of these methods remains questionable. These works also do not comment on the fulfillment of the geometric conservation law theoretically or by numerical experiments. Second-order convergence in time is shown in~\cite{Ferrer12b} for a rotating, non-deforming mesh with fixed boundaries, hence, not being representative of a deforming fluid domain or a fluid--structure interaction problem. A temporal convergence test is shown in~\cite{Wang2018} for a full FSI problem by comparing the error against the solution for the smallest time step size and second-order accuracy is shown. It remains unclear from these works whether third order accuracy can be achieved for the dual splitting scheme in the ALE case, which has for example been shown in~\cite{Krank17} for a DG method solving the Eulerian form of the equations. Open questions remain therefore from previous works~\cite{Ferrer12b,Wang2018} on ALE-DG methods for the incompressible Navier--Stokes equations regarding how these methods are implemented exactly and questions about their numerical properties in terms of stability (small time steps, inf--sup problem, and under-resolved turbulence or energy stability), fulfillment of the geometric conservation law, and temporal convergence rates. 

More sophisticated convergence tests are presented in~\cite{Rhebergen2012} for a space--time HDG method on deforming domains, where it is argued that the reasons for choosing the space--time approach are automatically satifying the geometric conservation law and achieving arbitrarily high order in space and time. An energy-stable version of such a space--time HDG approach has been proposed recently in~\cite{Horvath2019} achieved through a velocity field that is~$H(\mathrm{div})$-conforming and exactly divergence-free. As shown in the present work, fulfillment of the GCL and discretizations that formally exhibit arbitrarily high order of accuracy in space can also be achieved with a classical method-of-lines approach. Achieving arbitrarily high order of accuracy in time with projection-type Navier--Stokes solvers is non-trivial, irrespective of the grid motion. However, our experience is that second or third-order time integration schemes are sufficient in terms of accuracy for practical problems, especially if the time step size is restricted according to the CFL condition when treating the convective term explicitly in time, which is the state-of-the-art solution technique used by some of the most sophisticated and computationally efficient high-order CFD solvers, such as Nektar++\cite{Nektar++} and Nek5000~\cite{Nek5000}. We emphasize that formal orders of accuracy describe the optimal behavior observable only in the asymptotic regime for sufficiently smooth solution. In this context, it should be mentioned that high order of convergence in space is rarely observed for practical problems. The dissipation/dispersion properties and the associated improved resolution capabilities of high-order methods rather than theoretical rates of convergence motivate the use of high-order methods for application-relevant, turbulent flows~\cite{Gassner2013,Moura2017,Fehn18b}. When it comes to the aspect of computational costs, it has not yet been demonstrated that space--time approaches can keep up with the fast solution techniques for incompressible flows mentioned above. A similar argument holds for the computational efficiency of matrix-based HDG solvers, which significantly trail behind fast matrix-free DG implementations on modern CPU hardware, as shown by a recent study~\cite{Kronbichler2018}.

For these reasons, our goal is to develop ALE-DG methods based on the method-of-lines approach that combine computationally efficient matrix-free DG implementations~\cite{Kronbichler2019fast} and fast Navier--Stokes solution algorithms~\cite{Fehn17,Fehn18a} with desirable discretization properties in terms of optimal convergence rates in time and space, the geometric conservation law, and robustness for turbulent flows. To develop algorithms as simple as possible, we make the following design choices: We solve the ALE equations on the deformed geometry, storing one instance of the mesh that is updated from one time step to the next by updating the coordinates of all nodal points. We introduce the ALE equations on the level of differential equations, subsequently discretized in time and space. This way, it is straight-forward to satisfy the geometric conservation law automatically~\cite{Foerster2006}, i.e., independently of the mesh motion and how the mesh velocity is computed numerically. We present a unified framework for both monolithic solvers and widely used projection-type solvers. The formulation is flexible regarding implicit versus explicit formulations of the convective term, and the framework naturally includes the option for adaptive time-stepping. Although we consider analytical mesh motions in the present work, the methods are formulated with fluid--structure interaction problems in mind, i.e., the ALE formulations are implemented in a way that they only require knowledge about the coordinates of all grid nodes at discrete instances of time, and the grid velocity is computed from these grid coordinates in a way that the formal order of accuracy of the time integration schemes is maintained on moving meshes.

The outline of this article is as follows. We derive the ALE form of the incompressible Navier--Stokes equations in Section~\ref{sec:MathematicalModel}. Aspects related to the temporal discretization are discussed in Section~\ref{sec:TemporalDiscretization}, and the spatial discretization is subject of Section~\ref{sec:SpatialDiscretization}. Here, our focus lies on the aspects relevant to ALE, with the goal to provide a comprehensive and simple formulation that can easily be included in existing flow solvers. Numerical results are presented in Section~\ref{sec:NumericalResults}, and we summarize our results in Section~\ref{sec:Conclusion}.

\section{Incompressible Navier--Stokes equations in arbitrary Lagrangian--Eulerian formulation}\label{sec:MathematicalModel}
We consider the incompressible Navier--Stokes equations in a domain~$\Omega \subset \mathbb{R}^d$
\begin{align}
\left.\frac{\partial \bm{u}}{\partial t}\right\vert_{\bm{X}} - \nabla \cdot \bm{F}_{\mathrm{v}} (\bm{u}) + \nabla p &= \bm{f} \; ,\label{eq:MomentumEquationMaterial}\\
\nabla \cdot \bm{u} &= 0 \; ,\label{eq:ContinuityEquation}
\end{align}
where~$\bm{u}=(u_1,...,u_d)^{\mathsf{T}}$ is the velocity vector and~$p$ the kinematic pressure. The body force vector is denoted by~$\bm{f} = (f_1,...,f_d)^{\mathsf{T}}$. Spatial derivatives are defined w.r.t. the Eulerian coordinates~$\bm{x}=(x_1,...,x_d)^{\mathsf{T}}$,~$\nabla{\phi} = \frac{\partial \phi}{\partial \bm{x}}$, and~$\left.\right\vert_{\bm{X}}$ denotes the material time derivative, i.e., the total time derivative along the trajectory of a material point~$\bm{X}$ of the fluid. The viscous term is written in Laplace formulation~$\bm{F}_{\mathrm{v}}(\bm{u})=\nu \nabla \bm{u}$ with the constant kinematic viscosity~$\nu$.
\\
To obtain the ALE form of the above equations, the material time derivative (or the Eulerian time derivative when considering the incompressible Navier--Stokes equations written in Eulerian coordinates) is replaced by a time derivative with respect to a fixed point of the mesh (denoted as ALE time derivative in the following), which gives rise to an additional transport term with transport by the grid velocity~$\bm{u}_{\mathrm{G}}$. However, this transport term has the same structure as the convective term in Eulerian description, so that the same implementation with transport velocity~$\bm{w} = \bm{u}-\bm{u}_{\mathrm{G}}$ instead of the fluid velocity~$\bm{u}$ can be used. The motivation behind is to apply the time integration scheme with an update of the solution vectors just as in the Eulerian case, thereby automatically obtaining the solution coefficients on the new mesh. By this technique, expensive transformations of the solution vector (containing the degrees of freedom of the finite element expansion) from one mesh at a previous time instant onto another one at the current time instant is avoided. Apart from the Eulerian coordinates~$\bm{x}$ describing an (arbitrary) point in Euclidean space and the material coordinates~$\bm{X}$ describing a fixed fluid element, we introduce the mesh coordinates~$\bm{\chi}$ describing a fixed point of the mesh.
For transient problems,~$\bm{x}(\bm{X},t)$ and~$\bm{x}(\bm{\chi},t)$ describe the trajectories of a fixed fluid element or a fixed point of the mesh in the Eulerian coordinates~$\bm{x}$ as a function of time. The fluid and mesh velocity are therefore given as
\begin{align*}
\bm{u} = \left. \frac{\partial \bm{x}}{\partial t} \right\vert_{\bm{X}} , \;  \bm{u}_{\mathrm{G}} = \left. \frac{\partial \bm{x}}{\partial t} \right\vert_{\bm{\chi}} .
\end{align*}
The material time derivative of an arbitrary quantity~$\phi$ is given as
\begin{align}
\left. \frac{\partial \phi\left(\bm{x}(\bm{X},t),t\right)}{\partial t} \right\vert_{\bm{X}} = \left. \frac{\partial \phi}{\partial t} \right\vert_{\bm{x}} +  \frac{\partial \phi}{\partial \bm{x}} \cdot \underbrace{\left.\frac{\partial \bm{x}}{\partial t}\right\vert_{\bm{X}}}_{=u} = \left. \frac{\partial \phi}{\partial t} \right\vert_{\bm{x}} +   \left(u\cdot \nabla \right) \phi \; . \label{eq:MaterialToEulerian}
\end{align}
The same relation can be stated for the mesh reference frame~$\bm{\chi}$ to obtain the desired relation between the Eulerian and ALE time derivatives
\begin{align}
\left. \frac{\partial \phi\left(\bm{x}(\bm{\chi},t),t\right)}{\partial t} \right\vert_{\bm{\chi}} = \left. \frac{\partial \phi}{\partial t} \right\vert_{\bm{x}} +  \frac{\partial \phi}{\partial \bm{x}} \cdot \underbrace{\left.\frac{\partial \bm{x}}{\partial t}\right\vert_{\bm{\chi}}}_{=\bm{u}_{\mathrm{G}}} = \left. \frac{\partial \phi}{\partial t} \right\vert_{\bm{x}} +   \left(\bm{u}_{\mathrm{G}}\cdot \nabla \right) \phi \; .\label{eq:ALEToEulerian}
\end{align}
Inserting equations~\eqref{eq:MaterialToEulerian} and~\eqref{eq:ALEToEulerian} into equation~\eqref{eq:MomentumEquationMaterial}, we arrive at the incompressible Navier--Stokes equations in ALE formulation
\begin{align}
\left.\frac{\partial \bm{u}}{\partial t} \right\vert_{\bm{\chi}}  + ((\bm{u}-\bm{u}_{\mathrm{G}})\cdot \nabla )\bm{u} - \nabla \cdot \bm{F}_{\mathrm{v}} (\bm{u}) + \nabla p &= \bm{f} \;\; \text{in}\; \Omega(t)  \; ,\label{eq:MomentumEquationALE}\\
\nabla \cdot \bm{u} &= 0 \;\; \text{in}\; \Omega(t) \; .\label{eq:ContinuityEquationALE}
\end{align}

\begin{remark}
Note that the formulation chosen as a starting point for discretization in time and space has important implications regarding compliance with the geometric conservation law. Using the above differential formulation with the convective term written in non-conservative form to derive temporal and spatial discretization allows to satisfy the GCL automatically~\cite{Foerster2006}. Alternative conservative formulations with time derivative in front of the integral over a temporally changing domain contain an additional term in which the divergence of the mesh velocity occurs, and fulfilling the GCL is more complicated in this case~\cite{Foerster2006,Etienne2009}. We mention that the study~\cite{Etienne2009} is inconclusive in the sense that the non-conservative formulation is not restricted to first-order accuracy in time as implied in that work. As demonstrated in~\cite{Foerster2006} and in the present work, high-order accuracy in time can be achieved with the non-conservative formulation. 
\end{remark}

\begin{figure}[!ht]
\centering
  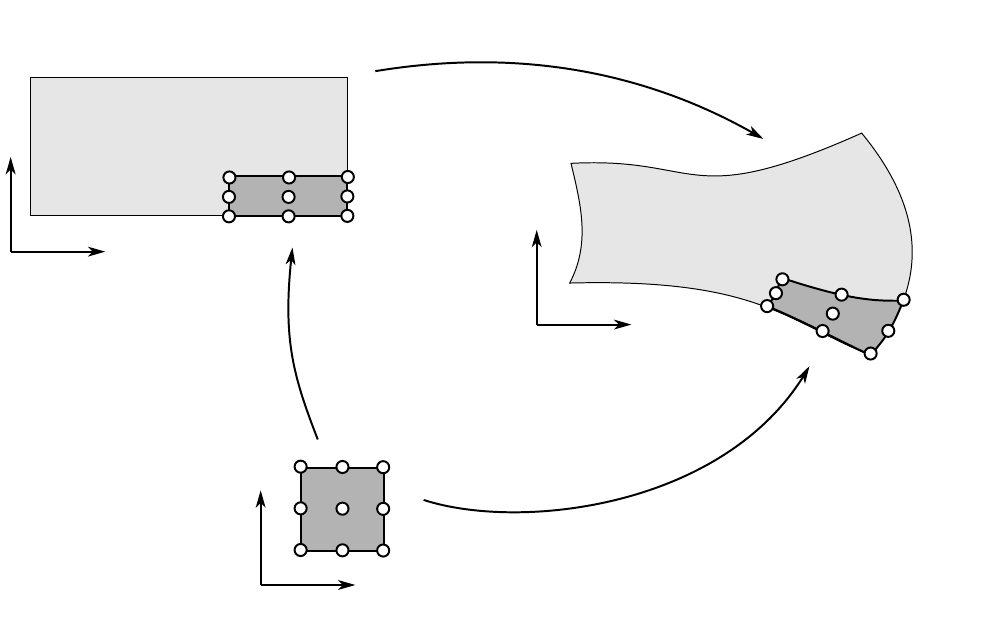
\caption{Illustration of coordinate systems~$\boldsymbol{\chi}$ and~$\boldsymbol{x}$ with mesh deformation~$\bm{f}_{\mathrm{G}}$, and reference coordinates~$\boldsymbol{\xi}$ with finite element mapping~$\bm{f}_{\mathrm{m}}$ of polynomial degree~$k_{\mathrm{m}}=2$.}
\label{fig:DeformationAndMapping}
\end{figure}

The motion of the domain~$\Omega(t)$ is described by a function~$\bm{f}_{\mathrm{G}}=\bm{f}_{\mathrm{G}}(\bm{\chi},t)$
\begin{align*}
\bm{f}_{\mathrm{G}}: \begin{cases}
\Omega_0 \times \left[0,T\right]\rightarrow \Omega(t), \, \Omega_0, \Omega(t) \subset \mathbb{R}^d\, ,\\
\left(\bm{\chi},t\right)\mapsto \bm{x}\left(\bm{\chi}, t\right)\, .
\end{cases}
\end{align*}
With respect to the argument~$\bm{\chi}$, the map~$\bm{f}_{\mathrm{G}}$ is a homeomorphism for all times, and the argument~$t$ describes a continuous deformation over time. In the context of this work and for the numerical results shown below,~$\bm{f}_{\mathrm{G}}$ will be an analytically defined, smooth function in space and time. An illustration is shown in Figure~\ref{fig:DeformationAndMapping}. Without loss of generality, we assume~$\bm{x}\left(\bm{\chi}, t=0\right)=\bm{\chi}$, and therefore~$\Omega(t=0) = \Omega_0$. To demonstrate high order of accuracy of the multistep BDF time integration schemes that are used in this work and that require a starting procedure to demonstrate the formal order of accuracy, it is essential that the mesh motion is continuously differentiable in time. In the context of fluid--structure interaction, the mesh motion is defined by the deformation of the fluid--structure interface according to the structural displacements and a mesh smoothing algorithm calculating the mesh deformation in the interior of the fluid domain.

The incompressible Navier--Stokes equations~\eqref{eq:MomentumEquationALE} and~\eqref{eq:ContinuityEquationALE} are subject to the initial condition
\begin{align*}
\bm{u}(\bm{x}, t=0) = \bm{u}_0(\bm{x}) \;\; \text{in} \; \Omega_0 \; ,
\end{align*}
where~$\bm{u}_0(\bm{x})$ is divergence-free and fulfills the velocity Dirichlet boundary condition shown below. On the boundary~$\Gamma = \partial \Omega = \boundary{D} \cup \boundary{N}$ with~$\boundary{D} \cap \boundary{N} = \emptyset$, Dirichlet and Neumann boundary conditions are prescribed
\begin{alignat}{2}
\bm{u} &= \bm{g}_{u}\;\; &&\text{on} \; \boundary{D}(t)\; ,\label{DirichletBC}\\
\left(\bm{F}_{\mathrm{v}} (\bm{u})  - p \bm{I} \right) \cdot \bm{n} &= \bm{h}\;\; &&\text{on} \; \boundary{N}(t)\; ,\label{NeumannBC_Coupled}
\end{alignat}
where~$\bm{n}$ is the outward pointing unit normal vector and~$\bm{I}$ the identity matrix. As explained in~\cite{Fehn17}, the Neumann boundary condition is split into a viscous part~$\bm{h}_u$ and a pressure part~$g_p$ in case of projection-type Navier--Stokes solvers, i.e.,~$\bm{F}_{\mathrm{v}} (\bm{u})\cdot \bm{n} = \bm{h}_u$ and~$p = g_p$ on~$\boundary{N}(t)$ with~$\bm{h} = \bm{h}_u - g_p \bm{n}$. Another characteristic of the incompressible Navier--Stokes equations is that the pressure is only defined up to an additive constant in case of pure Dirichlet boundary conditions,~$\Gamma = \boundary{D}$. A unique pressure solution is obtained from the constraint~$\int_{\Omega} p \;\mathrm{d}\Omega = 0$. In that special case the velocity Dirichlet boundary condition has to fulfill the constraint~$\int_{\boundary{D}} \bm{g}_{u} \cdot \bm{n}\; \mathrm{d}\Gamma = 0$, which is the integral version of the continuity equation~\eqref{eq:ContinuityEquationALE}, transformed into a surface integral via Gauss' divergence theorem.

\section{Temporal discretization}\label{sec:TemporalDiscretization}

A multitude of solution strategies have been proposed over the last decades to solve the incompressible Navier--Stokes equations. While monolithic approaches are straight-forward in terms of time integration and in achieving high-order accuracy, projection methods with a splitting of velocity and pressure unknowns are particularly interesting from the point of view of computational costs, as these techniques decompose the problem into easier-to-solve equations such as simple Poisson or Helmholtz-like problems. The literature on projection methods is vast, see~\cite{Guermond06,Karniadakis13} for an overview. Here, we focus on those methods that we believe are most widely used and that also cover different aspects related to implicit versus mixed explicit--implicit formulations and the availability of high-order formulations, e.g., through rotational formulations. As representatives of projection methods, we investigate (incremental) pressure-correction schemes in rotational form, see~\cite{Chorin68,hirt1972,Goda1979,VanKan1986,Timmermans1996,Guermond2004} for the development of this approach, as well as velocity-correction schemes, using the high-order formulation proposed in~\cite{Orszag1986, Karniadakis1991}. For these methods, the splitting of the incompressible Navier--Stokes equations is performed on the level of differential equations, as compared to algebraic splitting methods. In the following subsections~\ref{sec:CoupledSolution},~\ref{sec:DualSplitting}, and~\ref{sec:PressureCorrection} we briefly summarize the different solution strategies considered in this work with a focus on the ALE relevant aspects and especially boundary conditions, while we refer to previous works~\cite{Fehn17, Fehn18a} for a more detailed description in the context of Eulerian formulations and high-order DG methods. In Section~\ref{sec:NavierStokesSolversDiscussion}, the different Navier--Stokes solvers are discussed in terms of stability and achievable rates of convergence.

\begin{notation}
Backward differentiation formula (BDF) time integration is used in this work. The time interval~$[0,T]$ is divided into~$N$ time steps of variable size. With~$n=0,...,N-1$ denoting the time step number, the equations are advanced from time~$t_{n}$ to~$t_{n+1} = t_{n} + \Delta t_n$ in time step~$n$, leading to the time grid~${\lbrace t_{i}\rbrace}_{i=0}^{N} =  \lbrace{ t_0 + \sum_{j=0}^{i-1} \Delta t_{j}\rbrace}_{i=0}^{N}$. BDF schemes of order~$J=1,2,3$ are considered. Although A-stability is only achieved for time integration schemes of order~$J=1,2$, third-order accurate schemes have been found to be useful for practical problems as well and will be investigated in this work. Coefficients~$\gamma_0$ and~$\alpha_i$ of the BDF time integration scheme as well as coefficients~$\beta_i$ of the extrapolation scheme used to extrapolate explicit terms are listed in Table~\ref{tab:CoefficientsBDFTimeIntegration} for the case of a constant time step size~$\Delta t = T/N$, see also~\cite{Karniadakis1991}. An extension to variable time step sizes is straight-forward, where the coefficients~$\gamma_0^n$,~$\alpha_i^n$, and~$\beta_i^n$ vary from one time step to the next and can be expressed as simple rational functions of the time step sizes~$\Delta t_n, ...,\Delta t_{n-J+1}$, see also~\cite{Wang2008}. The time integration constant for adaptive time-stepping are summarized in~\ref{sec:AdaptiveTimeStepping}.
\end{notation}

\begin{table}[h]
\caption{Coefficients of BDF time integration scheme and extrapolation scheme for constant time step size, see~\cite{Karniadakis1991}.}\label{tab:CoefficientsBDFTimeIntegration}
\begin{center}
\begin{tabular}{cccccccc}
\toprule
Order & $\gamma_0$  & $\alpha_0$ & $\alpha_1$ & $\alpha_2$ & $\beta_0$ & $\beta_1$ & $\beta_2$\\ 
\midrule
1 & $1$     & $1$ & -      & -     & $1$ & -    & -  \\
2 & $3/2$   & $2$ & $-1/2$ & -     & $2$ & $-1$ & -  \\
3 & $11/6$  & $3$ & $-3/2$ & $1/3$ & $3$ & $-3$ & $1$\\
\bottomrule
\end{tabular} 
\end{center}
\end{table}

\subsection{Coupled solution approach}\label{sec:CoupledSolution}
Applying the BDF scheme to equations~\eqref{eq:MomentumEquationALE} and~\eqref{eq:ContinuityEquationALE} and using a fully implicit formulation (including the convective term), we obtain
\begin{align}
\left.\frac{\gamma_0^n \bm{u}^{n+1}-\sum_{i=0}^{J-1}\alpha_i^n\bm{u}^{n-i}}{\Delta t_n}\right\vert_{\boldsymbol{\chi}} + \left( \left(\bm{u}^{n+1}  - \bm{u}_{\mathrm{G}}^{n+1} \right) \cdot \nabla\right) \bm{u}^{n+1}
- \Div{\bm{F}_{\mathrm{v}} (\bm{u}^{n+1})} + \Grad{p^{n+1}} &= \bm{f}\left(t_{n+1}\right)\; ,\label{eq:TemporalDiscretization_Coupled_Momentum}\\
\Div{\bm{u}^{n+1}} &= 0 \; ,\label{eq:TemporalDiscretization_Coupled_Continuity}
\end{align}
where~$\left.\right\vert_{\boldsymbol{\chi}}$ means that all terms of the BDF sum are evaluated at constant~$\boldsymbol{\chi}$, i.e., an ALE-type time derivative has to be considered. The boundary conditions are
\begin{align*}
\bm{u}^{n+1} = \bm{g}_{u}^{n+1}\;\; \text{on} \; \boundary{D} \; ,\\
\left(\bm{F}_{\mathrm{v}} (\bm{u}^{n+1})-p^{n+1} \bm{I}\right)\cdot \bm{n} = \bm{h}^{n+1}\;\; \text{on} \; \boundary{N} \; .
\end{align*}
As an alternative formulation, we also study an explicit formulation of convective term, discretized in time via an extrapolation scheme of order~$J$
\begin{align}
\left.\frac{\gamma_0^n \bm{u}^{n+1}-\sum_{i=0}^{J-1}\alpha_i^n\bm{u}^{n-i}}{\Delta t_n}\right\vert_{\boldsymbol{\chi}} + \sum_{i=0}^{J-1}\beta_i^{n}\left(\left(\bm{u}^{n-i}  - \bm{u}_{\mathrm{G}}^{n+1} \right) \cdot \nabla\right) \bm{u}^{n-i}
- \Div{\bm{F}_{\mathrm{v}} (\bm{u}^{n+1})} + \Grad{p^{n+1}} &= \bm{f}\left(t_{n+1}\right)\; ,\label{TemporalDiscretization_Coupled_Momentum_Explicit}
\end{align}
where the following boundary condition will be used in the convective term
\begin{align}
\bm{u}^{n-i} = \bm{g}_{u}^{n+1}\;\; \text{on} \; \boundary{D} \; .\label{eq:Coupled_DBC_ConvectiveTerm}
\end{align}
\begin{remark}
While the boundary condition~$\bm{u}^{n-i} = \bm{g}_{u}^{n-i}$ might be considered a natural formulation as well, it is interesting to study whether equation~\eqref{eq:Coupled_DBC_ConvectiveTerm} also preserves optimal rates of convergence. The advantage of this formulation is that only one version of the boundary condition is required at any one time of the solution of the transient problem, therefore easing implementation without the need to store and keep track of previous versions of the boundary condition for fluid--structure interaction problems. For the projection methods discussed below, we will see that we can not fully maintain this goal as these methods are more involved regarding the formulation of boundary conditions.
\end{remark}

\subsection{High-order dual splitting scheme}\label{sec:DualSplitting}
The high-order dual splitting scheme~\cite{Karniadakis1991} consists of the following four sub-steps to be solved in each time step
\begin{align}
\begin{aligned}
\left.\frac{\gamma_0^n\hat{\bm{u}}-\sum_{i=0}^{J-1}\alpha_i^n\bm{u}^{n-i}}{\Delta t_n}\right\vert_{\boldsymbol{\chi}} =
\quad- \sum_{i=0}^{J-1}\beta_i^{n}\left(\left(\bm{u}^{n-i}  - \bm{u}_{\mathrm{G}}^{n+1} \right) \cdot \nabla\right) \bm{u}^{n-i}
+ \bm{f}\left(t_{n+1}\right),&\\ 
\bm{u}^{n-i} = \bm{g}_u^{n+1}  &\quad\text{on }  \boundary{D},
\end{aligned}&\label{eq:DualSplitting_ConvectiveStep}\\[10pt]
\begin{aligned}
-\nabla^2 p^{n+1} = -\frac{\gamma_0^n}{\Delta t_n}\Div{\hat{\bm{u}}},&\\
\Grad{p}^{n+1}\cdot\bm{n} = h_p^{n+1} &\quad\text{on }  \boundary{D},\\
p^{n+1} = g_p^{n+1} &\quad\text{on }  \boundary{N},\\
\hat{\bm{u}} = \bm{g}_{\hat{u}}^{n+1}  &\quad\text{on }  \boundary{D},
 \end{aligned}&\label{eq:DualSplitting_PressureStep}\\[10pt]
 \begin{aligned}
\hat{\hat{\bm{u}}} = \hat{\bm{u}} - \frac{\Delta t_n}{\gamma_0^n} \Grad{p^{n+1}} ,&\\  
p^{n+1} = g_p^{n+1}& \quad\text{on }  \boundary{N},
\end{aligned}&\label{eq:DualSplitting_ProjectionStep}\\[10pt]
\begin{aligned}
\frac{\gamma_0^n}{\Delta t_n} \bm{u}^{n+1}  -  \Div{\bm{F}_{\mathrm{v}}\left(\bm{u}^{n+1}\right)} =
\frac{\gamma_0^n}{\Delta t_n}\hat{\hat{\bm{u}}} ,&\\   
\bm{u}^{n+1} = \bm{g}_{u}^{n+1} &\quad\text{on } \boundary{D},\\  
\bm{F}_{\mathrm{v}} (\bm{u}^{n+1})\cdot \bm{n} = \bm{h}_u^{n+1} &\quad\text{on } \boundary{N}.
\end{aligned}&\label{eq:DualSplitting_ViscousStep}
\end{align}
The convective term, the pressure term, and the viscous term are taken into account in different sub-steps, and the pressure Poisson equation~\eqref{eq:DualSplitting_PressureStep} is obtained from equation~\eqref{eq:DualSplitting_ProjectionStep} by requiring~$\Div{\hat{\hat{\bm{u}}}}=0$. The viscous term is formulated implicitly in time, while the convective term is formulated explicitly for this scheme. The authors are not aware of variants of this splitting scheme that allow an implicit treatment of the convective term and -- at the same time -- achieve higher-order accuracy in time.

The pressure Neumann boundary condition~$h_p$ and the Dirichlet boundary condition~$\bm{g}_{\hat{u}}$ for the intermediate velocity need to be discussed in more detail. A consistent Neumann boundary condition for the pressure is derived by multiplying the momentum equation of the incompressible Navier--Stokes equations by the normal vector~$\bm{n}$~\cite{Orszag1986,Karniadakis1991,Karniadakis13}, which yields the following result in case of the ALE form of the incompressible Navier--Stokes equations
\begin{align}
\begin{split}
 h_p\left(t_{n+1}\right) =
&- \left[
\left.\frac{\gamma_0^n \bm{g}_u^{n+1}-\sum_{i=0}^{J-1}\alpha_i^n\bm{g}_u^{n-i}}{\Delta t_n}\right\vert_{\bm{\chi}} -\bm{f}\left(t_{n+1}\right) \right]\cdot \bm{n}^{n+1} \\
& - \left[ \sum_{i=0}^{J_p-1} \beta_i^n\left(\left(\left(\bm{u}^{n-i}  - \bm{u}_{\mathrm{G}}^{n+1} \right) \cdot \nabla\right)\bm{u}^{n-i}
+\nu\nabla\times\boldsymbol{\omega}^{n-i}\right)
 \right]\cdot \bm{n}^{n+1}\; .
 \end{split}\label{eq:DualSplitting_PressureBC_GammaD}
\end{align}
As already noted in our previous work~\cite{Fehn17} dealing with the Eulerian case, the time derivative term and body force term appear on the right-hand side as well in the general case of time dependent boundary conditions and right-hand side vectors~$\bm{f}\neq \bm{0}$, compared to the original formulation in~\cite{Karniadakis1991} and a later work~\cite{Beskok2001} where these terms are dropped. Compared to~\cite{Fehn17} where the exact derivative~$\partial \bm{g}_u/\partial t$ is used by exploiting that this term is known analytically in the Eulerian case on~$\boundary{D}$, the acceleration term has to be replaced by a discrete BDF time derivative in the ALE or fluid--structure interaction case where the boundary condition is only known at discrete times. Hence, one has to record the history of Dirichlet boundary values~$\bm{g}_u$ in case of the dual splitting scheme. Since the time derivative is of ALE-type at constant~$\boldsymbol{\chi}$, the convective term needs to be formulated in ALE form as well. The velocity~$\bm{u}_h^{n+1}$ at time~$t_{n+1}$ is unknown at this point of the algorithm, so that the convective term and the viscous term need to be formulated explicitly. For the viscous term we use the well-known rotational formulation~$\nabla \times \bm{\omega}$ (with vorticity~$\bm{\omega}=\nabla \times \bm{u}$) which is obtained from the identity~$\nabla^2 \bm{u} = \Grad{\left(\Div{\bm{u}}\right)}-\nabla \times \left( \nabla \times \bm{u} \right)=-\nabla \times \left( \nabla \times \bm{u} \right)= - \nabla \times \bm{\omega}$. This formulation proposed by~\cite{Orszag1986,Karniadakis1991} makes use of the incompressibility constraint~$\Div{\bm{u}}=0$, and it is well understood that the rotational formulation significantly improves accuracy as compared to the Laplace formulation~\cite{Karniadakis13,Guermond06}; nevertheless the latter formulation~$\nabla^2 \bm{u}$ is also used sometimes~\cite{EscobarVargas14,Gao2017,Wang2018}. 

The velocity Dirichlet boundary condition for the intermediate velocity~$\hat{\bm{u}}$ is derived in a similar fashion from equation~\eqref{eq:DualSplitting_ConvectiveStep}, see also~\cite{Fehn17} where this boundary condition has been proposed for the Eulerian case
\begin{align}
\bm{g}_{\hat{u}}\left(\bm{\chi},t_{n+1}\right) = \sum_{i=0}^{J-1}\frac{\alpha_i^n}{\gamma_0^n}\bm{g}_{u}(\bm{\chi},t_{n-i})-\frac{\Delta t_n}{\gamma_0^n} \sum_{i=0}^{J-1}\beta_i^n \left(\left(\bm{u}^{n-i}  - \bm{u}_{\mathrm{G}}^{n+1} \right) \cdot \nabla\right)\bm{u}^{n-i}
+ \frac{\Delta t_n}{\gamma_0^n} \bm{f}\left(t_{n+1}\right)\; ,\label{eq:DualSplitting_DBC_IntermediateVelocity}
\end{align}
As indicated in the above equation, the history of the boundary condition~$\bm{g}_{u}$ is evaluated in grid coordinates~$\boldsymbol{\chi}$ following the moving mesh from one time instant to the next, so that the ALE form of the convective term is required in this boundary condition similar to the pressure Neumann boundary condition~\eqref{eq:DualSplitting_PressureBC_GammaD}. The convective term again needs to be extrapolated.

\begin{remark}
One might raise the question why the terms in equation~\eqref{eq:DualSplitting_PressureBC_GammaD} coming from the acceleration term evaluate the prescribed boundary data~$\bm{g}_u$ instead of simply evaluating the numerical solution~$\bm{u}$ coming from the interior of the domain, since this is also done for the convective and viscous terms in the boundary conditions. From the formulations in~\cite{Hesthaven07,Ferrer11,Ferrer14,Gao2017,Wang2018} it is unclear whether the acceleration term in equation~\eqref{eq:DualSplitting_PressureBC_GammaD} should directly evaluate the numerical solution~$\bm{u}$ if the time derivative is not known analytically. The same holds for the BDF time derivative terms in equation~\eqref{eq:DualSplitting_DBC_IntermediateVelocity}. We found that using~$\bm{u}$ instead of~$\bm{g}_u$ for the time derivative terms in equations~\eqref{eq:DualSplitting_PressureBC_GammaD} and~\eqref{eq:DualSplitting_DBC_IntermediateVelocity} leads to instabilities that occur for small time step sizes. We therefore recommend to use the available boundary data~$\bm{g}_u$ in the above boundary conditions whenever possible. We further note that the works~\cite{Hesthaven07,Ferrer11} do not use a boundary condition like equation~\eqref{eq:DualSplitting_DBC_IntermediateVelocity} since these works use a different DG discretization of the velocity divergence term in equation~\eqref{eq:DualSplitting_PressureStep} compared to the present work, namely a formulation without integration by parts and, hence, without imposition of boundary conditions, see Section~\ref{sec:SpatialDiscretization} for details. However, it was shown in~\cite{Fehn17} that integration by parts of this term in the context of discontinuous Galerkin methods is essential to obtain stability for small time steps and that a consistent Dirichlet boundary condition according to equation~\eqref{eq:DualSplitting_DBC_IntermediateVelocity} is necessary to obtain high-order accuracy in time.
\end{remark}

\subsection{Pressure-correction scheme}\label{sec:PressureCorrection}
Pressure-correction schemes are another class of projection methods, treating both convective and viscous terms in the same sub-step, see~\cite{Guermond06} for detailed information. Extending the formulation of pressure-correction schemes for non-moving meshes shown in~\cite{Fehn17,Fehn18a} to the ALE formulation of the incompressible Navier--Stokes equations, the pressure-correction schemes can be summarized as follows
\begin{align}
\begin{aligned}
\left.\frac{\gamma_0^n \hat{\bm{u}}-\sum_{i=0}^{J-1}\alpha_i^n\bm{u}^{n-i}}{\Delta t_n}\right\vert_{\boldsymbol{\chi}} + \left( \left(\hat{\bm{u}}  - \bm{u}_{\mathrm{G}}^{n+1} \right) \cdot \nabla\right) \hat{\bm{u}}
-\Div{\bm{F}_{\mathrm{v}}\left(\hat{\bm{u}}\right)} = &\\
- \sum_{i=0}^{J_p-1}\beta_i^n \Grad{p^{n-i}} + \bm{f}\left(t_{n+1}\right) , &\\ 
\hat{\bm{u}} = \bm{g}_{u}^{n+1} &\quad\text{on }   \boundary{D} , \\
\bm{F}_{\mathrm{v}} (\hat{\bm{u}})\cdot \bm{n} = \bm{h}_u^{n+1}, &\quad\text{on }  \boundary{N} , \\
p^{n-i}=g_p^{n-i} &\quad\text{on }  \boundary{N} ,
\end{aligned}&\label{eq:PressureConvection_MomentumStep_Impl}\\[10pt]
\begin{aligned}
-\nabla^2 \phi^{n+1}= -\frac{\gamma_0^n }{\Delta t_n}\Div{\hat{\bm{u}}} , &\\
\Grad{\phi^{n+1}}\cdot\bm{n} = h_{\phi}^{n+1} = 0 &\quad\text{on }  \boundary{D}, \\
\hat{\bm{u}} = \bm{g}_{u}^{n+1} &\quad\text{on }  \boundary{D} , \\
\phi^{n+1} = g_{\phi}^{n+1}   &\quad\text{on }  \boundary{N}  ,
\end{aligned}&\label{eq:PressureCorrection_PressurePoissonEquation}\\[10pt]
\begin{aligned}
p^{n+1} = \phi^{n+1} + \sum_{i=0}^{J_p-1}\beta_i^n p^{n-i} - \zeta \nu \Div{\hat{\bm{u}}} , &\\
\hat{\bm{u}} = \bm{g}_{u}^{n+1} &\quad\text{on }  \boundary{D} ,
\end{aligned}&\label{eq:PressureCorrection_PressureUpdate}\\[10pt]
\begin{aligned}
\bm{u}^{n+1} = \hat{\bm{u}} - \frac{\Delta t_n}{\gamma_0^n} \Grad{\phi^{n+1}} , &\\
\phi^{n+1} = g_{\phi}^{n+1}  &\quad\text{on }  \boundary{N} .
\end{aligned}&\label{eq:PressureCorrection_Projection}
\end{align}
Here,~$J_p$ is the order of extrapolation of the pressure gradient term. Schemes with~$J_p=0$ are called non-incremental, and schemes with~$J_p \geq 1$ incremental pressure-correction schemes. The standard formulation is obtained for~$\zeta=0$, while the rotational formulation corresponds to~$\zeta=1$. We exclusively study the rotational version in the present work due to improved convergence rates and accuracy~\cite{Guermond06}. In the above equations, the convective term is formulated implicitly. As for the monolithic solver, we consider an alternative formulation with explicit treatment of the convective term, resulting in the following momentum equation in the first sub-step
\begin{align}
\left.\frac{\gamma_0^n \hat{\bm{u}}-\sum_{i=0}^{J-1}\alpha_i^n\bm{u}^{n-i}}{\Delta t_n}\right\vert_{\boldsymbol{\chi}}
-\Div{\bm{F}_{\mathrm{v}}\left(\hat{\bm{u}}\right)} = - \sum_{i=0}^{J-1}\beta_i^{n}\left(\left(\bm{u}^{n-i}  - \bm{u}_{\mathrm{G}}^{n+1} \right) \cdot \nabla\right) \bm{u}^{n-i}
- \sum_{i=0}^{J_p-1}\beta_i^n \Grad{p^{n-i}}
+ \bm{f}\left(t_{n+1}\right)\; ,\label{PressureConvection_MomentumStep_Expl}
\end{align}
where the following boundary condition is imposed for the convective term
\begin{align}
\bm{u}^{n-i} = \bm{g}_{u}^{n+1}\;\; \text{on} \; \boundary{D} \; .
\end{align}
The pressure Poisson equation~\eqref{eq:PressureCorrection_PressurePoissonEquation} is subject to the pressure Dirichlet boundary condition
\begin{align}
g_{\phi}(\bm{\chi},t_{n+1}) = g_p \left(\bm{\chi},t_{n+1}\right) -\sum_{i=0}^{J_p-1}\beta_i g_{p}\left(\bm{\chi},t_{n-i}\right) \; .\label{eq:PressureCorrection_PressureBC_GammaN}
\end{align}

\begin{remark}
As mentioned in~\cite{Fehn17} for the Eulerian version of the present solver, the boundary condition~\eqref{eq:PressureCorrection_PressureBC_GammaN} is an extension of the boundary condition (10.3) in~\cite{Guermond06} towards time-dependent pressure boundary conditions on~$\boundary{N}$. Note that we prescribe~$p^{n-i}=g_p^{n-i}$ in equation~\eqref{eq:PressureConvection_MomentumStep_Impl} to be consistent with boundary condition~\eqref{eq:PressureCorrection_PressureBC_GammaN}. We otherwise observe suboptimal rates of convergence when prescribing~$p^{n-i}=g_p^{n+1}$. Hence, a history of pressure Dirichlet boundary values on~$\boundary{N}$ has to be stored for the pressure-correction scheme for higher-order schemes with~$J_p\geq 1$, which can be seen as a consequence of the operator splitting as compared to the monolithic solver described in Section~\ref{sec:CoupledSolution}.
\end{remark}

\subsection{Discussion of incompressible Navier--Stokes solvers}\label{sec:NavierStokesSolversDiscussion}

Let us briefly summarize and discuss the different Navier--Stokes solver strategies. Already when considering the unsteady Stokes equations without convection, projection methods are in general only conditionally stable for higher extrapolation order~$J_p$. According to~\cite{Leriche2006}, the dual splitting scheme is unconditionally stable for~$J_p\leq 2$ independent of the order~$1\leq J \leq 4$ of the BDF scheme, but only conditionally stable for~$J_p>2$. Hence, the scheme with parameters~$J=3$ and~$J_p=2$ provides the highest order of accuracy, namely~$\Delta t^3$, among the schemes that are unconditionally stable. Pressure-correction schemes are only unconditionally stable for~$J_p\leq 1$, while they are conditionally stable for~$J_p \geq 2$, see~\cite{Guermond06}. This is also in agreement with our numerical results where we observe instabilities for~$J_p=2$. Hence, the highest accuracy combined with unconditional stability is achieved for~$J=2, J_p=1$, resulting in a second-order accurate scheme,~$\Delta t^2$. Let us mention that the scheme~$J=3, J_p=2$ shows indeed third-order accuracy, but is only conditionally stable and not suited for practical problems. From these considerations, we derive the parameters~$J_p=\min(2, J), J\leq 3$ for the dual splitting scheme, and~$J_p=\min(2, J)-1, J\leq 2$ for the pressure-correction scheme as an optimal choice. The monolithic solution approach and the dual splitting scheme have the advantage that the third-order schemes are stable. The fact that the pressure-correction scheme allows both implicit and explicit formulations of the convective term can be seen as an advantage over the dual splitting scheme, especially when one wants to avoid restrictions of the time step size according to the CFL condition and when high-order of accuracy is not the primary target. Finally, the simpler structure of algebraic equations is often considered an advantage of projection methods over the monolithic approach. We therefore analyze these different methods in the present work since we believe they cover different facets of incompressible Navier--Stokes solvers.

\section{Spatial discretization}\label{sec:SpatialDiscretization}

\begin{notation}
The computational domain~$\Omega_h = \bigcup_{e=1}^{N_{\text{el}}} \Omega_{e} \in \mathbb{R}^d$ is composed of~$N_{\text{el}}$ finite elements, which are non-overlapping and of quadrilateral/hexahedral shape in the context of this work. The boundary~$\Gamma_h = \partial \Omega_h$ approximates~$\Gamma$, and it holds~$\Gamma_h = \hboundary{D} \cup \hboundary{N}$ with~$\hboundary{D} \cap \hboundary{N} =\emptyset$ as in the spatially continuous case. Approximations to velocity~$\bm{u}(\bm{x},t)$ and pressure~$p(\bm{x},t)$ are denoted by~${\bm{u}_h(\bm{x},t)\in\mathcal{V}^{u}_h}$ and~$p_h(\bm{x},t)\in \mathcal{V}^{p}_h$, where the discontinuous Galerkin finite element spaces of test and trial functions are defined as
\begin{align*}
\mathcal{V}^{u}_{h} &= \left\lbrace\bm{u}_h\in \left[L_2(\Omega_h)\right]^d\; : \; \bm{u}_h\left(\bm{x}^e(\boldsymbol{\xi},t)\right)\vert_{\Omega_{e}}= \tilde{\bm{u}}_h^e(\boldsymbol{\xi})\vert_{\tilde{\Omega}_{e}}\in \mathcal{V}^{u}_{h,e}=[\mathcal{Q}_{k_u}(\tilde{\Omega}_{e})]^d\; ,\;\; \forall e=1,\ldots,N_{\text{el}} \right\rbrace\;\; ,\\
\mathcal{V}^{p}_{h} &= \left\lbrace p_h\in L_2(\Omega_h)\; : \; p_h\left(\bm{x}^e(\boldsymbol{\xi},t)\right)\vert_{\Omega_{e}} = \tilde{p}_h^e(\boldsymbol{\xi})\vert_{\tilde{\Omega}_{e}}\in \mathcal{V}^{p}_{h,e}=\mathcal{Q}_{k_p}(\tilde{\Omega}_{e})\; ,\;\; \forall e=1,\ldots,N_{\text{el}} \right\rbrace\; .
\end{align*}
The polynomial space~$\mathcal{Q}_{k}(\tilde{\Omega}_{e})$ of tensor degree~$\leq k$ is defined on the reference element~$\tilde{\Omega}_e=[0,1]^d$ with reference coordinates~$\boldsymbol{\xi}=(\xi_1,...,\xi_d)^{\mathsf{T}}$. We approximate velocity and pressure on element~$e$ by nodal Lagrange polynomials
\begin{align}
\tilde{\bm{u}}_h^e(\boldsymbol{\xi},t) = \sum_{i_1,...,i_d=0}^{k_u} l_{i_1...i_d}^{k_u}(\boldsymbol{\xi})\bm{u}_{i_1...i_d}^e(t)\;\; , \;\; \tilde{p}_h^e(\boldsymbol{\xi},t) = \sum_{i_1,...,i_d=0}^{k_p} l_{i_1...i_d}^{k_p}(\boldsymbol{\xi})p_{i_1...i_d}^e(t)\; ,\label{eq:finite_element_expansion}
\end{align}
where~$ \bm{u}_{i_1...i_d}^e$ and~$p_{i_1...i_d}^e$ denote the nodal degrees of freedom of the velocity and pressure solution on element~$e$, respectively. The multidimensional shape functions~$l_{i_1...i_d}^{k}$ are given as the tensor product of one-dimensional shape functions,~$l_{i_1...i_d}^{k}(\boldsymbol{\xi})=\prod_{n=1}^{d} l_{i_n}^{k,\mathrm{1D}}(\xi_n)$, where~$l_i^{k,\mathrm{1D}}(\xi)$ are the Lagrange polynomials of degree~$k$ based on the Legendre--Gauss--Lobatto nodes. 

For reasons of inf--sup stability, the polynomial degree for the pressure is~$k_p = k_u-1$, see~\cite{Fehn17}. For ease of notation, we simply write~$k_u=k$ in the following. In the above equations,~$\bm{x}^e(\boldsymbol{\xi},t) : \tilde{\Omega}_e \rightarrow \Omega_e(t)$ denotes the mapping from reference space to physical space
\begin{align*}
\bm{f}_{\mathrm{m}}^e: \begin{cases}
\tilde{\Omega}_{e} \times \left[0,T\right]\rightarrow \Omega_e(t), \, \tilde{\Omega}_{e} = \left[0,1\right]^d, \Omega_e(t) \subset \mathbb{R}^d\, ,\\
\left(\bm{\xi},t\right)\mapsto \bm{x}^e\left(\bm{\xi}, t\right)\, .
\end{cases}
\end{align*}
For the mapping, the same ansatz is used as for approximating the solution, but with polynomial degree~$k_{\mathrm{m}}$
\begin{align*}
\bm{x}^e(\boldsymbol{\xi},t) = \sum_{i_1,...,i_d=0}^{k_\mathrm{m}} l_{i_1...i_d}^{k_{\mathrm{m}}}(\boldsymbol{\xi})\bm{x}_{i_1...i_d}^e(t)\; .
\end{align*}
The mapping can be seen in analogy to the function~$\bm{f}_{\mathrm{G}}$ describing the topological changes of the domain~$\Omega$. While~$\bm{f}_{\mathrm{G}}$ is defined globally for the whole domain and in a spatially continuous way, the finite element mapping describes the mesh motion for each element of the mesh in the discrete setting and is of finite dimension. For the following derivations, it is important to realize that a point with constant~$\boldsymbol{\chi}$ can be thought of as a point with fixed~$\boldsymbol{\xi}$ coordinates within one element, i.e., there exists a bijective map between~$\boldsymbol{\chi}$ and~$\boldsymbol{\xi}$ for each element. An illustration is given in Figure~\ref{fig:DeformationAndMapping}. In the context of this work, we restrict ourselves to problems for which the topology of the mesh does not change, i.e., no remeshing. Hence, the data structures do not have to be adjusted dynamically when moving the mesh. Updating the mesh means updating the~$d (k_{\mathrm{m}}+1)^d$ mapping degrees of freedom per element, i.e.,
\begin{align*}
 \bm{x}_{i_1...i_d}^e(t^{n+1}) = \bm{f}_{\mathrm{G}}( \bm{x}_{i_1...i_d}^e(t=0) ,t^{n+1}) \; .
\end{align*}
Moderately large deformations are possible as long as~$\bm{f}_{\mathrm{G}}$ remains invertible, and invalid elements with invalid mapping or Jacobian will otherwise occur in the discrete setting. The numerical examples shown in this work use a high-order, isoparametric mapping with~$k_{\mathrm{m}}=k_u$.

In DG methods, integrals have to be computed over the interface~$f_{e^-/e^+}=\partial \Omega_{e^-} \cap \partial \Omega_{e^+}$ of two adjacent elements~$\Omega_{e^-}$ and~$\Omega_{e^+}$, where the outward pointing normal vectors are~$\bm{n}^{-}$ for~$\Omega_{e^-}$ and~$\bm{n}^{+}$ for~$\Omega_{e^+}$. Furthermore, let~$u^{-}_h$ and~$u^{+}_h$ denote the solution~$u_h$ on~$f_{e^-,e^+}$ evaluated from the interior of element~$e^-$ and element~$e^+$, respectively. By~$\Gamma_{h}^{\mathrm{int}}$ we denote the set of all interior faces. In the following, we make use of the average operator~$\avg{u}=\left(u^- + u^+\right)/2$, the jump operator~$ \jump{u} = u^- \otimes \bm{n}^- + u^+ \otimes \bm{n}^+$, and the oriented jump operator~$ \jumporiented{u} = u^- - u^+ $. Moreover, an element-by-element formulation is used where volume integrals are performed over the current element~$\Omega_e$ and face integrals over the boundary~$\partial \Omega_e$ of element~$e$. By definition, we denote interior information on the current element~$\Omega_e$ by the superscript~$(\cdot)^-$ and exterior information from neighboring elements by the superscript~$(\cdot)^+$. In this context, the normal vector~$\bm{n}$ equals~$\bm{n}^-$, while~$\bm{n}^+=-\bm{n}^-=-\bm{n}$. Finally, we introduce the abbreviations~$\intele{v}{u} = \int_{\Omega_e} v \odot u \; \mathrm{d}\Omega$ and~$\inteleface{v}{u} = \int_{\partial \Omega_e} v \odot u \; \mathrm{d} \Gamma$, where the operator~$\odot$ symbolizes inner products and will become clear from the context. An integral over the computational domain is to be understood as~$\intdomain{v}{u} = \sum_{e=1}^{N_{\mathrm{el}}}\intele{v}{u}$, and similarly for integrals over all interior faces, e.g.,~$\intinteriorfaces{\avg{v}}{u^*} = \sum_{e=1}^{N_{\mathrm{el}}}\intelefaceInterior{\frac{1}{2}v}{u^*}$ if~$u^*$ is single-valued.
\end{notation}

Due to the ansatz~\eqref{eq:finite_element_expansion} with a separation of space and time, applying the temporal discretization to the spatially discretized equations in ALE form becomes trivial in the sense that the structure of the equations is equivalent to the Eulerian case. Consider the time derivative term in equation~\eqref{eq:MomentumEquationALE} multiplied by test functions~$ß\bm{v}_h$, integrated over element~$\Omega_e(t=t_{n+1})$, and to be discretized in time
\begin{align*}
\intelenp{\bm{v}_h}{\left.\frac{\partial \bm{u}_h (t)}{\partial t}\right\vert_{\boldsymbol{\chi}}} &= \intelenp{\bm{v}_h}{\left.\frac{\partial \sum_{i_1,...,i_d=0}^{k_u} l_{i_1...i_d}^{k_u}(\boldsymbol{\xi})\bm{u}_{i_1...i_d}^e(t)}{\partial t}\right\vert_{\boldsymbol{\xi}}}\\
 &= \intelenp{\bm{v}_h}{\sum_{i_1,...,i_d=0}^{k_u} l_{i_1...i_d}^{k_u}(\boldsymbol{\xi}) \frac{\partial \bm{u}_{i_1...i_d}^e(t)}{\partial t}}\\
& \approx \intelenp{\bm{v}_h}{\sum_{i_1,...,i_d=0}^{k_u} l_{i_1...i_d}^{k_u}(\boldsymbol{\xi}) \frac{\gamma_0^n \bm{u}_{i_1...i_d}^{e,n+1}- \sum_{i=0}^{J-1} \alpha_i^n\bm{u}_{i_1...i_d}^{e,n-i} }{\Delta t_n}}\\
 & = \intelenp{\bm{v}_h}{\left.\frac{\gamma_0^n \bm{u}_{h}^{n+1}- \sum_{i=0}^{J-1} \alpha_i^n\bm{u}_{h}^{n-i} }{\Delta t_n}\right\vert_{\boldsymbol{\xi}}} \; .
\end{align*}
The BDF rule introduced in the third row of the above equation approximates the acceleration at time~$t_{n+1}$ consistently with the integral over~$\Omega_e$ taken at the same instant of time. In the following, we skip the label~$\left.\right\vert_{\boldsymbol{\xi}}$ for simplicity, as it is clear from the above derivation that the BDF rule is simply applied to the global solution vector containing the unknown degrees of freedom and that all terms of the BDF sum use the same mass matrix at time~$t_{n+1}$. As explained in more detail in Section~\ref{sec:GCL}, using the same mass matrix for all solution vectors is important in order to satisfy the geometric conservation law~\cite{Foerster2006}. The above equation highlights that discretization in space and time commutate, meaning that the last term of the above equation would have also been obtained by discretizing equation~\eqref{eq:TemporalDiscretization_Coupled_Momentum} in space. However, the projection-type solution methods considered in this work are already formulated in a time-discrete manner, since the splitting is performed on the level of differential operators. For this reason, the derivation of DG formulations shown in the following starts from the time-discrete problems stated in Section~\ref{sec:TemporalDiscretization}.
 
Following~\cite{Foerster2006}, the grid velocity is computed in the same way via a BDF time derivative of the nodal grid coordinates~$\bm{x}_{i_1...i_d}^e$ in the time discrete case in order to achieve high-order temporal convergence on moving meshes
\begin{align}
\bm{u}_{\mathrm{G},h}^{n+1} = \left. \frac{\partial \bm{x}_h}{\partial t} (t_{n+1})\right\vert_{\bm{\chi}} \approx   \left.\frac{\gamma_0^n \bm{x}_h^{n+1} - \sum_{i=0}^{J-1} \alpha_i^n\bm{x}_h^{n-i} }{\Delta t_n}\right\vert_{\boldsymbol{\xi}} \; .\label{eq:MeshVelocity}
\end{align}
This procedure is different from~\cite{Hughes1981, Beskok2001} where the grid coordinates are updated by integrating the mesh velocity forward in time. In the following, we summarize the variational formulation of the different ALE Navier--Stokes solvers.

 \begin{table}[!h]
\caption{Weak imposition of boundary conditions: choice of exterior values~$\left(\cdot\right)^+$ on domain boundaries as a function of interior values~$\left(\cdot\right)^-$ and prescribed boundary data for velocity and pressure in order to weakly impose boundary conditions according to the mirror principle. The procedure is equivalent to the Eulerian case~\cite{Fehn17}.}\label{tab:BCsWeak}
\renewcommand{\arraystretch}{1.1}
\begin{center}
\begin{tabular}{lll}
\toprule
 & $\hboundary{D}$ & $\hboundary{N}$\\
\midrule
velocity & $\bm{u}_h^{+} = -\bm{u}_h^{-} + 2 \bm{g}_{u}$ & $\bm{u}_h^{+} = \bm{u}_h^{-}$\\
 & $\Grad{\bm{u}_{h}^{+}}\cdot \bm{n} = \Grad{\bm{u}_{h}^{-}}\cdot\bm{n}$ & $\Grad{\bm{u}_{h}^{+}}\cdot \bm{n} =  -\Grad{\bm{u}_{h}^{-}}\cdot\bm{n}+\frac{2\bm{h}_u}{\nu}$\\
pressure & $p^+_h = p^-_h$ & $p^+_h = - p^-_h + 2 g_p$\\
& $\Grad{p_{h}^{+}}\cdot \bm{n} = -\Grad{p_{h}^{-}}\cdot\bm{n} + 2 h_p$ & $\Grad{p_{h}^{+}}\cdot \bm{n} = \Grad{p_{h}^{-}}\cdot\bm{n}$\\
\bottomrule
\end{tabular}
\end{center}
\renewcommand{\arraystretch}{1}
\end{table}

\subsection{Monolithic solution approach}\label{sec:WeakFormCoupled}
Beginning with the monolithic solution approach, the weak discontinuous Galerkin formulation of the fully discrete problem with implicit formulation of the convective term can be summarized as follows: Find~$\bm{u}^{n+1}_h\in\mathcal{V}^u_h$,~$p^{n+1}_h\in \mathcal{V}^{p}_h$ such that
\begin{align}
\intelenp{\bm{v}_h}{\frac{\gamma_0^n \bm{u}^{n+1}_h-\sum_{i=0}^{J-1}\alpha_i^n\bm{u}^{n-i}_h}{\Delta t_n}}
+ c^{e,n+1}_h\left(\bm{v}_h,\bm{u}^{n+1}_h,\bm{u}_{\mathrm{G},h}^{n+1};\bm{g}_u^{n+1}\right)& \nonumber\\
+ v^{e,n+1}_h\left(\bm{v}_h,\bm{u}^{n+1}_h;\bm{g}_u^{n+1},\bm{h}_u^{n+1}\right)
+ g^{e,n+1}_h\left(\bm{v}_h,p^{n+1}_h;g_p^{n+1}\right)  &\nonumber\\
+ a^{e,n+1}_{\mathrm{D},h}(\bm{v}_h,\bm{u}_h^{n+1})
+ a^{e,n+1}_{\mathrm{C},h}(\bm{v}_h,\bm{u}_h^{n+1};\bm{g}_u^{n+1}) 
- \intelenp{\bm{v}_h}{\bm{f}(t_{n+1})} &= 0 \; , \label{eq:WeakForm_CoupledSolution_Momentum}\\
-d^{e,n+1}_h(q_h,\bm{u}^{n+1}_h;\bm{g}_u^{n+1})& = 0  \; ,\label{eq:WeakForm_CoupledSolution_Continuity}
\end{align}
for all~$(\bm{v}_h, q_h) \in \mathcal{V}^{u}_{h,e} \times \mathcal{V}^{p}_{h,e}$ and for all elements~$e=1,...,N_{\text{el}}$. The time label~$n+1$, e.g. in~$c_h^{e, n+1}$, indicates that the integral is evaluated on the domain~$\Omega_e^{n+1}$. When formulating the convective term explicitly, the convective term in the discretized momentum equation~\eqref{eq:WeakForm_CoupledSolution_Momentum} is replaced by
\begin{align*}
c^{e,n+1}_h\left(\bm{v}_h,\bm{u}^{n+1}_h,\bm{u}_{\mathrm{G},h}^{n+1};\bm{g}_u^{n+1}\right) \rightarrow \sum_{i=0}^{J-1}\beta_i^{n} c^{e,n+1}_h\left(\bm{v}_h,\bm{u}^{n-i}_h,\bm{u}_{\mathrm{G},h}^{n+1};\bm{g}_u^{n+1}\right)\; .
\end{align*}
As proposed in~\cite{Fehn18b}, a computational efficient variant of the monolithic system of equations, equations~\eqref{eq:WeakForm_CoupledSolution_Momentum} and~\eqref{eq:WeakForm_CoupledSolution_Continuity}, is to apply the divergence and continuity penalty terms in a postprocessing step
\begin{align}
\intelenp{\bm{v}_h}{\bm{u}^{n+1}_h}+ a^{e,n+1}_{\mathrm{D},h}\left(\bm{v}_h,\bm{u}^{n+1}_h\right)\Delta t_n + a^{e,n+1}_{\mathrm{C},h}\left(\bm{v}_h,\bm{u}^{n+1}_h;\bm{g}_u^{n+1}\right)\Delta t_n  =& \intelenp{\bm{v}_h}{\hat{\bm{u}}_h} \; ,\label{eq:WeakForm_CoupledSolution_Postprocessing}
\end{align}
where~$\hat{\bm{u}}_h$ is an intermediate velocity obtained as the solution of the coupled system of equations without penalty terms. For the numerical results studied in this work, the formulation shown in equations~\eqref{eq:WeakForm_CoupledSolution_Momentum} and~\eqref{eq:WeakForm_CoupledSolution_Continuity} is used with penalty terms added to the momentum equation.

We next present the discontinuous Galerkin formulation of the individual terms of the incompressible Navier--Stokes equations, see also~\cite{Fehn17,Fehn18a} for more detailed derivations. Boundary conditions are imposed according to the mirror principle as summarized in Table~\ref{tab:BCsWeak}. Central flux functions are used for the velocity divergence term
\begin{align}
\begin{split}
d^e_{h, \mathrm{weak}}\left(q_h,\bm{u}_h;\bm{g}_u\right) 
&= -\intele{\Grad{q_h}}{\bm{u}_h}+\inteleface{q_h}{\avg{\bm{u}_h}\cdot\bm{n}}\\
&=-\intele{\Grad{q_h}}{\bm{u}_h}
+\intelefaceInterior{q_h}{\avg{\bm{u}_h}\cdot\bm{n}}
+\intelefaceNeumann{q_h}{\bm{u}_h\cdot\bm{n}}
+\intelefaceDirichlet{q_h}{\bm{g}_{u}\cdot\bm{n}} \; ,
\end{split}\label{WeakFormVelocityDivergence}
\end{align}
and for the pressure gradient term
\begin{align}
\begin{split}
g^e_{h, \mathrm{weak}}\left(\bm{v}_h,p_h;g_p\right) 
&= -\intele{\Div{\bm{v}_h}}{p_h}+\inteleface{\bm{v}_h}{\avg{p_h}\bm{n}}\\
&= -\intele{\Div{\bm{v}_h}}{p_h}
+\intelefaceInterior{\bm{v}_h}{\avg{p_h}\bm{n}}
+\intelefaceDirichlet{\bm{v}_h}{p_h\bm{n}}
+\intelefaceNeumann{\bm{v}_h}{g_p\bm{n}}\; .
\end{split}\label{WeakFormPressureGradient}
\end{align}
As an alternative to the above weak forms, we consider the so-called strong formulations by performing integration-by-parts once again
\begin{align}
d^e_{h,\mathrm{strong}}\left(q_h,\bm{u}_h\right) 
&= \intele{q_h}{\Div{\bm{u}_h}}-\inteleface{q_h}{\frac{1}{2}\jumporiented{\bm{u}_h}\cdot\bm{n}}\; , \label{eq:StrongFormVelocityDivergence} \\
g^e_{h,\mathrm{strong}}\left(\bm{v}_h,p_h\right) 
&= \intele{\bm{v}_h}{\Grad{p_h}}-\inteleface{\bm{v}_h}{\frac{1}{2}\jumporiented{p_h}\bm{n}}\; . \label{eq:StrongFormPressureGradient}
\end{align}
The weak and strong formulations are equivalent as long as integrals are evaluated exactly, which does not hold in general for the quadrature rules typically used, see Section~\ref{sec:NumericalIntegration}. For this reason, the weak and strong formulations behave differently regarding the fulfillment of the geometric conservation law as discussed in Section~\ref{sec:GCL}.
The discretization of the viscous term is based on the symmetric interior penalty Galerkin~(SIPG) method~\cite{arnold2002unified}
\begin{align}
\begin{split}
v_{h}^{e}(\bm{v}_h,\bm{u}_h;\bm{g}_u,\bm{h}_u) =&
 \intele{\Grad{\bm{v}_h}}{\nu \Grad{\bm{u}_h}}
  - \intelefaceInterior{\Grad{\bm{v}_h}}{\frac{\nu}{2} \jump{\bm{u}_h}}
  - \intelefaceInterior{\bm{v}_h}{\nu \avg{\Grad{\bm{u}_h}}\cdot\bm{n}}\\
  &+ \intelefaceInterior{\bm{v}_h}{\nu\tau \jump{\bm{u}_h}\cdot\bm{n}} \; .
  \end{split} \label{eq:weak_form_viscous_term}
\end{align}
Inserting the boundary conditions acccording to Table~\ref{tab:BCsWeak}, the viscous operator~$v_{h}^{e} = v_{h,\mathrm{hom}}^{e} + v_{h,\mathrm{inhom}}^{e}$ can be split into homogeneous contributions
\begin{align*}
\begin{split}
v_{h,\mathrm{hom}}^{e}(\bm{v}_h,\bm{u}_h) =&
  \intele{\Grad{\bm{v}_h}}{\nu \Grad{\bm{u}_h}}
  - \intelefaceInterior{\Grad{\bm{v}_h}}{\frac{\nu}{2} \jump{\bm{u}_h}}
  - \intelefaceDirichlet{\Grad{\bm{v}_h}}{\nu \bm{u}_h \otimes\bm{n}}\\
 & - \intelefaceInterior{\bm{v}_h}{\nu \avg{\Grad{\bm{u}_h}}\cdot\bm{n}} 
 - \intelefaceDirichlet{\bm{v}_h}{\nu \Grad{\bm{u}_h}\cdot\bm{n}} \\
 & + \intelefaceInterior{\bm{v}_h}{\nu\tau \jump{\bm{u}_h}\cdot\bm{n}} + \intelefaceDirichlet{\bm{v}_h}{2\nu\tau \bm{u}_h} \; ,
 \end{split}
\end{align*}
and inhomogeneous contributions
\begin{align*}
v_{h,\mathrm{inhom}}^{e}(\bm{v}_h;\bm{g}_u,\bm{h}_u) =\intelefaceDirichlet{\Grad{\bm{v}_h}}{\nu\, \bm{g}_{u} \otimes\bm{n}}
- \intelefaceNeumann{\bm{v}_h}{\bm{h}_u}
- \intelefaceDirichlet{\bm{v}_h}{2\nu\tau\bm{g}_{u}}\; .
\end{align*}
The SIPG penalty parameter~$\tau$ depends on the polynomial degree~$k$ and a characteristic element length~$h$. It has to be large enough to ensure coercivity of the bilinear form. For quadrilateral/hexahedral elements used in the present work, bounds for the penalty parameter have been derived in~\cite{Hillewaert13}. Following this work, the penalty parameter~$\tau_e$ of element~$e$ is calculated as
\begin{align}
\tau_e = (k+1)^2 \frac{A\left(\partial \Omega_e \setminus \Gamma_h\right)/2 + A\left(\partial \Omega_e \cap \Gamma_h\right)}{V\left(\Omega_e\right)}\; ,\label{TauIP_Element}
\end{align}
where~$V\left(\Omega_e\right) = \int_{\Omega_e}\mathrm{d}\Omega$ and~$A(f) = \int_{f\subset\partial\Omega_e}\mathrm{d}\Gamma$ are the element volume and surface area, respectively. The maximum value from both sides is chosen on interior faces,~$\tau = \max\left(\tau_{e^-},\tau_{e^+}\right)$ if face~$f \subseteq \partial \Omega_e \setminus \Gamma_h$, while~$\tau = \tau_e$ is used on boundary faces~$f \subseteq \partial \Omega_e \cap \Gamma_h$.

The convective term is particularly relevant in the ALE context and is written in non-conservative form with grid velocity~$\bm{u}_{\mathrm{G},h}$. We perform integration by parts twice (strong formulation), since we observed sub-optimal rates of convergence for the weak formulation in case of even polynomial degrees. Then, an upwind flux is used as numerical flux function to obtain
\begin{align}
\begin{split}
c_{h}^e\left(\bm{v}_h, \bm{u}_h, \bm{u}_{\mathrm{G},h};\bm{g}_{u} \right) =  
& \intele{\bm{v}_h}{\left(\Grad{\bm{u}_h}\right) \cdot \left(\bm{u}_h - \bm{u}_{\mathrm{G},h}\right)} - \inteleface{\bm{v}_h}{\left(\left(\avg{\bm{u}_h}-\bm{u}_{\mathrm{G},h}\right)\cdot\bm{n}\right) \bm{u}_h}\\ 
& + \Big(\bm{v}_h,\underbrace{\left(\left(\avg{\bm{u}_h}-\bm{u}_{\mathrm{G},h}\right)\cdot\bm{n}\right) \avg{\bm{u}_h} + \frac{1}{2} \vert \left(\avg{\bm{u}_h}-\bm{u}_{\mathrm{G},h}\right)\cdot\bm{n} \vert \jumporiented{\bm{u}_h}}_{\text{upwind flux}} \Big)_{\partial \Omega_e} \; .
\end{split}\label{eq:weak_form_convective_term}
\end{align}
To keep the formulation compact, we do not explicitly highlight the dependency of the variational form~$c_{h}^e$ on the boundary condition~$\bm{g}_{u}$ as it is clear that the boundary condition enters the formulation through the choice of exterior values on domain boundaries according to Table~\ref{tab:BCsWeak}, i.e.,~$\bm{u}_h^+ = - \bm{u}_h^- + 2 \bm{g}_{u}$ on~$\hboundary{D}$. Since the convective term is nonlinear and the residual forms the right-hand side of the linear solver in case of a Newton--Krylov approach, there is no need to split the convective operator into homogeneous and inhomogeneous contributions.
 
Finally, the divergence penalty term~$a^e_{\mathrm{D},h}$ and continuity penalty term~$a^e_{\mathrm{C},h}$ have to be defined. These terms can be interpreted as a weak enforcement of~$H(\mathrm{div})$-conformity (normal continuous velocity) along with suitable function spaces for velocity and pressure that lead to an exactly (pointwise) divergence-free velocity (such as Raviart--Thomas)~\cite{Fehn18a, Fehn19Hdiv, Akbas2018}. 
These penalty terms are mandatory to obtain a robust discretization for under-resolved problems such as turbulent flows when using standard~$L^2$-conforming spaces. The penalty terms are defined as~\cite{Fehn18a}
\begin{alignat*}{2}
a^e_{\mathrm{D},h}(\bm{v}_h,\bm{u}_h) &= \intele{\Div{\bm{v}_h}}{\tau_{\mathrm{D}}\Div{\bm{u}_h}} ,\ &&\tau_{\mathrm{D},e}=\zeta_{\mathrm{D}} \; \overline{\Vert\bm{u}^{n+1,\mathrm{ex}}_h \Vert} \; \frac{h_e}{k_u + 1} \; ,\\
a^e_{\mathrm{C},h}(\bm{v}_h,\bm{u}_h;\bm{g}_u) &= \inteleface{\bm{v}_h\cdot \bm{n}}{\tau_{\mathrm{C}} \jumporiented{\bm{u}_h}\cdot \bm{n}} , \ &&\tau_{\mathrm{C},e}=\zeta_{\mathrm{C}}\;\overline{\Vert \bm{u}^{n+1,\mathrm{ex}}_h \Vert} \; ,
\end{alignat*}
where~$\bm{u}^{n+1,\mathrm{ex}}_h=\sum_{i=0}^{J-1} \beta_i^n \bm{u}^{n-i}_h$ is an extrapolation of the velocity field of order~$J$,~$\overline{(\cdot)}$ an elementwise volume-averaged quantity, and~$h_e=V_e^{1/3}$ a characteristic element length with~$V_e$ the volume of the element. A minor modification compared to~\cite{Fehn18a} is that the continuity penalty term is applied not only on interior faces, but also on boundary faces with exterior values according to Table~\ref{tab:BCsWeak}. While mainly numerical examples with periodic boundary conditions have been studied~\cite{Fehn18a} where this does not make a difference, we observed that it is advantageous to apply this penalty term on all faces for general boundary conditions. The continuity penalty parameter is~$\tau_{\mathrm{C}}=\avg{\tau_{\mathrm{C},e}}$ on interior faces and~$\tau_{\mathrm{C}}=\tau_{\mathrm{C},e}$ on boundary faces. As shown above for other operators, the continuity penalty term can then be split into homogeneous and inhomogeneous contributions,~$a^e_{\mathrm{C},h}(\bm{v}_h,\bm{u}_h;\bm{g}_u) = a^e_{\mathrm{C},h,\mathrm{hom}}(\bm{v}_h,\bm{u}_h) + a^e_{\mathrm{C},h,\mathrm{inhom}}(\bm{v}_h;\bm{g}_u)$.

\begin{remark}
Consistency of the above variational formulation, equations~\eqref{eq:WeakForm_CoupledSolution_Momentum} and~\eqref{eq:WeakForm_CoupledSolution_Continuity}, immediately follows from the fact that the weak form is derived using integration by parts, using consistent numerical flux functions, and consistent boundary conditions according to Table~\ref{tab:BCsWeak}. The additional penalty terms are consistent as well, since these terms contain the divergence of the velocity or the jump of the velocity over interior faces.
\end{remark}

\subsection{High-order dual splitting scheme}\label{sec:WeakFormDualSplitting}
For the dual splitting projection scheme, the variational formulation can be summarized as follows: Find~$\hat{\bm{u}}_h,\hat{\hat{\bm{u}}}_h, \hat{\hat{\hat{\bm{u}}}}_h  ,\bm{u}_h^{n+1}\in\mathcal{V}^u_h$ and~$p_h^{n+1}\in\mathcal{V}^p_h$ such that for all~$\bm{v}_h \in \mathcal{V}^{u}_{h,e}$,~$q_h \in \mathcal{V}^{p}_{h,e}$ and for all elements~$e=1,...,N_{\text{el}}$
\begin{align}
\begin{aligned}
\intelenp{\bm{v}_h}{\frac{\gamma_0^n \hat{\bm{u}}_h-\sum_{i=0}^{J-1}\alpha_i^n\bm{u}^{n-i}_h}{\Delta t_n}}
= - \sum_{i=0}^{J-1}\beta_i^{n} c^{e,n+1}_h\left(\bm{v}_h,\bm{u}^{n-i}_h,\bm{u}_{\mathrm{G},h}^{n+1};\bm{g}_u^{n+1}\right)
+ \intelenp{\bm{v}_h}{\bm{f}(t_{n+1})} ,
\end{aligned}\label{eq:DualSplitting_ConvectiveStep_WeakForm}\\
\begin{aligned}
l_{h,\text{hom}}^{e,n+1}\left(q_h,p_h^{n+1}\right) 
= - \frac{\gamma_0^n}{\Delta t_n} d_{h}^{e,n+1}\left(q_h,\hat{\bm{u}}_h;\bm{g}_{\hat{u}}^{n+1}\right)
 - l_{h,\text{inhom}}^{e,n+1}\left(q_h;g_{p}^{n+1},h_{p}^{n+1}\right) ,
\end{aligned}\label{eq:DualSplitting_Pressure_WeakForm}\\
\begin{aligned}
\intelenp{\bm{v}_h}{\hat{\hat{\bm{u}}}_h} 
= \intelenp{\bm{v}_h}{\hat{\bm{u}}_h}-\frac{\Delta t_n}{\gamma_0^n} g_h^{e,n+1}\left(\bm{v}_h,p_h^{n+1};g_p^{n+1}\right) ,
\end{aligned}\label{eq:DualSplitting_Projection_WeakForm}\\
\begin{aligned}
\intelenp{\bm{v}_h}{\frac{\gamma_0^n}{\Delta t_n} \hat{\hat{\hat{\bm{u}}}}_h}
+ v^{e,n+1}_{h,\text{hom}}\left(\bm{v}_h,\hat{\hat{\hat{\bm{u}}}}_h\right)
= \intelenp{\bm{v}_h}{\frac{\gamma_0^n}{\Delta t_n}\hat{\hat{\bm{u}}}_h}
- v_{h,\text{inhom}}^{e,n+1}(\bm{v}_h;\bm{g}_u^{n+1},\bm{h}_u^{n+1}) ,
\end{aligned}\label{eq:DualSplitting_ViscousStep_WeakForm} \\
\begin{aligned}
\intelenp{\bm{v}_h}{\bm{u}_h^{n+1}}+ a^{e,n+1}_{\mathrm{D},h}\left(\bm{v}_h,\bm{u}_h^{n+1}\right)\Delta t_n
+ a^{e,n+1}_{\mathrm{C},h,\mathrm{hom}}\left(\bm{v}_h,\bm{u}_h^{n+1}\right)\Delta t_n  =\\ \intelenp{\bm{v}_h}{\hat{\hat{\hat{\bm{u}}}}_h} - a^{e,n+1}_{\mathrm{C},h,\mathrm{inhom}}\left(\bm{v}_h;\bm{g}_u^{n+1}\right)\Delta t_n  . 
\end{aligned}\label{eq:DualSplitting_Penalty_WeakForm}
\end{align}
The Laplace operator~$l_{h}^{e}$ appearing in the pressure Poisson equation is discretized using the SIPG method
\begin{align}
\begin{split}
l_h^e\left(q_h,p_h;g_{p}^{n+1},h_{p}^{n+1}\right) =& \intele{\Grad{q_h}}{\Grad{p_h}}
-\inteleface{\Grad{q_h}}{\frac{1}{2}\jump{p_h}}
- \inteleface{q_h}{\avg{\Grad{p_h}}\cdot\bm{n}}\\
&+ \inteleface{q_h}{\tau\jump{p_h}\cdot\bm{n}}	\; ,
\end{split}\label{WeakFormulationLaplace}
\end{align}
and is again split into homogeneous contributions
\begin{align}
\begin{split}
l_{h,\mathrm{hom}}^e\left(q_h,p_h\right) =& 
\intele{\Grad{q_h}}{\Grad{p_h}}
-\intelefaceInterior{\Grad{q_h}}{\frac{1}{2}\jump{p_h}} - \intelefaceNeumann{\Grad{q_h}}{p_h \bm{n}}\\
&- \intelefaceInterior{q_h}{\avg{\Grad{p_h}}\cdot\bm{n}} - \intelefaceNeumann{q_h}{\Grad{p_h}\cdot\bm{n}} \\
&+ \intelefaceInterior{q_h}{\tau\jump{p_h}\cdot\bm{n}} + \intelefaceNeumann{q_h}{2\tau p_h}	\; .
\end{split}\label{WeakFormulationLaplace_Homogeneous}
\end{align}
and inhomogeneous contributions
\begin{align}
\begin{split}
l_{h,\mathrm{inhom}}^e\left(q_h;g_{p}^{n+1},h_{p}^{n+1}\right) =& 
 \intelefaceNeumann{\Grad{q_h}}{g_p \bm{n}} - \intelefaceDirichlet{q_h}{h_p} 
- \intelefaceNeumann{q_h}{2\tau g_p}	\; .
\end{split}\label{WeakFormulationLaplace_Inhomogeneous}
\end{align}

In our previous work~\cite{Fehn18b}, the penalty terms have been applied in the projection equation~\eqref{eq:DualSplitting_Projection_WeakForm}. However, since we evaluate the continuity penalty operator also on boundary faces in the present work, the penalty terms are evaluated in a postprocessing step, equation~\eqref{eq:DualSplitting_Penalty_WeakForm}. Adding the penalty terms to the projection equation~\eqref{eq:DualSplitting_Projection_WeakForm} would prevent to achieve high-order temporal accuracy since prescribing the boundary condition~$\bm{g}_u$ for the intermediate velocity~$\hat{\hat{\bm{u}}}_h$ would be inconsistent.

Some comments are in order regarding the evaluation of the boundary conditions~$\bm{g}_{\hat{u}}(t_{n+1})$ and~$h_{p}(t_{n+1})$ on the right-hand side of equation~\eqref{eq:DualSplitting_Pressure_WeakForm}. The convective and viscous terms have to be evaluated on~$\partial \Omega_e$ using the finite element expansion of the velocity solution~$\bm{u}_h$ on element~$e$. In the discrete case, the convective term is simply calculated as~$\left(\Grad{\bm{u}_h}\right) \cdot (\bm{u}_h-\bm{u}_{\mathrm{G},h})$ by taking the derivative of the shape functions.
Since the viscous term involves second derivatives, we calculate it in a two-step process, computing the vorticity~$\bm{\omega}_h \in \mathcal{V}^u_h$ in a first step by a local~$L^2$-projection, see~\cite{Krank17, Fehn17}
\begin{align*}
\intele{\bm{v}_h}{\bm{\omega}_h} = \intele{\bm{v}_h}{\nabla \times \bm{u}_h}\; .
\end{align*}
In the second step, the contribution of the viscous term to the pressure Neumann boundary condition is obtained by calculating the curl of the vorticity~$\bm{\omega}_h$ according to equation~\eqref{eq:DualSplitting_PressureBC_GammaD}.

\subsection{Pressure-correction scheme}\label{sec:PressureCorrectionWeakForm}
Finally, the variational formulation is stated for the class of pressure-correction methods: Find~$\hat{\bm{u}}_h,\hat{\hat{\bm{u}}}_h,\bm{u}_h^{n+1}\in\mathcal{V}^u_h$ and~$\phi_h^{n+1},p_h^{n+1}\in\mathcal{V}^p_h$  such that for all~$\bm{v}_h \in \mathcal{V}^{u}_{h,e}$,~$q_h \in \mathcal{V}^{p}_{h,e}$ and for all elements~$e=1,...,N_{\text{el}}$
\begin{align}
\begin{aligned}
\intelenp{\bm{v}_h}{\frac{\gamma_0^n \hat{\bm{u}}_h-\sum_{i=0}^{J-1}\alpha_i^n\bm{u}^{n-i}_h}{\Delta t_n}}
+ c^{e,n+1}_h\left(\bm{v}_h,\hat{\bm{u}}_h,\bm{u}_{\mathrm{G},h}^{n+1};\bm{g}_u^{n+1}\right) 
+ v^{e,n+1}_h\left(\bm{v}_h,\hat{\bm{u}}_h;\bm{g}_u^{n+1},\bm{h}_u^{n+1}\right) \\
= - \sum_{i=0}^{J_p-1} \beta_i^n g^e_h\left(\bm{v}_h,p^{n-i}_h; g_p^{n-i} \right)
+ \intelenp{\bm{v}_h}{\bm{f}(t_{n+1})}  , 
\end{aligned}\label{eq:PressureCorrection_MomentumImplicit_Nonlinear_WeakForm}\\
\begin{aligned}
l_{h,\text{hom}}^{e,n+1}\left(q_h,\phi_h^{n+1}\right) =
- \frac{\gamma_0^n}{\Delta t_n} d_{h}^{e,n+1}\left(q_h,\hat{\bm{u}}_h;\bm{g}_u^{n+1}\right)
- l_{h,\text{inhom}}^{e,n+1}\left(q_h;g_{\phi}^{n+1},h_{\phi}^{n+1}\right) ,
\end{aligned}\label{eq:PressureCorrection_PressureStep_WeakForm}\\
\begin{aligned}
\intelenp{q_h}{p_h^{n+1}} = \intelenp{q_h}{\phi_h^{n+1} + \sum_{i=0}^{J_p-1}\left(\beta_i p_h^{n-i}\right)}
- \zeta \nu\; d_{h}^{e,n+1}\left(q_h,\hat{\bm{u}}_h;\bm{g}_u^{n+1}\right) ,
\end{aligned}\label{eq:PressureCorrection_PressureUpdate_WeakForm}\\
\begin{aligned}
\intelenp{\bm{v}_h}{\hat{\hat{\bm{u}}}_h} 
= \intelenp{\bm{v}_h}{\hat{\bm{u}}_h} -\frac{\Delta t_n}{\gamma_0^n}g_h^{e,n+1}\left(\bm{v}_h,\phi_h^{n+1};g_{\phi}^{n+1}\right) ,
\end{aligned}\label{eq:PressureCorrection_Projection_WeakForm}\\
\begin{aligned}
\intelenp{\bm{v}_h}{\bm{u}_h^{n+1}} + a^{e,n+1}_{\mathrm{D},h}(\bm{v}_h,\bm{u}^{n+1}_h)\Delta t_n + a^{e,n+1}_{\mathrm{C},h,\mathrm{hom}}(\bm{v}_h,\bm{u}^{n+1}_h)\Delta t_n  
=\\
 \intelenp{\bm{v}_h}{\hat{\hat{\bm{u}}}_h} - a^{e,n+1}_{\mathrm{C},h,\mathrm{inhom}}(\bm{v}_h,\bm{g}^{n+1}_u)\Delta t_n  .
\end{aligned}\label{eq:PressureCorrection_Penalty_WeakForm}
\end{align}
Similar to the monolithic solver, we consider an alternative formulation that formulates the convective term explicitly, replacing
\begin{align*}
c^{e,n+1}_h\left(\bm{v}_h,\hat{\bm{u}}_h,\bm{u}_{\mathrm{G},h}^{n+1};\bm{g}_u^{n+1}\right) \rightarrow \sum_{i=0}^{J-1}\beta_i^{n} c^{e,n+1}_h\left(\bm{v}_h,\bm{u}^{n-i}_h,\bm{u}_{\mathrm{G},h}^{n+1};\bm{g}_u^{n+1}\right) \; .
\end{align*}

\subsection{Geometric conservation law}\label{sec:GCL}
It can be proven that the fully discrete ALE-DG methods derived above satisfy the geometric conservation law, i.e., they are able to preserve a constant flow field~\cite{Thomas1979}. In other words, the constant solution~$\bm{u}(\bm{x},t)=\bm{u}_0$,~$p(\bm{x},t)=p_0$ is a solution of the fully discrete formulations for~$\bm{f}=\bm{0}$.

\begin{theorem}
Assume the solution at time instant~$t_n$ is given as~$\bm{u}^{n}_h = \bm{u}_0$,~$p^{n}_h=p_0$ where~${u}_{i,0} = C_i$,~$i=1,...,d$, and~$p_0=C$ (and similarly for previous time instants in case of high-order schemes), and further assume~$\bm{f}=\bm{0}$ and exact numerical integration of the velocity divergence term and pressure gradient term. Then, the fully discrete ALE-DG incompressible Navier--Stokes solvers introduced in Sections~\ref{sec:WeakFormCoupled},~\ref{sec:WeakFormDualSplitting}, and~\ref{sec:PressureCorrectionWeakForm} preserve a constant solution and yield~$\bm{u}^{n+1}_h = \bm{u}_0$,~$p^{n+1}_h=p_0$ at time~$t_{n+1}=t_n + \Delta t_n$, independently of the order of the time integration and extrapolation schemes, and for arbitrary mesh velocities.
\end{theorem}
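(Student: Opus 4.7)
The plan is to substitute the constant state $\bm{u}_h^{n-i} = \bm{u}_0$, $p_h^{n-i} = p_0$ into each fully discrete equation and show every integrand either vanishes identically or cancels against another, so that the constant state also solves the discrete system at $t_{n+1}$. Two global observations carry most of the argument. First, because the variational form uses a single mass matrix evaluated on $\Omega_e^{n+1}$ for all BDF terms, the discrete acceleration reduces to $(\gamma_0^n - \sum_{i=0}^{J-1}\alpha_i^n)\intelenp{\bm{v}_h}{\bm{u}_0}$, which vanishes by the consistency relation $\gamma_0^n = \sum_i \alpha_i^n$ of the BDF formula, \emph{independently of the mesh motion and the grid velocity}. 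Second, the Dirichlet data satisfy $\bm{g}_u^{n-i} = \bm{u}_0$ and $g_p^{n-i} = p_0$, so the mirrored exterior states of Table~\ref{tab:BCsWeak} coincide with the interior constant and no artificial boundary jump is introduced.

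Next I verify that each spatial operator annihilates the constant. For the convective term~(\ref{eq:weak_form_convective_term}), the volume integrand $(\Grad{\bm{u}_0})\cdot(\bm{u}_0-\bm{u}_{\mathrm{G},h}^{n+1})$ is zero; on faces $\avg{\bm{u}_0}=\bm{u}_0$ and $\jumporiented{\bm{u}_0}=\bm{0}$, so the boundary term arising from the first integration by parts cancels the centered part of the upwind flux, and the jump-based dissipation term vanishes. The viscous term~(\ref{eq:weak_form_viscous_term}) vanishes termwise because $\Grad{\bm{u}_0}=0$ and because the mirrored Dirichlet data make the inhomogeneous SIPG contributions cancel the Dirichlet boundary pieces of the homogeneous part evaluated at $\bm{u}_0$. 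The pressure gradient~(\ref{WeakFormPressureGradient}) and the two penalty operators vanish analogously. The only nontrivial check is the velocity divergence $d_h^e$: under the exact-quadrature hypothesis, integration by parts yields $-\intelenp{\Grad{q_h}}{\bm{u}_0} = -\inteleface{q_h}{\bm{u}_0\cdot\bm{n}}$, which cancels the centered flux term $+\inteleface{q_h}{\avg{\bm{u}_0}\cdot\bm{n}}$ once the compatible boundary data on $\hboundary{D}$ are used. This is the single place where the exact-integration assumption is actually invoked; with inexact quadrature the two contributions do not cancel on a curved moving mesh, and the GCL is violated.

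With these operator identities in hand, the monolithic formulation~(\ref{eq:WeakForm_CoupledSolution_Momentum})--(\ref{eq:WeakForm_CoupledSolution_Continuity}) reduces to $0=0$, and the constant state is evidently a solution. For the projection methods one walks through the substeps. In the dual splitting scheme the explicit convective substep yields $\hat{\bm{u}}_h = \bm{u}_0$; the pressure Neumann data $h_p^{n+1}$ in~(\ref{eq:DualSplitting_PressureBC_GammaD}) and the intermediate Dirichlet data $\bm{g}_{\hat u}^{n+1}$ in~(\ref{eq:DualSplitting_DBC_IntermediateVelocity}) collapse to zero and to $\bm{u}_0$, respectively, again by $\gamma_0^n = \sum_i\alpha_i^n$ applied to the constant Dirichlet history and by the vanishing of the extrapolated ALE-convective and $\nabla\times\bm{\omega}$ contributions on a constant field. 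The pressure Poisson problem is then solved by $p^{n+1}=p_0$, the projection and viscous substeps are identities, and the penalty postprocessing preserves $\bm{u}_0$. For the rotational pressure-correction scheme one uses the extrapolation consistency $\sum_i \beta_i^n = 1$, so that $g_\phi^{n+1}$ in~(\ref{eq:PressureCorrection_PressureBC_GammaN}) vanishes, $\phi_h^{n+1}\equiv 0$, the pressure update reduces to $p^{n+1}=p_0$ (using that $\Div{\hat{\bm{u}}_h}=0$ and $\sum_i\beta_i^n p_0 = p_0$), and the remaining substeps are identities. The main bookkeeping obstacle is the dual splitting pressure Neumann condition~(\ref{eq:DualSplitting_PressureBC_GammaD}), where one must confirm separately that each summand—the discrete ALE acceleration of $\bm{g}_u$, the extrapolated ALE-convective contribution, and $\nu\nabla\times\bm{\omega}$—vanishes; the mesh motion only enters through $\bm{u}_{\mathrm{G},h}^{n+1}$, which is harmless because both $\Grad{\bm{u}_0}$ and $\nabla\times\bm{u}_0$ are zero.
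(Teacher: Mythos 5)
Your proposal is correct and follows essentially the same route as the paper's proof: the BDF consistency $\gamma_0^n=\sum_i\alpha_i^n$ applied to a single mass matrix on $\Omega_e^{n+1}$ kills the acceleration term, the mirror-principle boundary data make boundary faces behave as interior faces, each spatial operator annihilates constants, and the projection substeps are traversed with $h_p\to 0$, $\bm{g}_{\hat{u}}\to\bm{u}_0$, and $\sum_i\beta_i^n=1$ giving $\phi_h^{n+1}=0$. One minor imprecision: the exact-integration hypothesis is not invoked solely for the weak velocity divergence term but equally for the weak pressure gradient term~\eqref{WeakFormPressureGradient}, whose volume and face contributions cancel for constant $p_0$ only under exact quadrature on deformed elements, which is precisely why the theorem assumes exact integration of both velocity--pressure coupling terms.
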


\begin{proof}
For the monolithic solver, equations~\eqref{eq:WeakForm_CoupledSolution_Momentum} and~\eqref{eq:WeakForm_CoupledSolution_Continuity}, it is obvious that the time derivative term becomes zero since it holds~$\gamma_0^n = \sum_{i=0}^{J-1} \alpha_i^n$. As explained in~\cite{Foerster2006}, this originates from the fact that we discretize the differential form of the ALE equations with time derivative applied to the velocity only, as opposed to formulations that apply the time derivative to an integral quantity. Hence, to complete the proof one needs to show that the weak forms of all individual terms evaluate to zero. This is trivial for the divergence and continuity penalty terms, since these evaluate the divergence inside the element or the jump over interior faces, both vanishing for a constant solution. The volume term of the convective operator contains the gradient, and the face terms vanish due to the conservativity and consistency of the numerical flux. Note that it is enough to consider interior faces in this context, as one realizes that boundary faces behave as interior faces when evaluating the boundary values according to Table~\ref{tab:BCsWeak} for a constant solution. The SIPG discretization of the viscous term also vanishes, as each form either contains gradients or jumps of the solution. The strong formulation of the velocity divergence term, equation~\eqref{eq:StrongFormVelocityDivergence}, and pressure gradient term, equation~\eqref{eq:StrongFormPressureGradient}, vanish for constant solutions, since the volume integrals contain derivatives of the solution and the face integrals contain jumps. The strong formulation therefore always satisfies the (discrete) geometric conservation law. Since strong and weak formulation are only equivalent under the assumption of exact integration, these terms do not vanish exactly for the weak formulation of the velocity--pressure coupling terms, equation~\eqref{WeakFormVelocityDivergence} and~\eqref{WeakFormPressureGradient}, when applying standard Gaussian quadrature rules and considering arbitrarily deformed elements. Since the divergence and continuity penalty terms vanish for a constant solution, the GCL is also fulfilled when applying these terms in a postprocessing step, equation~\eqref{eq:WeakForm_CoupledSolution_Postprocessing}.

Regarding the dual splitting scheme, it follows from the above argumentation that the first sub-step in equation~\eqref{eq:DualSplitting_ConvectiveStep_WeakForm} yields~$\hat{\bm{u}}_h=\bm{u}_0$. Particular attention has to be paid to the boundary conditions~$h_p$ in equation~\eqref{eq:DualSplitting_PressureBC_GammaD} and~$\bm{g}_{\hat{u}}$ in equation~\eqref{eq:DualSplitting_DBC_IntermediateVelocity}. The pressure Neumann boundary condition vanishes for~$\bm{f}=\bm{0}$, as it contains derivatives in either space or time, and we obtain~$\bm{g}_{\hat{u}}=\bm{g}_u$ in the case of a constant solution. Hence, the divergence operator on the right-hand side of the pressure Poisson equation~\eqref{eq:DualSplitting_Pressure_WeakForm} vanishes, and the pressure Poisson equation is satisfied for a constant solution~$p_0$. It immediately follows from the argumentation for the monolithic solver that the remaining sub-steps, equations~\eqref{eq:DualSplitting_Projection_WeakForm},~\eqref{eq:DualSplitting_ViscousStep_WeakForm}, and~\eqref{eq:DualSplitting_Penalty_WeakForm}, yield~$\hat{\hat{\bm{u}}}_h=\bm{u}_0$,~$\hat{\hat{\hat{\bm{u}}}}_h=\bm{u}_0$, and~$\bm{u}_h^{n+1}=\bm{u}_0$, thus preserving a constant flow state.

Turning to the pressure-correction scheme, we first notice that the momentum equation~\eqref{eq:PressureCorrection_MomentumImplicit_Nonlinear_WeakForm} results in~$\hat{\bm{u}}_h=\bm{u}_0$ for reasons explained above. The pressure boundary condition~$g_{\phi}$ becomes zero, as~$\sum_{i=0}^{J_p-1} \beta_i^n = 1$, so that~$\phi_h^{n+1}=0$ is a solution of the pressure Poisson equation~\eqref{eq:PressureCorrection_PressureStep_WeakForm}. The pressure update equation~\eqref{eq:PressureCorrection_PressureUpdate_WeakForm} results in~$p_h^{n+1}=p_0$, since the divergence term on the right-hand side becomes zero. The same holds for the pressure gradient term in equation~\eqref{eq:PressureCorrection_Projection_WeakForm}, since its arguments are~$\phi_h^{n+1}=0$,~$g_{\phi}^{n+1}=0$, so that the constant solution~$\hat{\hat{\bm{u}}}_h=\bm{u}_0$ and~$\bm{u}_h^{n+1}=\bm{u}_0$ is recovered in the last sub-steps. \qed
\end{proof}

\begin{remark}
It is worth noting that no assumption has been made regarding the mesh velocity or how it is computed numerically to show compliance with the GCL. This is also a consequence of the fact that we discretize -- as suggested in~\cite{Foerster2006} -- the differential form of the ALE equations in convective formulation, equation~\eqref{eq:MomentumEquationALE}. In Section~\ref{sec:NumericalResults}, we demonstrate numerically that the geometric conservation is fulfilled exactly for the strong formulations, and that it is not fulfilled exactly down to rounding errors for the weak formulation of the velocity--pressure coupling terms in general. However, since the temporal discretization and spatial discretization are designed to satisfy the GCL, we expect that no relevant difference is observed for practical problems (where the solution is non-constant) due to this variational crime, since those problems always suffer from non-exact integration, and variational crimes have to be accepted in several respects. For example, fulfilling discrete energy stability exactly with respect to the velocity--pressure coupling terms, see Section~\ref{sec:EnergyStability}, requires that one term is formulated in weak form, and the other one in strong form, so that the formulation becomes symmetric independently of integration errors. We also emphasize that previous works have shown that fulfilling the GCL is neither a necessary nor a sufficient condition for the time integrator to preserve its high-order accuracy on moving meshes~\cite{Etienne2009,Geuzaine2003}. We will therefore carefully investigate the temporal convergence behavior of the present ALE schemes and demonstrate that the high-order accuracy of the Navier--Stokes solvers on fixed meshes is preserved on moving meshes when using definition~\eqref{eq:MeshVelocity} to calculate the mesh velocity.
\end{remark}

\subsection{Energy stability}\label{sec:EnergyStability}
As detailed in the introduction, energy stability is a crucial ingredient to obtain a flow solver that is robust in the under-resolved regime. In this section, we analyze the energy balance of the present ALE-DG methods. For this analysis, we assume vanishing body forces,~$\bm{f}=\bm{0}$, and consider the inviscid limit,~$\nu=0$. This is reasonable since the viscous term has a dissipative character both physically and numerically, where the numerically dissipative character immediately follows from the fact that the SIPG discretization of the viscous term, equation~\eqref{eq:weak_form_viscous_term}, is positive definite and symmetric. Therefore, the critical case is to investigate whether a numerical method is energy stable in the absence of viscous dissipation. For simplicitly, we also assume periodic boundaries (including periodicity of the grid velocity). Under these assumptions and for smooth solutions, the ALE incompressible Navier--Stokes equations~\eqref{eq:MomentumEquationALE} and~\eqref{eq:ContinuityEquationALE} are energy-conserving in the following sense
\begin{align}
\left.\frac{\partial }{\partial t}\int_{\Omega_0} \frac{1}{2}\bm{u} \cdot \bm{u} \det\bm{J} \; \mathrm{d}\Omega \right\vert_{\boldsymbol{\chi}} = \int_{\Omega(t)} \left. \frac{\partial \frac{1}{2}\bm{u} \cdot \bm{u}}{\partial t}\right\vert_{\boldsymbol{\chi}} \mathrm{d}\Omega + \int_{\Omega(t)} \frac{1}{2}\left( \bm{u} \cdot \bm{u} \right) \Div{\bm{u}_{\mathrm{G}}} \; \mathrm{d}\Omega = 0 \; , \label{eq:EnergyConservation}
\end{align}
where~$\bm{J} =\partial \bm{x}/\partial \boldsymbol{\chi}$ is the Jacobian. A derivation of this relation is shown in~\ref{sec:EnergyConservation}. A spatially discretized ALE incompressible Navier--Stokes solver 
\begin{align}
\intele{\bm{v}_h}{\left.\frac{\partial \bm{u}_h}{\partial t}\right\vert_{\boldsymbol{\chi}}}
+ c^{e}_h\left(\bm{v}_h,\bm{u}_h,\bm{u}_{\mathrm{G},h}\right)
+ g^{e}_h\left(\bm{v}_h,p_h\right)
+ a^{e}_{\mathrm{D},h}(\bm{v}_h,\bm{u}_h)
+ a^{e}_{\mathrm{C},h}(\bm{v}_h,\bm{u}_h) &= \bm{0} \; , \label{eq:WeakForm_Momentum_EnergyBalance} \\
-d^{e}_h(q_h,\bm{u}_h)& = 0  \; , \label{eq:WeakForm_Continuity_EnergyBalance}
\end{align}
is called energy-stable if it fulfills the following discrete analogy
\begin{align}
\int_{\Omega_h(t)} \left. \frac{\partial \frac{1}{2}\bm{u}_h \cdot \bm{u}_h}{\partial t}\right\vert_{\boldsymbol{\chi}} \mathrm{d}\Omega + \sum_{e=1}^{N_{\mathrm{el}}}\int_{\Omega_e(t)} \frac{1}{2}\left( \bm{u}_h \cdot \bm{u}_h \right) \Div{\bm{u}_{\mathrm{G},h}} \mathrm{d}\Omega \leq 0 \; . \label{eq:DiscreteEnergyStability}
\end{align}
We begin with reformulating the left term in equation~\eqref{eq:DiscreteEnergyStability} so that the semi-discrete momentum equation~\eqref{eq:WeakForm_Momentum_EnergyBalance} can be inserted
\begin{align}
\int_{\Omega_h(t)} \left. \frac{\partial \frac{1}{2}\bm{u}_h \cdot \bm{u}_h}{\partial t}\right\vert_{\boldsymbol{\chi}} \mathrm{d}\Omega &= \int_{\Omega_h(t)} \left. \bm{u}_h \cdot \frac{\partial \bm{u}_h}{\partial t}\right\vert_{\boldsymbol{\chi}} \mathrm{d}\Omega = \sum_{e=1}^{N_{\mathrm{el}}} \intele{\bm{u}_h}{\left.\frac{\partial \bm{u}_h}{\partial t}\right\vert_{\boldsymbol{\chi}}}  \nonumber \\
&= - \sum_{e=1}^{N_{\mathrm{el}}} \left(
c^{e}_h\left(\bm{u}_h,\bm{u}_h,\bm{u}_{\mathrm{G},h}\right)
+ g^{e}_h\left(\bm{u}_h,p_h\right)
+ a^{e}_{\mathrm{D},h}(\bm{u}_h,\bm{u}_h)
+ a^{e}_{\mathrm{C},h}(\bm{u}_h,\bm{u}_h)\right) \; .\label{eq:EnergyStability1}
\end{align}
Regarding the pressure gradient term, we first observe that the following relation holds
\begin{align}
\sum_{e=1}^{N_{\mathrm{el}}} g^{e}_{h,\mathrm{strong}}\left(\bm{u}_h,p_h\right)
= 
 \intdomain{\bm{u}_h}{\Grad{p_h}} - \intinteriorfaces{\avg{\bm{u}_h}}{\jump{p_h}} =  - \sum_{e=1}^{N_{\mathrm{el}}} d^{e}_{h,\mathrm{weak}}\left(p_h,\bm{u}_h\right) = 0 \; ,
\label{eq:EnergyStabilityPressureGradientTerm}
\end{align}
and similarly between~$g^{e}_{h,\mathrm{weak}}$ and~$d^{e}_{h,\mathrm{strong}}$. For these combinations of the velocity--pressure coupling terms, the formulation is symmetric independently of integration errors, and the pressure gradient term will not contribute to the energy evolution due to the discrete continuity equation~\eqref{eq:WeakForm_Continuity_EnergyBalance}.

The divergence and continuity penalty terms have a dissipative character as these are positive semi-definite by definition. Hence, it remains to consider the convective term. 
Inserting equation~\eqref{eq:weak_form_convective_term} and reformulating yields
\begin{align}
\begin{split}
\sum_{e=1}^{N_{\mathrm{el}}} c^{e}_h\left(\bm{u}_h,\bm{u}_h,\bm{u}_{\mathrm{G},h}\right) =  &+ \sum_{e=1}^{N_{\mathrm{el}}} \left(
\intele{\bm{u}_h}{\left(\Grad{\bm{u}_h}\right) \cdot \left(\bm{u}_h - \bm{u}_{\mathrm{G},h}\right)} 
- \inteleface{\bm{u}_h}{\left(\left(\avg{\bm{u}_h}-\bm{u}_{\mathrm{G},h}\right)\cdot\bm{n}\right) \bm{u}_h}\right)\\
&+ \sum_{e=1}^{N_{\mathrm{el}}}\inteleface{\bm{u}_h}{\left(\left(\avg{\bm{u}_h}-\bm{u}_{\mathrm{G},h}\right)\cdot\bm{n}\right)\avg{\bm{u}_h}
+ \frac{1}{2} \vert \left(\avg{\bm{u}_h}-\bm{u}_{\mathrm{G},h}\right)\cdot\bm{n} \vert \jumporiented{\bm{u}_h}} \\
= & + \sum_{e=1}^{N_{\mathrm{el}}} \left(
\intele{\bm{u}_h}{\left(\Grad{\bm{u}_h}\right) \cdot \left(\bm{u}_h - \bm{u}_{\mathrm{G},h}\right)} 
- \inteleface{\bm{u}_h}{\left(\left(\avg{\bm{u}_h}-\bm{u}_{\mathrm{G},h}\right)\cdot\bm{n}\right) \frac{1}{2}\jumporiented{\bm{u}_h}}\right)\\
&+ \intinteriorfaces{\jumporiented{\bm{u}_h}}{\frac{1}{2} \vert \left(\avg{\bm{u}_h}-\bm{u}_{\mathrm{G},h}\right)\cdot\bm{n} \vert \jumporiented{\bm{u}_h}} \; .
\end{split}
\label{eq:ConvectiveTerm1}
\end{align}
In~\ref{sec:EnergyBalanceConvectiveTerm}, we show that the first row on the right-hand side of the above equation can be reformulated as follows by algebraic manipulations
\begin{align}
\begin{split}
\sum_{e=1}^{N_{\mathrm{el}}} \left(
\intele{\bm{u}_h}{\left(\Grad{\bm{u}_h}\right) \cdot \left(\bm{u}_h - \bm{u}_{\mathrm{G},h}\right)} 
- \inteleface{\bm{u}_h}{\left(\left(\avg{\bm{u}_h}-\bm{u}_{\mathrm{G},h}\right)\cdot\bm{n}\right) \frac{1}{2}\jumporiented{\bm{u}_h}}\right)=\\
=-\frac{1}{2}\intdomain{\Div{\left(\bm{u}_h-\bm{u}_{\mathrm{G},h}\right)}}{\bm{u}_h\cdot\bm{u}_h}
+\frac{1}{2}\intinteriorfaces{\jumporiented{\bm{u}_h}\cdot \bm{n}}{\avg{\bm{u}_h\cdot \bm{u}_h}} \; ,
\end{split}\label{eq:ConvectiveTermReformulation}
\end{align}
which is easier to interpret in terms of energy stability since this formulation contains the divergence of the velocity, i.e., a residual of the incompressible Navier--Stokes equations, and the divergence of the grid velocity. It is worth emphasizing that the face integrals related to the moving mesh dropped out completely, since the ALE term is a linear transport term. The volume integral of the ALE transport term does not drop out since the grid velocity is not divergence-free. However, one can see that this is exactly the second term in~\eqref{eq:DiscreteEnergyStability}. Inserting equations~\eqref{eq:EnergyStabilityPressureGradientTerm},~\eqref{eq:ConvectiveTerm1}, and~\eqref{eq:ConvectiveTermReformulation} into equation~\eqref{eq:EnergyStability1} yields the following result
\begin{align}
\begin{split}
\int_{\Omega_h(t)} \left. \frac{\partial \frac{1}{2}\bm{u}_h \cdot \bm{u}_h}{\partial t}\right\vert_{\boldsymbol{\chi}} \mathrm{d}\Omega & + \sum_{e=1}^{N_{\mathrm{el}}}\int_{\Omega_e(t)} \frac{1}{2}\left( \bm{u}_h \cdot \bm{u}_h \right) \Div{\bm{u}_{\mathrm{G},h}} \mathrm{d}\Omega = \\
= 
& + \frac{1}{2}\intdomain{\Div{\bm{u}_h}}{\bm{u}_h\cdot\bm{u}_h} - a_{\mathrm{D},h}(\bm{u}_h,\bm{u}_h)\\
& - \frac{1}{2}\intinteriorfaces{\jumporiented{\bm{u}_h}\cdot \bm{n}}{\avg{\bm{u}_h\cdot \bm{u}_h}} - a_{\mathrm{C},h}(\bm{u}_h,\bm{u}_h)\\
& - \intinteriorfaces{\jumporiented{\bm{u}_h}}{\frac{1}{2} \vert \left(\avg{\bm{u}_h}-\bm{u}_{\mathrm{G},h}\right)\cdot\bm{n} \vert \jumporiented{\bm{u}_h}} 
\; .
\end{split}
\end{align}
It is interesting to realize that the result derived in~\cite{Fehn18a} for the conservative formulation of the convective term in the Eulerian case is very similar to the convective formulation in ALE form considered here, where terms with exactly the same structure occur. In particular, one observes that the ALE formulation does not introduce new terms on the right-hand side as compared to the Eulerian case (a consequence of the fact that the additional ALE term is a linear transport term). One might therefore argue that energy stability for the Eulerian case translates into energy stability for the ALE case with moving meshes. The third row on the right-hand side is the upwind stabilization term of the convective term and always exhibits a dissipative behavior. However, it is well-known that this term is not able to render the nonlinear convective term energy-stable. As argued in~\cite{Fehn18a,Fehn19Hdiv}, the consistent divergence and continuity penalty terms are positive semi-definite and control the non-vanishing divergence and non-vanishing jumps of the velocity in normal direction. By the use of consistent penalty terms, energy stability of the DG discretization is enforced weakly; there is currently no proof that the kinetic energy is strictly non-increasing at all times. The sign-indefinite terms of the above equation are exactly zero for an~$H(\mathrm{div})$-conforming (normal continuous) velocity space together with a pressure space that ensures that the velocity is pointwise divergence-free. For these reasons, the stabilized approach with divergence and continuity penalty terms can also be denoted as~$H(\mathrm{div})$-stabilization.

\subsection{Numerical integration and implementation}\label{sec:NumericalIntegration}
Integrals in the variational form are evaluated numerically by means of Gaussian quadrature, where we choose the number of one-dimensional quadrature points to ensure exact integration on affine element geometries with constant Jacobian. The velocity mass matrix term, the body force term, the viscous term, the velocity divergence term, the pressure gradient term, and the two penalty terms are integrated with~$n_{\rm{q}} = k_u+1$ quadrature points. To avoid aliasing effects, we use~$n_{\rm{q}}= \lfloor \frac{3k_u}{2} \rfloor +1$ quadrature points for the convective term containing quadratic nonlinearities. The Laplace operator in the pressure Poisson equation and the pressure mass matrix operator are integrated with~$n_{\rm{q}}=k_p+1$ quadrature points. The present ALE-DG methods are implemented in~\texttt{C++} using the~\texttt{deal.II} finite element library~\cite{dealII90}, and especially the matrix-free evaluation techniques developed in~\cite{Kronbichler2019fast} for the evaluation of volume and face integrals of discretized DG operators. State-of-the-art iterative solvers are used to solve the (non-)linear systems of equations of the fully discrete problem.

\subsection{CFL condition}
For the formulations with explicit treatment of the convective term, the time step size is restricted according to the CFL condition. Since the transport velocity is~$\bm{u}_h-\bm{u}_{\mathrm{G},h}$ in the ALE case, the CFL condition used on static meshes has to be adjusted accordingly using the relative velocity between fluid and grid motion. Here, we distinguish between two types of CFL condition. The global CFL condition~\cite{Shahbazi07,Fehn18a} applied to the ALE case
\begin{align}
\Delta t = \frac{\mathrm{Cr}}{k_u^{1.5}}\frac{h_{\mathrm{min}}}{\Vert \bm{u}_h-\bm{u}_{\mathrm{G},h} \Vert_{\mathrm{max}}} \; ,\label{eq:global_CFL_Condition}
\end{align}
with global estimates of the minimum element length~$h_{\mathrm{min}}$ and maximum velocity~$\Vert \bm{u}_h-\bm{u}_{\mathrm{G},h} \Vert_{\mathrm{max}}$ is used for time stepping with constant~$\Delta t$. In the above equation,~$\mathrm{Cr}$ denotes the Courant number and the term~$k_u^{1.5}$ was found to describe well the relation between critical time step size and polynomial degree for the present DG discretization~\cite{Fehn18a}. Since the minimum element length and the maximum velocity are difficult to estimate a priori and since the maximum velocity does not necessarily occur in the smallest element, a local CFL condition is used in case of adaptive time-stepping~\cite{KarniadakisSherwin2005}
\begin{align}
\Delta t = \min_{e=1,...,N_{\mathrm{el}}} \Delta t_e, \, \Delta t_e = \min_{q=1,...,N_{q,e}} \frac{\mathrm{Cr}}{k_u^{1.5}}\left.\frac{h}{\Vert \bm{u}_h-\bm{u}_{\mathrm{G},h} \Vert }\right\vert_{q,e}\; ,\label{eq:local_CFL_Condition}
\end{align}
with the local velocity-to-mesh-size ratio~$\left.\frac{\Vert \bm{u}_h-\bm{u}_{\mathrm{G},h} \Vert}{h}\right\vert_{q,e} = \Vert \bm{J}^{-\mathsf{T}} (\bm{u}_h-\bm{u}_{\mathrm{G},h}) \Vert_{q,e}$ evaluated at quadrature point~$q$ of element~$e$. This CFL condition ensures that the time step size does not exceed the critical one in any element in any quadrature point.

\section{Numerical results}\label{sec:NumericalResults}
The aim of this section is to display the numerical discretization properties of the proposed ALE-DG incompressible Navier--Stokes solvers. We select a set of academic test cases that address different aspects of ALE solvers, with the goal to obtain a picture as complete as possible. In detail, we study the geometric conservation property by the example of the free stream preservation test in Section~\ref{sec:ResultsGCL}. The convergence behavior in terms of temporal as spatial convergence rates are investigated in Section~\ref{sec:ResultsConvergenceInTimeAndSpace} by the example of a two-dimensional vortex problem with moving Dirichlet and Neumann boundaries. In this section, we also test the robustness of the different incompressible Navier--Stokes solvers in the limit of large grid deformations. Finally, the robustness of the proposed discretization methods for under-resolved turbulent flows is studied in Section~\ref{sec:ResultsUnderresolvedTurbulentFlows} by considering the three-dimensional Taylor--Green vortex problem for viscous flows at~$\mathrm{Re}=1600$, and also in the very challenging inviscid limit.

Convergence rates and relative~$L^2$-errors for problems with known analytical solution are computed as defined in~\cite{Fehn17}. Solver tolerances are selected to not spoil accuracy, e.g., by choosing relative solver tolerances of~$10^{-6}$ and absolute solver tolerances of~$10^{-12}$. In case of pure Dirichlet boundary conditions, the pressure level is undefined and can be fixed, e.g., by setting the mean value of the pressure DoF vector to zero.

\subsection{Geometric conservation law -- free stream preservation test}\label{sec:ResultsGCL}
We study the free stream preservation test to investigate whether the ALE formulations derived above fulfill the geometric conservation law. Satisfying the geometric conservation law means that a constant flow field is not disturbed by a moving mesh, i.e., the solver is able to preserve the free stream flow conditions exactly and independently of the mesh deformation. The analytical solution of the free stream preservation test is  therefore the constant flow state
\begin{align*}
\bm{u}(\bm{x}, t) = \left(1, \hdots, 1\right)^{\mathsf{T}} , \; p(\bm{x},t) = 1 \; ,
\end{align*}
where we prescribe pure Dirichlet boundary conditions~$\bm{u}(\bm{x}, t) = \bm{g}_u(\bm{x}, t)$ on~$\hboundary{D}(t)=\hboundary{}(t)$. The computational domain at initial time is~$\Omega_0 = \Omega(t=0) = \left[-L/2, L/2\right]^2$ with~$L=1$. The simulation is run over a time interval of~$0 \leq t \leq T=10$. The following analytical mesh movement with sine functions in both time and space is prescribed in two space dimensions
\begin{align}
\bm{x}(\bm{\chi}, t) = \bm{\chi} + A \sin\left({2\pi}\frac{t}{T_{\mathrm{G}}}\right) \begin{pmatrix}
\sin\left({2\pi}\frac{\chi_2 + L/2}{L}\right)\\
\sin\left({2\pi}\frac{\chi_1 + L/2}{L}\right)
\end{pmatrix}\; ,\label{eq:mesh_motion_2d}
\end{align}
and in three space dimensions
\begin{align}
\bm{x}(\bm{\chi}, t) = \bm{\chi} + A \sin\left({2\pi}\frac{t}{T_{\mathrm{G}}}\right) \begin{pmatrix}
\sin\left({2\pi}\frac{\chi_2 + L/2}{L}\right) \sin\left({2\pi}\frac{\chi_3 + L/2}{L}\right)\\
\sin\left({2\pi}\frac{\chi_1 + L/2}{L}\right) \sin\left({2\pi}\frac{\chi_3 + L/2}{L}\right)\\
\sin\left({2\pi}\frac{\chi_1 + L/2}{L}\right) \sin\left({2\pi}\frac{\chi_2 + L/2}{L}\right)
\end{pmatrix}\; ,\label{eq:mesh_motion_3d}
\end{align}
where the amplitude is set to~$A=0.08$ resulting in a strongly deformed mesh. The period length of the grid motion is set to~$T_{\mathrm{G}} = T/10$ and the wavenumber in space is chosen such that the length and height of the domain are exactly one period. In Figure~\ref{fig:mesh_deformed}, the mesh deformation is illustrated for~$d=2$, where the initial, undeformed mesh is a uniform Cartesian grid. The mesh reaches its maximum deformation at times~$t=T_{\mathrm{G}}/4 + i\ T_{\mathrm{G}}/2, i=0,1,2,...$. The viscosity is set to~$\nu=0.025$. Adaptive time-stepping is used where the time step size is adjusted dynamically according to the CFL condition~\eqref{eq:local_CFL_Condition} using~$\mathrm{Cr}=0.25$.
\begin{figure}
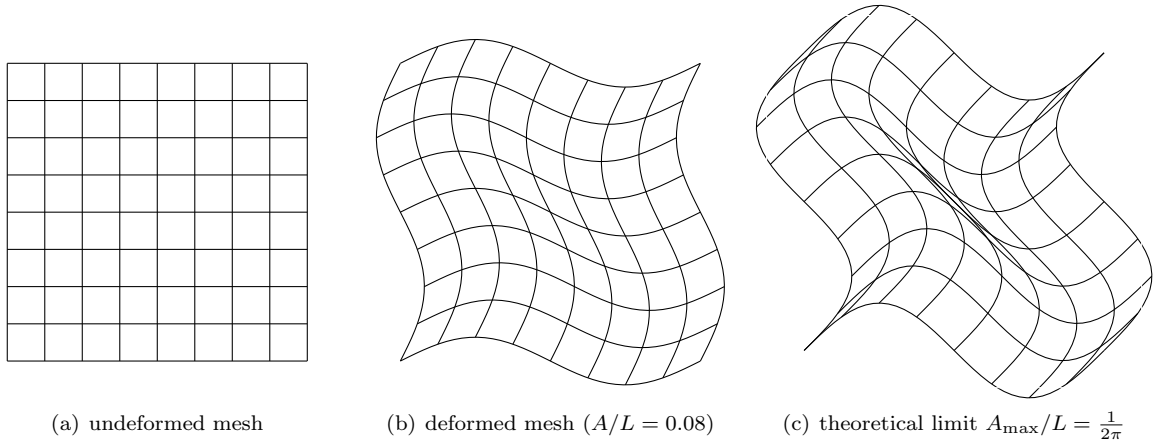

	\centering
     	\subfigure[undeformed mesh]{\plotgridsine{0.0}\label{fig:mesh_undeformed}}
		\subfigure[deformed mesh ($A/L=0.08$)]{\plotgridsine{0.08}\label{fig:mesh_deformed}}
		\subfigure[theoretical limit $A_{\mathrm{max}}/L = \frac{1}{2 \pi}$]{\plotgridsine{0.1591549431}\label{fig:mesh_max_deformed}}
		\caption{Illustration of sine-like mesh motion in two space dimensions according to equation~\eqref{eq:mesh_motion_2d}.}
		\label{fig:undeformed_vs_deformed_mesh}
\end{figure}
In the following, we use a mesh with~$8^d$ elements as in Figure~\ref{fig:undeformed_vs_deformed_mesh} and consider a polynomial degree of~$k=3$. Both absolute and relative solver tolerances are set to a small value of~$10^{-14}$. Table~\ref{tab:geometric_conservation_law} reports relative errors for velocity and pressure for BDF schemes of order 1 to 3 and the three different incompressible Navier--Stokes solvers considered in this work. For the dual splitting scheme,~$J_p=\min(2, J)$ is used but the simulations have also been stable for the choice~$J_p=J$ for the high-order BDF scheme~$J=3$. For the pressure-correction scheme,~$J_p=\min(2, J)-1$ is used for all~$J$ and the rotational formulation with~$\chi=1$. Here, the choice~$J_p=J-1$ lead to instabilities for the high-order scheme~$J=3$ in agreement with theory.

\begin{table}[!h]
\caption{Numerical results for free stream preservation test for both strong and weak formulations of velocity--pressure coupling terms and two- and three-dimensional problems:~$J_p=\min(2, J)$ is used for the dual splitting scheme, and~$J_p=\min(2, J)-1$ for the pressure-correction scheme in incremental formulation. An explicit formulation of the convective term is used for all three solvers.}\label{tab:geometric_conservation_law}
\renewcommand{\arraystretch}{1.1}
\begin{center}
\begin{small}

\subtable[Two-dimensional problem ($d=2$), weak formulations~$d^e_{h, \mathrm{weak}}$ and~$g^e_{h, \mathrm{weak}}$]{
\begin{tabular}{lcccccccc}
\toprule
& \multicolumn{3}{c}{relative~$L^2$-error~$\bm{u}_h$} & &\multicolumn{3}{c}{relative~$L^2$-error~$p_h$}\\
\cline{2-4} \cline{6-8}  & BDF1 & BDF2 & BDF3 &  & BDF1 & BDF2 & BDF3 \\
\midrule
monolithic           & 1.8E--16 & 1.8E--15 & 2.1E--15 & & 1.3E--16 & 6.0E--13 & 2.9E--13\\
dual splitting       & 1.2E--15 & 1.6E--15 & 1.8E--15 & & 2.5E--13 & 7.5E--14 & 6.1E--13\\
pressure-correction  & 1.6E--15 & 2.7E--15 & 3.0E--15 & & 1.8E--13 & 9.5E--14 & 4.6E--13\\
\bottomrule
\end{tabular}}

\subtable[Two-dimensional problem ($d=2$), strong formulations~$d^e_{h, \mathrm{strong}}$ and~$g^e_{h, \mathrm{strong}}$]{
\begin{tabular}{lcccccccc}
\toprule
& \multicolumn{3}{c}{relative~$L^2$-error~$\bm{u}_h$} & &\multicolumn{3}{c}{relative~$L^2$-error~$p_h$}\\
\cline{2-4} \cline{6-8}  & BDF1 & BDF2 & BDF3 &  & BDF1 & BDF2 & BDF3 \\
\midrule
monolithic           & 1.8E--16 & 1.1E--15 & 1.8E--15 & & 1.3E--16 & 2.1E--13 & 2.4E--13\\
dual splitting       & 1.0E--15 & 1.3E--15 & 1.4E--15 & & 1.5E--13 & 3.8E--13 & 6.5E--13\\
pressure-correction  & 1.0E--15 & 1.6E--15 & 1.8E--15 & & 2.3E--13 & 2.3E--13 & 4.7E--13\\
\bottomrule
\end{tabular}}

\subtable[Three-dimensional problem ($d=3$), weak formulations~$d^e_{h, \mathrm{weak}}$ and~$g^e_{h, \mathrm{weak}}$]{
\begin{tabular}{lcccccccc}
\toprule
& \multicolumn{3}{c}{relative~$L^2$-error~$\bm{u}_h$} & &\multicolumn{3}{c}{relative~$L^2$-error~$p_h$}\\
\cline{2-4} \cline{6-8}  & BDF1 & BDF2 & BDF3 &  & BDF1 & BDF2 & BDF3 \\
\midrule
monolithic           & 1.3E--08 & 1.3E--08 & 1.3E--08 & & 8.7E--09 & 8.2E--09 & 8.2E--09\\
dual splitting       & 1.1E--08 & 1.2E--08 & 1.2E--08 & & 6.4E--09 & 6.6E--09 & 6.8E--09\\
pressure-correction  & 1.2E--08 & 1.3E--08 & 1.3E--08 & & 7.0E--09 & 7.1E--08 & 6.1E--08\\
\bottomrule
\end{tabular}}

\subtable[Three-dimensional problem ($d=3$), strong formulations~$d^e_{h, \mathrm{strong}}$ and~$g^e_{h, \mathrm{strong}}$]{
\begin{tabular}{lcccccccc}
\toprule
& \multicolumn{3}{c}{relative~$L^2$-error~$\bm{u}_h$} & &\multicolumn{3}{c}{relative~$L^2$-error~$p_h$}\\
\cline{2-4} \cline{6-8}  & BDF1 & BDF2 & BDF3 &  & BDF1 & BDF2 & BDF3 \\
\midrule
monolithic           & 3.5E--16 & 1.4E--14 & 1.2E--14 & & 1.7E--16 & 8.5E--13 & 8.4E--13\\
dual splitting       & 3.5E--16 & 1.8E--15 & 1.1E--15 & & 1.4E--13 & 7.3E--13 & 1.3E--12\\
pressure-correction  & 1.9E--15 & 1.9E--15 & 2.3E--15 & & 3.3E--13 & 7.8E--13 & 1.1E--12\\
\bottomrule
\end{tabular}}

\end{small}
\end{center}
\renewcommand{\arraystretch}{1}
\end{table}
The results in Table~\ref{tab:geometric_conservation_law} reveal that all schemes fulfill the geometric conservation law for~$d=2$, and in particular also for the weak formulation of the velocity--pressure coupling terms. For~$d=3$, the geometric conservation law is fulfilled exactly only when using the strong formulations~$d^e_{h, \mathrm{strong}}$ and~$g^e_{h, \mathrm{strong}}$ as expected theoretically, see Section~\ref{sec:GCL}. Errors larger than the solver tolerances are observed for the weak formulations for~$d=3$. For all variants studied here, similar results are obtained when using an implicit formulation of the convective term for the monolithic solver and the pressure-correction scheme, where we had to slightly relax the solver tolerances to~$10^{-12}$ to ensure convergence of the Newton solver. Hence, it remains to explain why the weak formulation of velocity--pressure coupling terms fulfills the GCL exactly for~$d=2$. As the critical aspect in this context is the exact evaluation of integrals, see Section~\ref{sec:GCL}, it can be conjectured that integrals of the velocity--pressure coupling terms are indeed evaluated exactly for~$d=2$ for constant solutions on deformed elements. However, this is a special case that only occurs for the free stream preservation test due to a solution of lowest polynomial degree, and this does not hold for general non-constant solutions and arbitrarily deformed elements. Hence, we do not pay further attention to this point. In a similar direction, our results show that the weak formulations are very accurate as well, and will therefore be used for the following examples. The reason for this choice is that the conclusion is even stronger when being able to demonstrate optimal convergence behavior for a formulation that appears to be sub-optimal regarding the free stream preservation test.

\subsection{Temporal and spatial convergence behavior}\label{sec:ResultsConvergenceInTimeAndSpace}

Next, we analyze the convergence behavior of the ALE-DG methods and study whether optimal rates of convergence observed in the Eulerian case carry over to moving meshes. For this purpose, we select the two-dimensional vortex problem from~\cite{Hesthaven07}, which is an analytical solution of the two-dimensional incompressible Navier--Stokes equations in the absence of body forces,~$\bm{f}=\bm{0}$, namely
\begin{align}
\begin{split}
\bm{u}(\bm{x},t) &=  \begin{pmatrix}
-\sin(2\pi x_2)\\
+\sin(2\pi x_1)
\end{pmatrix}
\exp\left(-4\nu\pi^2 t\right)\; ,\\
p(\bm{x},t) &=  -\cos(2\pi x_1)\cos(2\pi x_2)\exp\left(-8\nu \pi^2 t\right)\; ,
\end{split}\label{AnalyticalSolutionVortex}
\end{align}
with viscosity set to~$\nu=0.025$, and simulated over the time interval~$0\leq t\leq T=1$. The computational domain at start time is~$\Omega_0=[-L/2,L/2]^2$ with length~$L=1$, and is deformed according to the two-dimensional mesh movement function~\eqref{eq:mesh_motion_2d} with parameters~$A=0.08$ and~$T_{\mathrm{G}}= 4 T$ (maximum deformation reached at end time~$t=T$) unless specified otherwise. Again, a mesh as depicted in Figure~\ref{fig:mesh_deformed} with refinement level~$l$ is used. Sine-like mesh deformations are commonly used to verify high-order ALE-DG implementations, see for example~\cite{Nguyen2010,Mavriplis2011,Schnucke2018arxiv}. In these works, however, the sine functions are defined in a way that the boundaries are not moving. Instead, we intentionally choose a setup for which the boundaries are moving since our goal is to test all parts of the algorithm relevant for FSI. The verification of boundary conditions is particularly relevant for the splitting-type solvers and some effects might not be visible if the boundaries are fixed. For example, if the boundaries are non-moving, the normal vector in equation~\eqref{eq:DualSplitting_PressureBC_GammaD} would not change over time and the ALE transport term would simply drop out since~$\bm{u}_{\mathrm{G}}=\bm{0}$ on the boundary. According to the setup in~\cite{Hesthaven07}, each of the four sides of the square is split into a Dirichlet boundary and a Neumann boundary according to the inflow and outflow sections, respectively. Note that the chosen mesh deformation is in compliance with these boundary conditions.

\begin{figure}[!ht]
 \centering
	\subfigure[$p(\bm{x},T)$ ($l=1$, $k=3$)]{\includegraphics[width=0.225\textwidth]{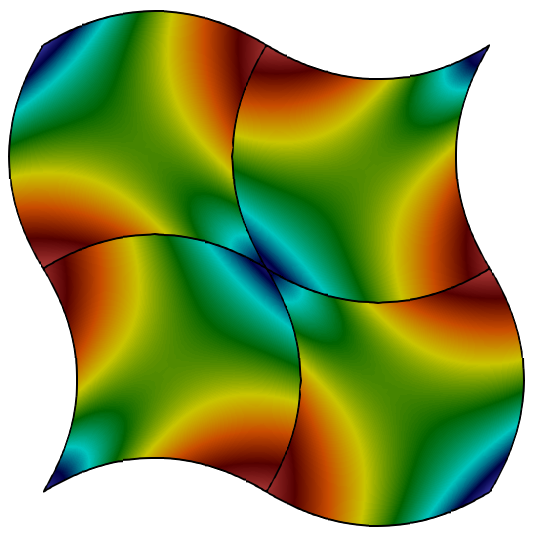}}
	\subfigure[$p(\bm{x},T)$ ($l=2$, $k=3$)]{\includegraphics[width=0.225\textwidth]{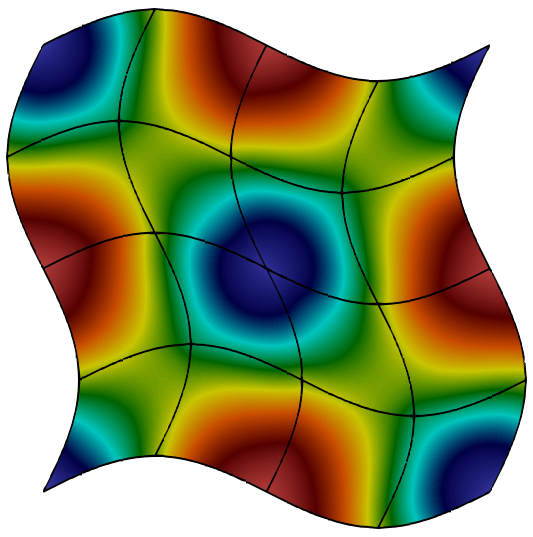}}
	\subfigure[$\Vert \bm{u}(\bm{x},T) \Vert$ ($l=1$, $k=3$)]{\includegraphics[width=0.225\textwidth]{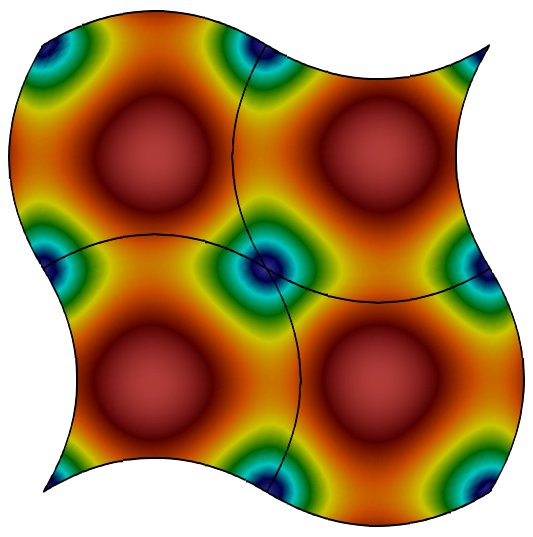}}
	\subfigure[$\Vert \bm{u}(\bm{x},T) \Vert$ ($l=2$, $k=3$)]{\includegraphics[width=0.225\textwidth]{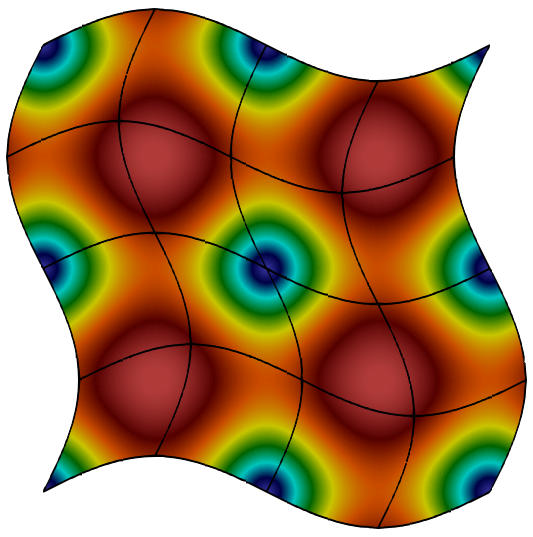}}
\caption{Vortex problem: visualization of solution at final time~$t=T$ for two different mesh resolutions,~$l=1$ and~$l=2$, with polynomial degree~$k=3$ for the velocity and~$2$ for the pressure (red indicates high values and blue low values). The amplitude of the mesh deformation is~$A=0.08$.} 
\label{fig:vortex_visualization}
\end{figure}

Figure~\ref{fig:vortex_visualization} shows a visualization of the solution at the time of maximal mesh deformation~$t=T$ using polynomial shape functions of degree~$k=3$ and considering the two lowest refinement levels of~$l=1,2$ (the grid has to consist of at least~$2^2$ elements due to the type of boundary conditions prescribed with each face of the rectangular domain cut into a Dirichlet part and a Neumann part). While the velocity field is already well resolved on the coarsest mesh with~$l=1$, the pressure field of polynomial degree~$2$ is approximated poorly for refinement level~$l=1$ with distinct discontinuities between the elements. For~$l=2$, the pressure solution appears to be visually converged with only minor differences as compared to the solution on even finer meshes. In the following, we study the convergence quantitatively in terms of errors against the analytical solution as well as convergence rates measured in space and time.

\begin{figure}[t]
 \centering
 \subfigure[constant~$\Delta t$]{
	\includegraphics[width=0.9\textwidth]{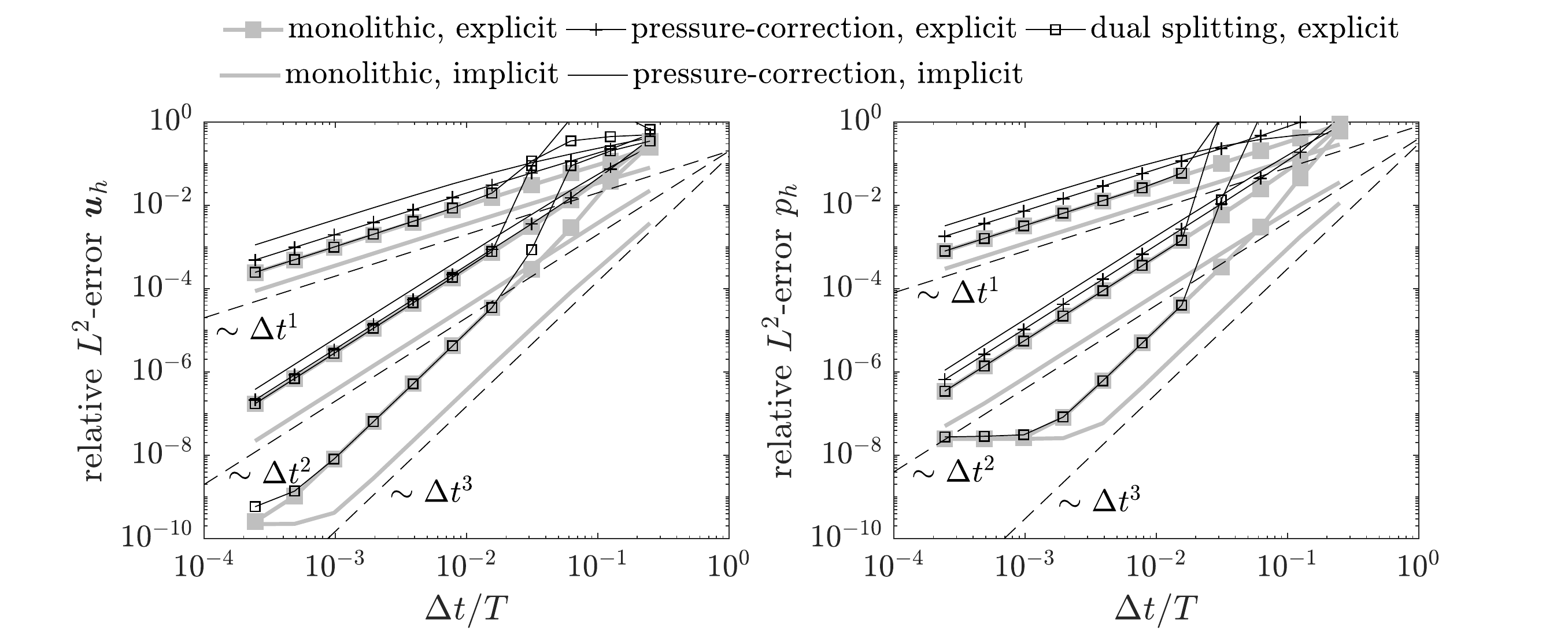}}
 \subfigure[adaptive~$\Delta t_n$]{
	\includegraphics[width=0.9\textwidth]{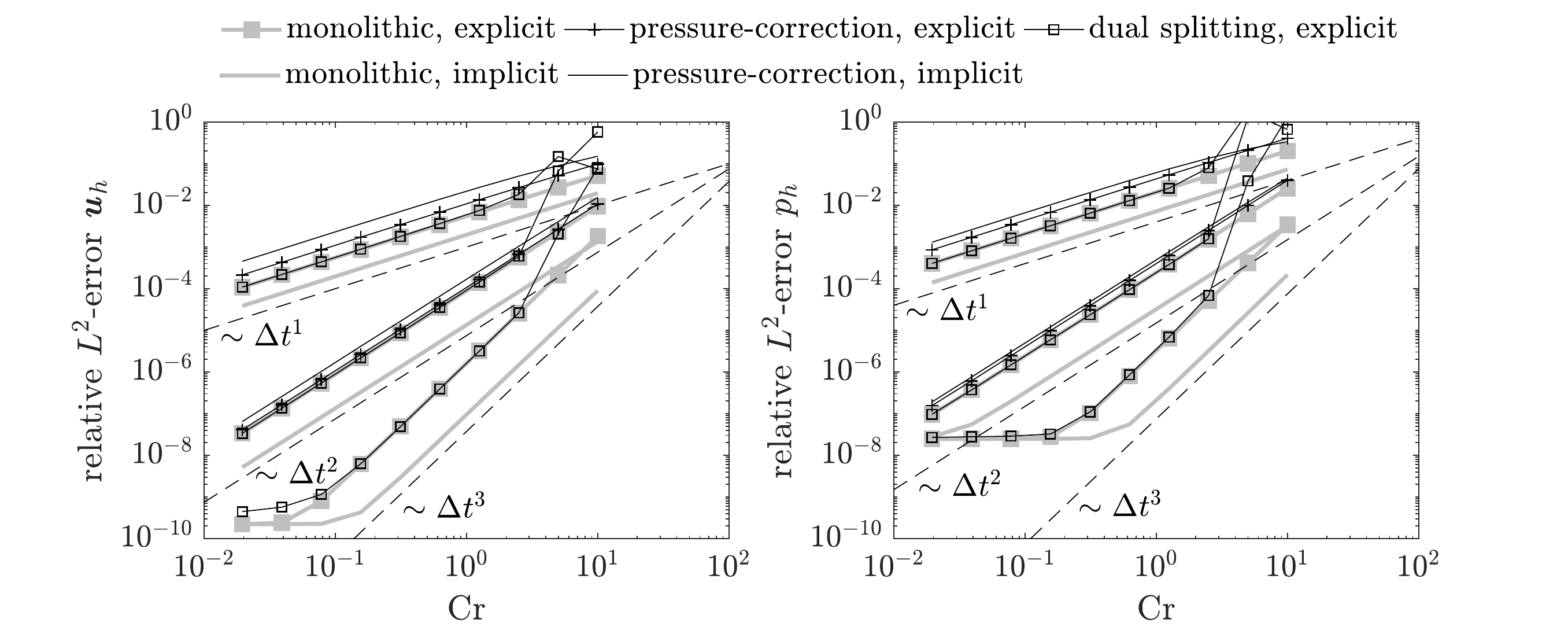}}
\caption{Vortex problem: temporal convergence tests for ALE incompressible Navier--Stokes solvers for BDF schemes of order~$J=1,...,3$ with~$J_p=\min(2, J)$ for the dual splitting scheme and~$J_p=\min(2, J)-1, J\leq 2$ for the pressure-correction scheme.}
\label{fig:vortex_temporal_convergence_ale}
\end{figure}

In a first set of experiments, we test the temporal convergence behavior for both constant and adaptive time step sizes. The chosen spatial resolution is fine,~$l=3$ and~$k=8$, to make sure that errors are dominated by temporal discretization errors. The second aspect why this test case is interesting is the fact that the CFL condition does often not show up for this particular test case for moderate Reynolds numbers, probably due to the fact that the vortex is not moving. This allows to measure temporal convergence rates of an incompressible Navier--Stokes solver with explicit formulation of the convective term, which would be difficult otherwise because in the regular case with~$\mathrm{Cr} < \mathrm{Cr}_{\mathrm{crit}}$ one is often operating in a regime where temporal discretization errors are already negligible as compared to spatial errors for higher-order time integration schemes.

Figure~\ref{fig:vortex_temporal_convergence_ale} shows results of a temporal convergence study for BDF schemes of order~$J=1,2,3$ using constant and adaptive time step sizes. All types of solvers converge with optimal rates of convergence on the moving mesh. Compared to additional simulations performed for a static, Cartesian mesh, the errors are almost the same and only slightly larger. The lowest errors are obtained for the monolithic solver with implicit formulation of the convective term. The dual splitting scheme and the monolithic solver with explicit convective term yield similar errors, and the errors are again slightly larger for both explicit and implicit pressure-correction formulations. For very small time step sizes and the BDF3 scheme, the spatial error becomes dominant at some point. Note that BDF3 schemes are not considered for the pressure-correction scheme since~$J_p=1$ (required for stability) limits convergence rates to second-order in that case. For the dual splitting scheme, large errors occur for Courant numbers~$\mathrm{Cr}>1$. This effect does not show up for the non-moving mesh and we conjecture that this effect originates from the CFL condition. For a regular Navier--Stokes problem with explicit formulation of the convective term, no stability can be expected in this range of Courant numbers. While a sharp CFL bound is not visible for the chosen parameters, all solvers become unstable for~$\mathrm{Cr}> 1$ (for~$J=2$) when using a smaller viscosity of~$\nu=10^{-3}$ (higher Reynolds number), showing the usual sharp CFL bound.

In a second set of experiments, we study the spatial convergence behavior for polynomial degrees~$k=2,3,4,5$ by a mesh refinement study, considering refine levels~$l=1,...,6$. The BDF2 time integration scheme with a small, constant time step of~$\Delta t = 5 \cdot 10^{-5}$ is used to obtain small temporal errors. Figure~\ref{fig:vortex_spatial_convergence} shows results for the three different solver types with an explicit formulation of the convective term. Since this is a spatial convergence test, the results would be indistinguishable when using an implicit formulation of the convective term and are therefore not shown explicitly. For comparison, we also show results for a non-moving, Cartesian mesh for the monolithic solution approach. Overall, all variants converge will optimal rates of convergence for all polynomial degrees until the temporal discretization error is reached. Compared to the static mesh, the errors are slightly larger on the moving mesh, and the gap between moving and static meshes increases for increasing polynomial degree.

\begin{figure}[t]
 \centering
	\includegraphics[width=0.9\textwidth]{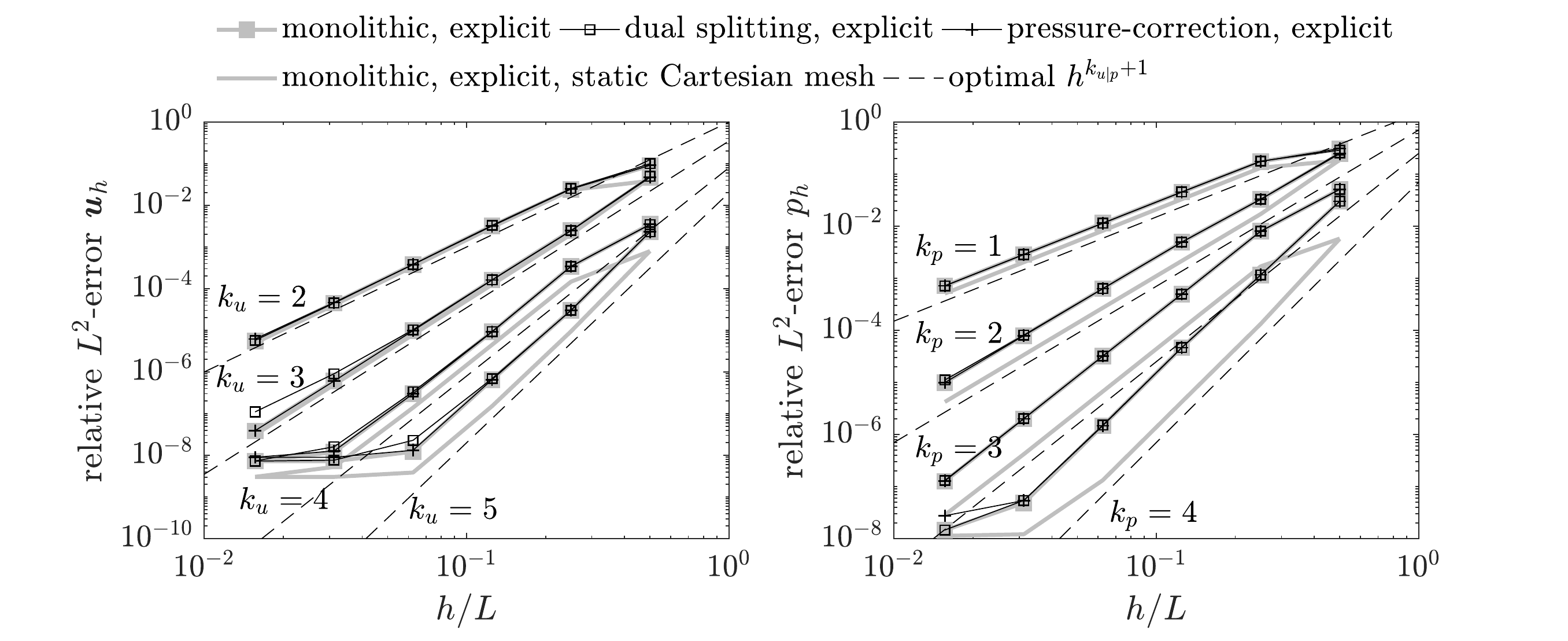}
\caption{Vortex problem: spatial convergence tests for ALE incompressible Navier--Stokes solvers for polynomial degrees~$k=2,3,4,5$ and comparison to static Cartesian mesh.}
\label{fig:vortex_spatial_convergence}
\end{figure}

\begin{figure}[t]
 \centering
	\includegraphics[width=0.9\textwidth]{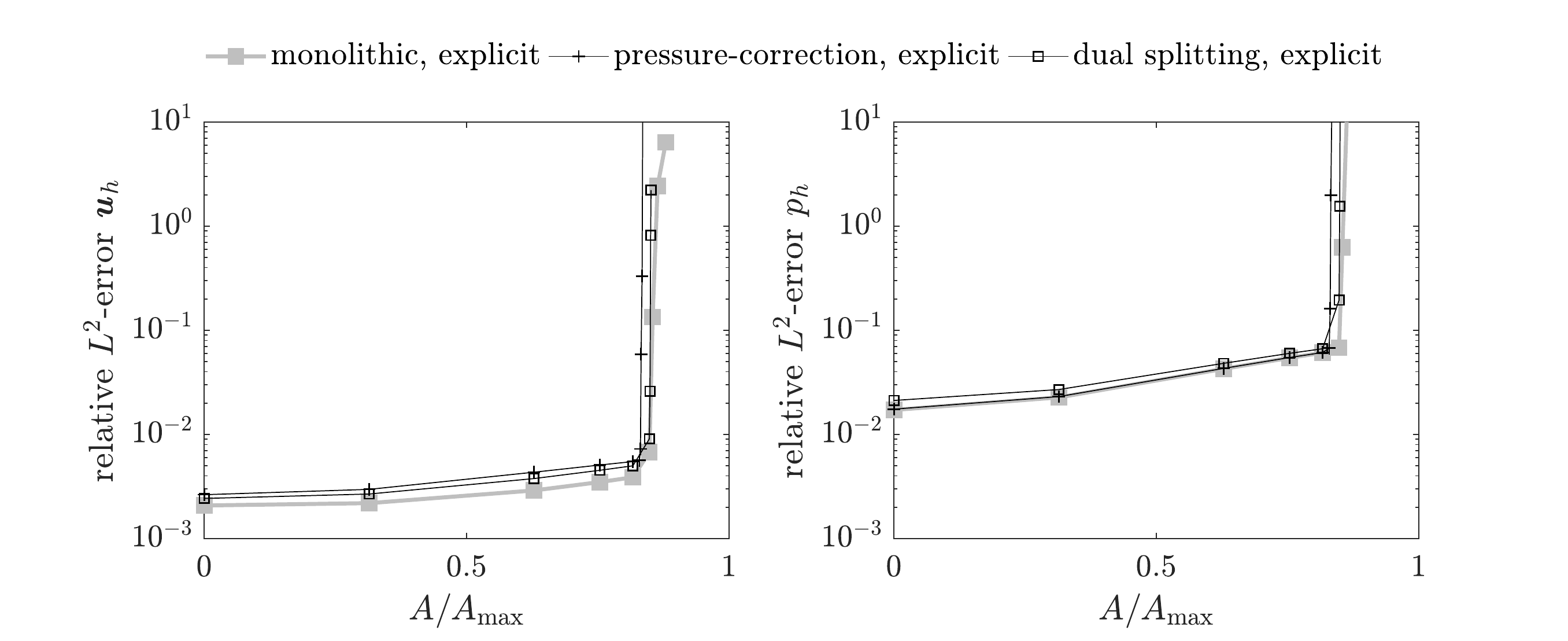}
\caption{Vortex problem: robustness test of ALE incompressible Navier--Stokes solvers on a mesh with~$4^2$ elements and polynomial degree~$k=3$.}
\label{fig:vortex_robustness_test_ale}
\end{figure}

Finally, we test the robustness by increasing the amplitude~$A$ of the mesh movement function to its theoretical limit, i.e., the value at which the aspect ratio tends to infinity or at which inverted elements occur, in order to characterize the point where the proposed ALE-DG methods will break down. This theoretical limit is reached when the lower left corner of the domain becomes an arbitrarily thin needle, see Figure~\ref{fig:mesh_max_deformed}. Mathematically, this limit is reached when the slope of the lower domain boundary reaches a value of~$1$ in the lower left corner at the time of maximum deformation, i.e.,
\begin{align*}
\left.\frac{\partial x_2(\chi_1, \chi_2 = - L/2, t=T_{\mathrm{G}}/4)}{\partial \chi_1}\right\vert_{\chi_1=-L/2} = A \left.\cos\left( 2\pi \frac{\chi_1 + L/2}{L}\right)\right\vert_{\chi_1=-L/2} \frac{2\pi }{L}  \overset{!}{=} 1 \leadsto \frac{A_{\mathrm{max}}}{L} = \frac{1}{2\pi} \; .
\end{align*}
In Figure~\ref{fig:vortex_robustness_test_ale}, we plot the relative errors of velocity and pressure for a coarse mesh with~$4^2$ elements and polynomial degree~$k=3$ as a function of~$A/A_{\mathrm{max}}$. Adaptive time-stepping, equation~\eqref{eq:local_CFL_Condition}, with~$\mathrm{Cr}=0.2$ is used. The error increases moderately for small amplitudes of the mesh deformation, and a rapid increase in errors can be observed around~$85 \%$ of the theoretically maximum amplitude for this problem.

\subsection{Robustness for under-resolved turbulent flows}\label{sec:ResultsUnderresolvedTurbulentFlows}
We study the applicability of the present ALE-DG solvers to transitional and turbulent flows by the example of the three-dimensinal Taylor--Green vortex (TGV) problem~\cite{Taylor1937}. Starting from the initial state
\begin{align*}
u_1(\bm{x},t=0) &= +\sin\left(x_1 \right)\cos\left(x_2\right)\cos\left(x_3\right)\; ,\\
u_2(\bm{x},t=0) &= -\cos\left(x_1 \right)\sin\left(x_2\right)\cos\left(x_3\right)\; ,\\
u_3(\bm{x},t=0) &= 0\; ,\\
p(\bm{x},t=0) &= \frac{1}{16}\left(\cos\left(2x_1\right)+ \cos\left(2x_2\right)\right)\left(\cos\left(2x_3\right)+2 \right)\; ,
\end{align*}
the flow transitions to turbulence in the absence of body forces,~$\bm{f}=\bm{0}$. The Reynolds number is~$\mathrm{Re}=1/\nu$, and we consider both the standard setting~$\mathrm{Re}=1600$ and the inviscid limit~$\mathrm{Re}\rightarrow\infty$ in the present work. The simulated time interval is~$0 \leq t \leq T=20$. At start time, the domain~$\Omega_0=\left[-L/2, L/2\right]^3=\left[-\pi, \pi\right]^3$ is a Cartesian box that deforms over time according to the mesh motion described in equation~\eqref{eq:mesh_motion_3d}, with an amplitude of~$A = \pi/6$ and varying mesh velocities characterized by period times decreassing from~$T_{\mathrm{G}} = 20$ to~$T_{\mathrm{G}} = 1$. Periodic boundaries are used in all coordinate directions. The domain boundaries are moving for the given mesh motion, but the mesh deformation is defined periodically in order to ensure consistency with the periodic boundary conditions. An illustration of the mesh deformation for the above parameters is given in Figure~\ref{fig:tgv_visualization}. We consider meshes that are originally Cartesian with~$N_{\mathrm{el}} = (2^l)^3$ elements, where~$l$ denotes the level of refinement, and denote the number of velocity nodes~$(2^l (k+1))^3$ as effective mesh resolution. A BDF2 time integration scheme along with an explicit treatment of the convective term is used for all solver types, with~$J_p=2$ for the dual splitting scheme and~$J_p=1$ for the pressure-correction scheme unless specified otherwise. Moreover, we use adaptive time-stepping, equation~\eqref{eq:local_CFL_Condition}, with a Courant number of~$\mathrm{Cr}=0.2$.

\begin{figure}[!ht]
 \centering
	\subfigure[$\Vert \bm{u}(\bm{x},t=0) \Vert$]{\includegraphics[width=0.19\textwidth]{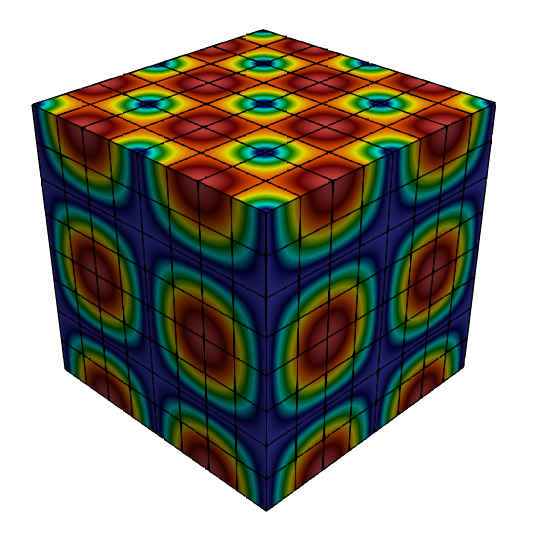}}
	\subfigure[$\Vert \bm{u}(\bm{x},t=5) \Vert$]{\includegraphics[width=0.19\textwidth]{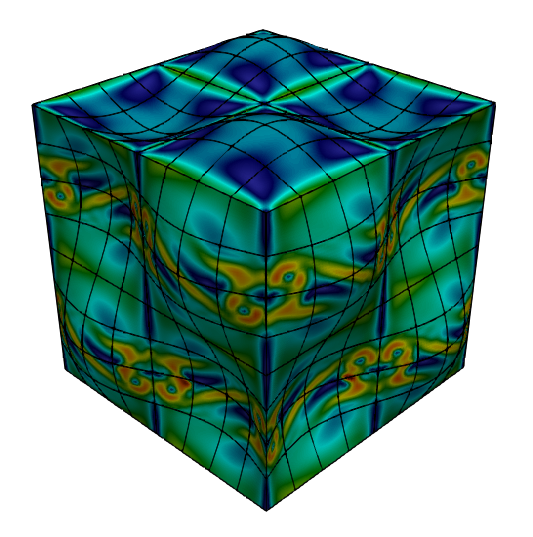}}
	\subfigure[$\Vert \bm{u}(\bm{x},t=10) \Vert$]{\includegraphics[width=0.19\textwidth]{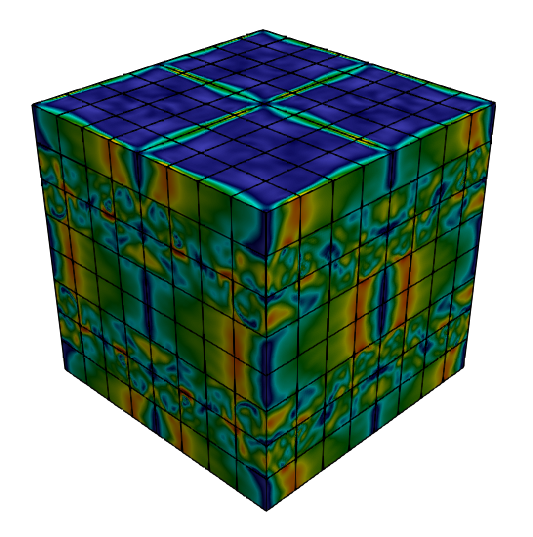}}
	\subfigure[$\Vert \bm{u}(\bm{x},t=15) \Vert$]{\includegraphics[width=0.19\textwidth]{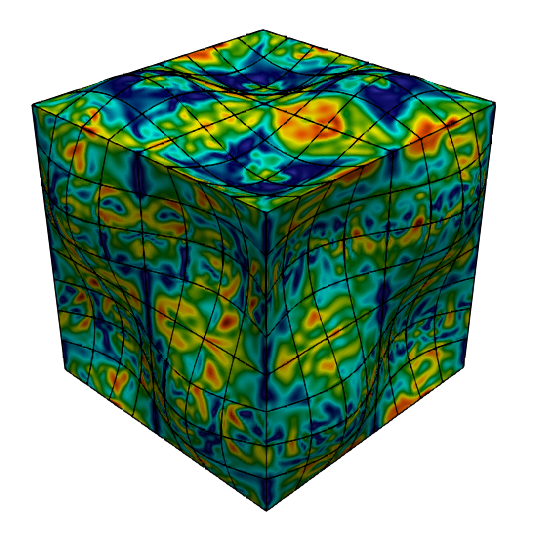}}
	\subfigure[$\Vert \bm{u}(\bm{x},t=20) \Vert$]{\includegraphics[width=0.19\textwidth]{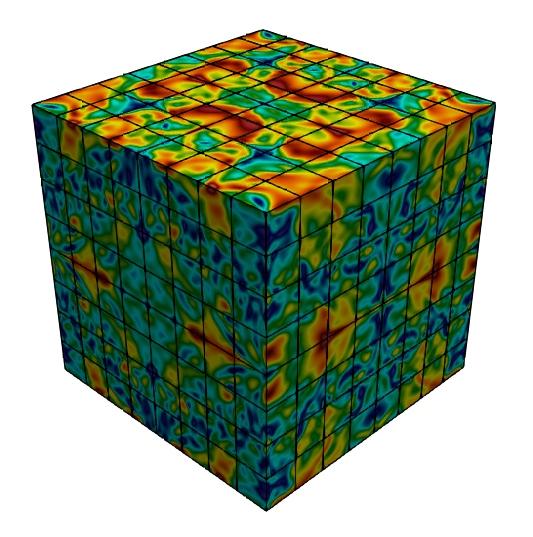}}
\caption{Taylor--Green vortex problem at~$\mathrm{Re}=1600$: visualization of velocity magnitude at different times for a spatial resolution with~$l=3$ and polynomial degree~$k=7$. The parameters of the mesh deformation are~$A = \pi/6$ and~$T_{\mathrm{G}} = 20$. The color map has been rescaled for each time instant, where red indicates high velocity and blue low velocity. The results shown have been simulated with the dual splitting scheme.}
\label{fig:tgv_visualization}
\end{figure}

\begin{figure}[t]
 \centering
 \subfigure[Kinetic energy.]{
	\includegraphics[width=0.9\textwidth]{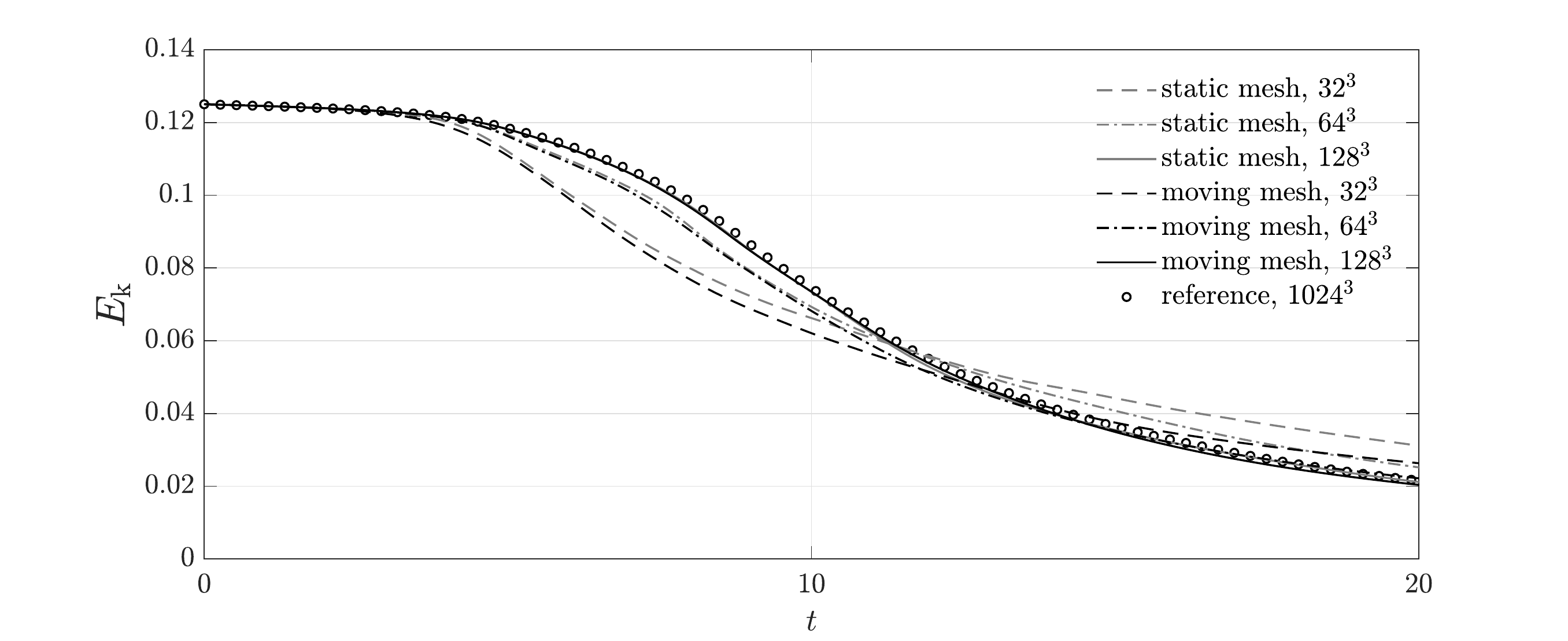}}
 \subfigure[Kinetic energy dissipation rate.]{
	\includegraphics[width=0.9\textwidth]{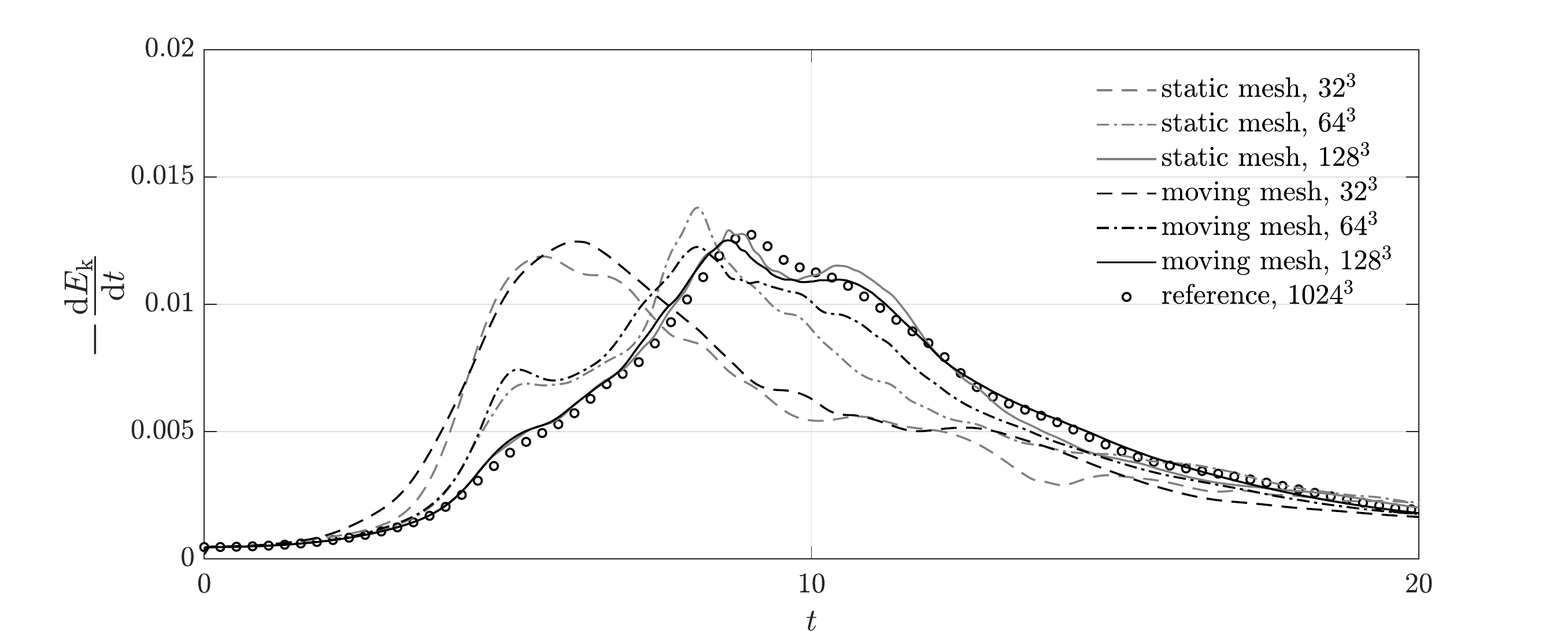}}
\caption{Taylor--Green vortex problem at~$\mathrm{Re}=1600$: comparison between static and moving meshes for polynomial degree~$k=3$ for increasing mesh resolutions of~$32^3$,~$64^3$, and~$128^3$. The results shown have been simulated with the dual splitting scheme.}
\label{fig:tgv_Re1600_convergence}
\end{figure}

\begin{figure}[t]
 \centering
\subfigure[Kinetic energy.]{
	\includegraphics[width=0.9\textwidth]{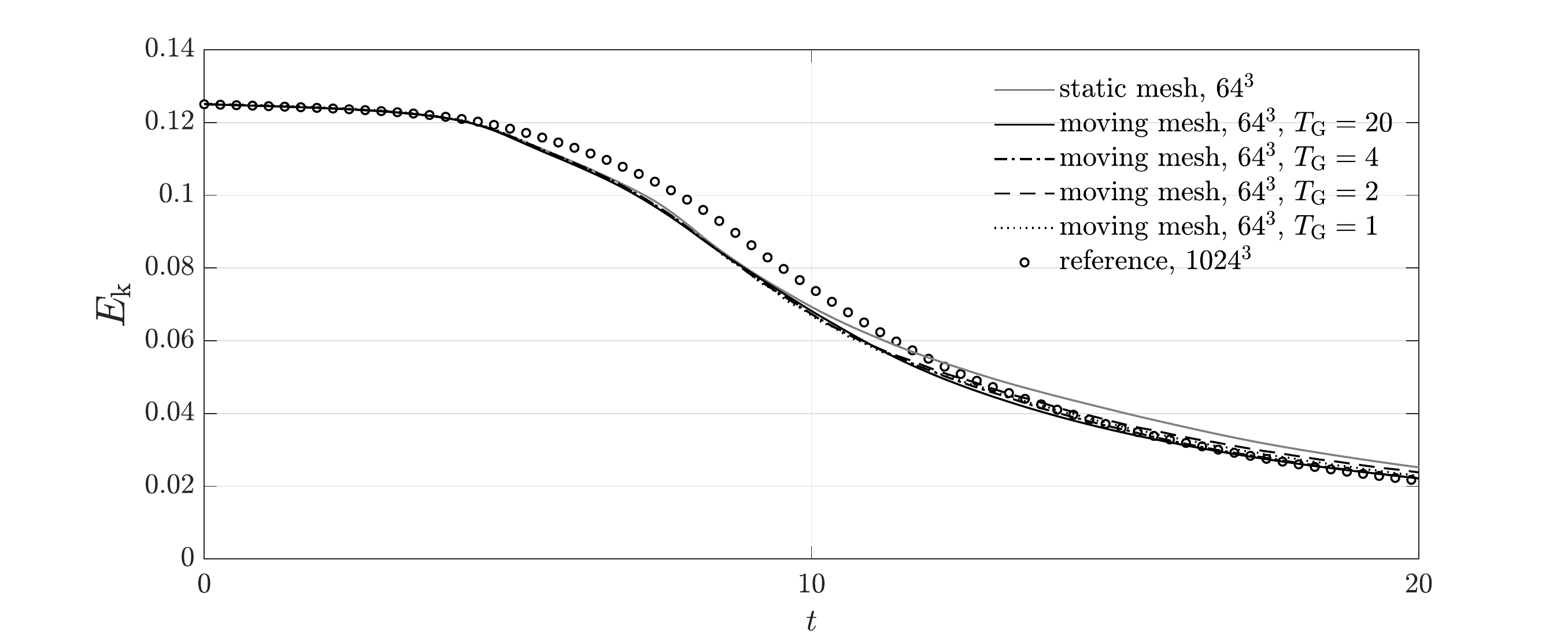}}
 \subfigure[Kinetic energy dissipation rate.]{
	\includegraphics[width=0.9\textwidth]{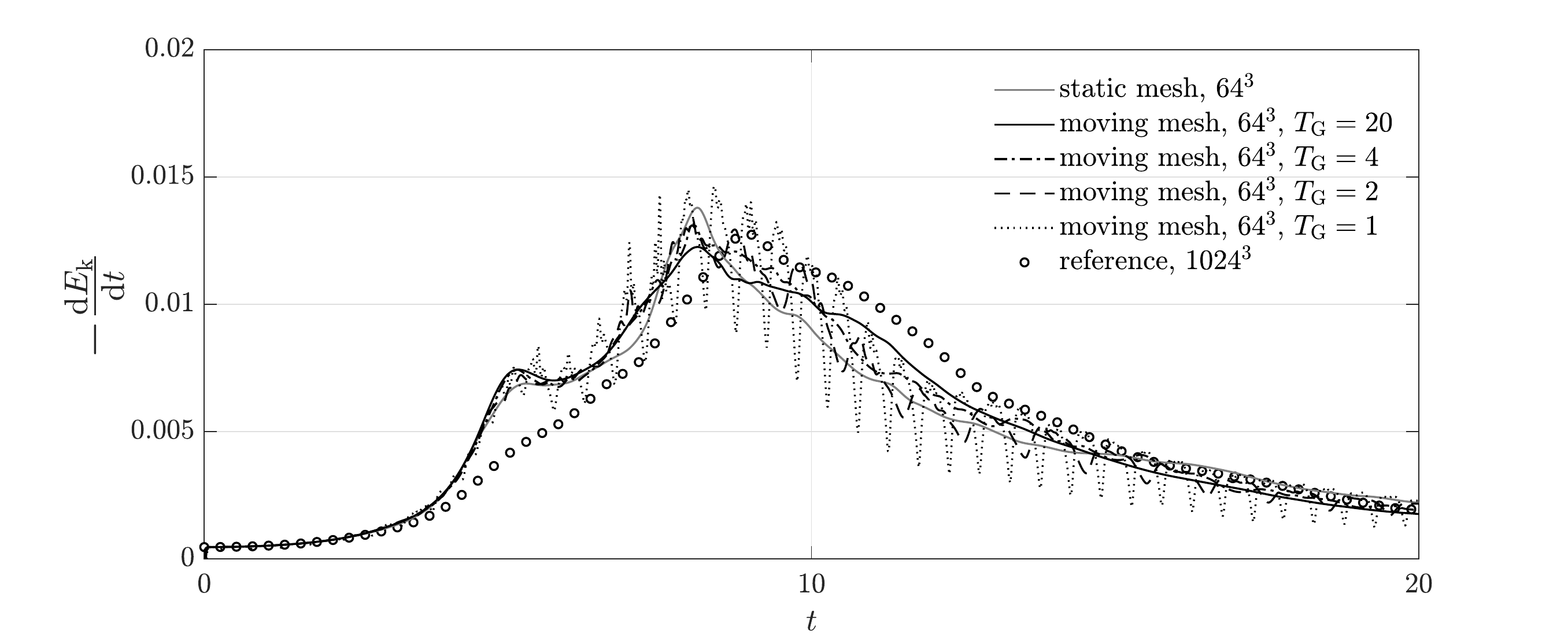}}
\caption{Taylor--Green vortex problem at~$\mathrm{Re}=1600$: comparison between static and moving meshes for polynomial degree~$k=3$ for increasing mesh velocity (decreasing period times of~$T_{\mathrm{G}}=20, 4, 2, 1$). A mesh with polynomial degree~$k=3$ and effective resolution of~$64^3$ is considered. The results shown have been simulated with the dual splitting scheme.}
\label{fig:tgv_Re1600_mesh_velocity}
\end{figure}

In a first set of experiments, the viscous case at~$\mathrm{Re}=1600$ is investigated. In Figure~\ref{fig:tgv_Re1600_convergence}, results for the kinetic energy~$E_k = \int_{\Omega_h} \frac{1}{2} \bm{u}_h \cdot\bm{u}_h \mathrm{d}\Omega / \int_{\Omega_h} \mathrm{d}\Omega$ and the dissipation rate of the kinetic energy obtained on the moving mesh are compared to results on a static Cartesian mesh. A mesh refinement study for degree~$k=3$ is performed using a rather slow mesh motion with a period time of~$T_{\mathrm{G}} = T$. The results converge towards the accurate DNS reference solution under mesh refinement, and the solution quality is comparable for static and moving meshes. In Figure~\ref{fig:tgv_Re1600_mesh_velocity}, we test the robustness w.r.t.~the grid velocity for the~$64^3$ mesh resolution by decreasing the period of the mesh motion down to~$T_{\mathrm{G}} = T/20 = 1$, resulting in a very fast mesh motion. With increasing mesh velocity, an oscillating behavior can be observed in the kinetic energy dissipation rate where the frequency of these oscillations follows the mesh motion. However, the temporal evolution of the kinetic energy is almost indistinguishable for the different mesh velocities. It can be observed that the oscillations in the dissipation rate are very small in the beginning of the simulation where the solution is smooth, while the oscillations grow once the transition to a turbulent state took place. In Section~\ref{sec:EnergyStability}, it was noted that the ALE transport term does not contribute to the energy evolution apart from the upwind stabilization term. However, this only holds under the assumption of exact numerical integration, which is not fulfilled on generally deformed geometries. A possible explanation for the results in Figure~\ref{fig:tgv_Re1600_mesh_velocity} could therefore be that integration errors (which are larger for non-smooth solutions or under-resolved scenarios) are amplified if the period of the mesh motion~$T_{\mathrm{G}}$ tends to zero and the mesh velocity tends to infinity. This is supported by the observation that the oscillications are larger on coarser meshes. Let us mention that this experiment is performed here to test the robustness of the solver and that such a scenario (increasing the mesh velocity for a fixed fluid velocity) is not representative of a fluid--structure interaction problem for which the fluid has to follow the motion of the fluid--structure interface due to no-slip conditions.

In a second set of experiments, the inviscid Taylor--Green vortex problem is studied, which is considered one of the most challenging benchmark examples to test the robustness of a flow solver for turbulent flows due to the absence of viscous dissipation, see for example~\cite{Schnucke2018arxiv}. Although the test case is academic, it can be expected that if a numerical method is robust for the inviscid Taylor--Green problem, it can also be successfully applied to practical, engineering problems. Table~\ref{tab:Inviscid_TGV_Robustness} shows results of robustness tests for the inviscid limit, where polynomial degrees of~$k=3,5,7,11,15$ are considered for  several spatial refinement levels~$l$. The highest mesh velocity studied above (corresponding to~$T_{\mathrm{G}} = 1$) is used for the inviscid TGV simulations. Robustness is particularly critical for coarse spatial resolutions where the flow is severly under-resolved, see for example~\cite{Fehn18a} for similar considerations on static meshes and a comparison of the present stabilized DG approach to a non-stabilized one, which is why we start our investigations with a mesh consisting of only one element,~$l=0$. With the standard penalty factor of~$\zeta = 1$ for the divergence and continuity penalty terms, robustness is achieved for all spatial resolutions and for all solver types (monolithic, dual splitting, pressure-correction). There is one exception: For the pressure-correction scheme and the parameters described above, using the incremental formulation lead to instabilities for the coarsest possible spatial resolution,~$l=0$ and~$k=3$, while the simulation was again stable for the non-incremental formulation~($J_p=0$). Overall, these results are encouraging in the sense that the stabilized DG approach proposed in~\cite{Fehn18a} for static meshes is well designed, meaning that robustness carries over to moving meshes without having to adjust discretization parameters. At the same time, we emphasize again that it is unclear to which extent energy stability can be guaranteed theoretically for the present stabilized DG approach.

\begin{table}[!ht]
\caption{Inviscid 3D Taylor--Green vortex problem: robustness of proposed ALE-DG methods is tested for refinement levels~$l=0,...,5$ and polynomial degrees~$k=3,5,7,11,15$. All simulations completed successfully as indicated by the symbol~\Checkmark. The sign '$-$' indicates that the specific spatial resolution is not considered since the effective resolution is limited to coarse discretizations for this robustness test.}
\label{tab:Inviscid_TGV_Robustness}
\renewcommand{\arraystretch}{1.1}
\begin{center}
\begin{tabular}{cccccc}
\toprule
& \multicolumn{5}{c}{Polynomial degree $k$}\\
\cline{2-6}
$l$ & $k=3$ & $k=5$ & $k=7$ & $k=11$ & $k=15$\\
\midrule
0        & \Checkmark   & \Checkmark & \Checkmark & \Checkmark & \Checkmark \\
1        & \Checkmark   & \Checkmark & \Checkmark & \Checkmark & \Checkmark \\
2        & \Checkmark   & \Checkmark & \Checkmark & \Checkmark & \Checkmark \\
3        & \Checkmark   & \Checkmark & \Checkmark & \Checkmark & $-$        \\
4        & \Checkmark   & \Checkmark & \Checkmark & $-$        & $-$        \\
5        & \Checkmark   & $-$        & $-$        & $-$        & $-$        \\
\bottomrule
\end{tabular}
\end{center}
\renewcommand{\arraystretch}{1}
\end{table}

\section{Conclusion and outlook}\label{sec:Conclusion}
We presented ALE-DG methods for the incompressible Navier--Stokes equations that are up to third-order accurate in time and arbitrarily high-order accurate in space for sufficiently smooth problems. Moving mesh formulations are derived for both monolithic and splitting approaches based on a method-of-lines approach, considering both implicit and explicit formulations of the convective term. The time integration framework relies on BDF and extrapolation schemes and extends naturally to adaptive time-stepping. Stable and high-order accurate boundary conditions are derived for the splitting-type approaches. The ALE methods are designed to automatically fulfill the geometric conservation law. An important aspect is that the proposed methods are simple to implement since the equations are solved on the deformed geometry, i.e., the generic finite element software takes care of the mapping and the geometry terms, and only one instance of the mesh is stored at a time. Fast matrix-free evaluation techniques are applied for all parts of the Navier--Stokes solvers as in the Eulerian case. A key feature is the use of consistent divergence and continuity penalty terms to stabilize the method for under-resolved turbulent flows. Numerical results demonstrate optimality in terms of convergence rates and the geometric conservation law, and robustness and accuracy of the proposed methods has been investigated for under-resolved turbulent flows. In the future, application to fluid--structure interaction problems is planned.

\appendix

\section{Adaptive time-stepping}\label{sec:AdaptiveTimeStepping}

For variable time step sizes, the time integration constants can be derived from Lagrange interpolation polynomials. To obtain the BDF coefficients~$\gamma_0^n$,~$\alpha_i^n$, the derivative of the Lagrange interpolation polynomials with support points at~$t_{n+1}, t_n,...,t_{n-J+1}$ is evaluated at time~$t_{n+1}$. To obtain the extrapolation coefficients~$\beta_i^n$, the Lagrange interpolation polynomials with support points at~$ t_n,...,t_{n-J+1}$ are evaluated at time~$t_{n+1}$. Table~\ref{tab:CoefficientsBDFTimeIntegrationAdaptive} summarizes the time integration constants for variable time step sizes using the notation introduced in Section~\ref{sec:TemporalDiscretization}.

\begin{table}[!ht]
\caption{Coefficients of BDF time integration scheme and extrapolation scheme for adaptive time step sizes.}\label{tab:CoefficientsBDFTimeIntegrationAdaptive}
\begin{center}
\begin{tabular}{cccc}
\toprule
  & BDF1 & BDF2 & BDF3\\ 
\midrule
$\gamma_0^n$ & 1 
             & $\frac{2 \Delta t_n + \Delta t_{n-1}}{\Delta t_n + \Delta t_{n-1}}$ 
	         & $1 + \frac{\Delta t_n}{\Delta t_n + \Delta t_{n-1}} + \frac{\Delta t_n}{\Delta t_n + \Delta t_{n-1} + \Delta t_{n-2}}$ \\
$\alpha_0^n$ & 1 
             & $\frac{\Delta t_n + \Delta t_{n-1}}{\Delta t_{n-1}}$ 
             & $\frac{\left(\Delta t_n + \Delta t_{n-1} \right)\left(\Delta t_n + \Delta t_{n-1} + \Delta t_{n-2} \right)}{\Delta t_{n-1} \left(\Delta t_{n-1} + \Delta t_{n-2}\right)} $\\
$\alpha_1^n$ & - 
             & $-\frac{\Delta t^2_n}{\left(\Delta t_n + \Delta t_{n-1}\right)\Delta t_{n-1}}$ 
             & $- \frac{\Delta t^2_n \left(\Delta t_n + \Delta t_{n-1} + \Delta t_{n-2} \right)}{\left(\Delta t_n + \Delta t_{n-1}\right) \Delta t_{n-1} \Delta t_{n-2}}$\\
$\alpha_2^n$ & - 
             & - 
             & $\frac{\Delta t^2_n \left(\Delta t_n + \Delta t_{n-1}\right)}{\left(\Delta t_n + \Delta t_{n-1} + \Delta t_{n-2}\right)\left(\Delta t_{n-1} + \Delta t_{n-2}\right) \Delta t_{n-2}} $\\
\hline
$\beta_0^n$  & 1 
             & $\frac{\Delta t_n + \Delta t_{n-1}}{\Delta t_{n-1}}$ 
             & $\frac{\left(\Delta t_n + \Delta t_{n-1} \right)\left(\Delta t_n + \Delta t_{n-1} + \Delta t_{n-2} \right)}{\Delta t_{n-1} \left(\Delta t_{n-1} + \Delta t_{n-2} \right)} $\\
$\beta_1^n$  & - 
             & $-\frac{\Delta t_n }{\Delta t_{n-1}}$ 
             & $-\frac{\Delta t_n \left(\Delta t_n + \Delta t_{n-1} + \Delta t_{n-2} \right)}{\Delta t_{n-1} \Delta t_{n-2}}$ \\
$\beta_2^n$  & - 
             & - 
             & $\frac{\Delta t_n \left(\Delta t_n + \Delta t_{n-1} \right)}{\left(\Delta t_{n-1} + \Delta t_{n-2}\right) \Delta t_{n-2}} $\\
\bottomrule
\end{tabular} 
\end{center}
\end{table}

\section{Energy conservation property on moving meshes}\label{sec:EnergyConservation}
In this section, we derive an energy conservation property according to equation~\eqref{eq:EnergyConservation} for the ALE form of the incompressible Navier--Stokes equations under the assumption of~$\nu=0$,~$\bm{f}=\bm{0}$, and periodic boundaries. We begin with
\begin{align}
\begin{split}
\int_{\Omega(t)} \left. \frac{\partial \frac{1}{2}\bm{u} \cdot \bm{u}}{\partial t}\right\vert_{\boldsymbol{\chi}} \mathrm{d}\Omega &= \int_{\Omega(t)} \bm{u} \cdot  \left. \frac{\partial \bm{u}}{\partial t}\right\vert_{\boldsymbol{\chi}} \mathrm{d}\Omega\\
& = -\int_{\Omega(t)} \bm{u} \cdot  \left( ((\bm{u}-\bm{u}_{\mathrm{G}})\cdot \nabla )\bm{u} + \nabla p \right) \mathrm{d}\Omega \, ,
\end{split}\label{eq:EnergyConservation_Begin}
\end{align}
where the momentum equation in ALE form, equation~\eqref{eq:MomentumEquationALE}, has been inserted in the second step. Next, the convective term is reformulated by making use of the identity~$\bm{u} \cdot \Grad{\bm{u}} \cdot \bm{w} = -\frac{1}{2}\left(\bm{u}\cdot \bm{u}\right) \Div{\bm{w}} + \frac{1}{2}\Div{\left(\bm{w} \left(\bm{u} \cdot \bm{u}\right)\right)}$ with~$\bm{w} = \bm{u} - \bm{u}_{\mathrm{G}}$, so that we obtain after applying Gauss' divergence theorem for the second term
\begin{align}
\begin{split}
 \int_{\Omega(t)} \bm{u} \cdot \nabla \bm{u} \cdot \left(\bm{u} - \bm{u}_{\mathrm{G}}\right)\; \mathrm{d}\Omega = 
 &- \frac{1}{2} \int_{\Omega(t)} \left( \bm{u} \cdot \bm{u} \right) \Div{\left(\bm{u} - \bm{u}_{\mathrm{G}}\right)} \; \mathrm{d}\Omega
 +\frac{1}{2} \int_{\partial \Omega(t)} \left( \bm{u} \cdot \bm{u} \right) \left(\bm{u} - \bm{u}_{\mathrm{G}}\right)\cdot\bm{n} \; \mathrm{d}\Gamma \\
 = & + \frac{1}{2} \int_{\Omega(t)} \left( \bm{u} \cdot \bm{u} \right) \Div{\bm{u}_{\mathrm{G}}} \; \mathrm{d}\Omega\, .
\end{split}\label{eq:EnergyConservation_ConvectiveTerm}
\end{align}
In the second step, we exploited that the surface integral vanishes due to the assumption of periodicity, and that~$\Div{\bm{u}}=0$ holds (continuity equation). 
For the pressure gradient term, integration by parts is performed to obtain
\begin{align}
 \int_{\Omega(t)} \bm{u} \Grad{p} \; \mathrm{d}\Omega = -  \int_{\Omega(t)} \underbrace{\Div{\bm{u}}}_{=0} \; p \; \mathrm{d}\Omega  + \int_{\partial \Omega(t)} p \; \bm{u}\cdot\bm{n} \; \mathrm{d}\Gamma = 0 \, ,\label{eq:EnergyConservation_PressureTerm}
\end{align}
where the volume integral vanished due to~$\Div{\bm{u}}=0$ and the surface integral due to periodic boundaries. Inserting equations~\eqref{eq:EnergyConservation_ConvectiveTerm} and~\eqref{eq:EnergyConservation_PressureTerm} into equation~\eqref{eq:EnergyConservation_Begin}, we arrive at the result
\begin{align*}
\int_{\Omega(t)} \left. \frac{\partial \frac{1}{2}\bm{u} \cdot \bm{u}}{\partial t}\right\vert_{\boldsymbol{\chi}} \mathrm{d}\Omega + \int_{\Omega(t)} \frac{1}{2}\left( \bm{u} \cdot \bm{u} \right) \Div{\bm{u}_{\mathrm{G}}} \; \mathrm{d}\Omega = 0 \, .
\end{align*}
To explain why this equation describes conservation of kinetic energy, we transform the integral onto the reference domain~$\Omega_0$, using the Jacobian~$\bm{J} =\partial \bm{x}/\partial \boldsymbol{\chi}$
\begin{align*}
\int_{\Omega_0} \left. \frac{\partial \frac{1}{2}\bm{u} \cdot \bm{u}}{\partial t}\right\vert_{\boldsymbol{\chi}} \det\bm{J} \mathrm{d}\Omega + \int_{\Omega_0} \frac{1}{2}\left( \bm{u} \cdot \bm{u} \right) \Div{\bm{u}_{\mathrm{G}}} \det\bm{J} \; \mathrm{d}\Omega = 0 \, .
\end{align*}
Using the relation
\begin{align*}
\left.\frac{\partial \det \bm{J}}{\partial t}\right\vert_{\boldsymbol{\chi}} = \Div{\bm{u}_{\mathrm{G}}} \det\bm{J} \; ,
\end{align*}
which is frequently used in the context of ALE derivations, see for example~\cite{Foerster2006}, we can further simplify the integral
\begin{align*}
\begin{split}
0 &= \int_{\Omega_0} \left. \frac{\partial \frac{1}{2}\bm{u} \cdot \bm{u}}{\partial t}\right\vert_{\boldsymbol{\chi}} \det\bm{J} + \frac{1}{2}\left( \bm{u} \cdot \bm{u} \right) \left.\frac{\partial \det \bm{J}}{\partial t}\right\vert_{\boldsymbol{\chi}}\; \mathrm{d}\Omega = \int_{\Omega_0} \left. \frac{\partial \frac{1}{2}\bm{u} \cdot \bm{u} \det\bm{J}}{\partial t}\right\vert_{\boldsymbol{\chi}} \; \mathrm{d}\Omega =\\ 
& = \left.\frac{\partial }{\partial t}\int_{\Omega_0} \frac{1}{2}\bm{u} \cdot \bm{u} \det\bm{J} \; \mathrm{d}\Omega \right\vert_{\boldsymbol{\chi}} \, .
\end{split}
\end{align*}
In the last step, the time derivative has been pulled out of the integral since the integral is performed over~$\Omega_0$, which does not depend on~$t$. From the above relation, it is obvious that energy is conserved under the given assumptions.

\section{Energy balance of convective term}\label{sec:EnergyBalanceConvectiveTerm}
In this section, we show that equation~\eqref{eq:ConvectiveTermReformulation} holds. 

\begin{proof}
The first goal is to achieve that the volume integral of the convective term contains the divergence of the velocity and the grid velocity. Similar to the continuous case in~\ref{sec:EnergyConservation}, we make use of the identity
\begin{align*}
\bm{u}_h \cdot \Grad{\bm{u}_h} \cdot \bm{w}_h = -\frac{1}{2}\Div{\bm{w}_h} \left(\bm{u}_h\cdot \bm{u}_h\right) + \frac{1}{2}\Div{\left(\bm{w}_h\left(  \bm{u}_h \cdot \bm{u}_h\right)\right)} \; ,
\end{align*} 
with~$\bm{w}_h = \bm{u}_h - \bm{u}_{\mathrm{G},h}$. Note that we cannot exploit that the divergence of the discrete velocity~$\bm{u}_h$ is zero in the discrete case. Inserting the above identity into the left-hand side of equation~\eqref{eq:ConvectiveTermReformulation} and applying Gauss' divergence theorem yields
\begin{align}
\begin{split}
\sum_{e=1}^{N_{\mathrm{el}}} &\left(
\intele{\bm{u}_h}{\left(\Grad{\bm{u}_h}\right) \cdot \left(\bm{u}_h - \bm{u}_{\mathrm{G},h}\right)} 
- \inteleface{\bm{u}_h}{\left(\left(\avg{\bm{u}_h}-\bm{u}_{\mathrm{G},h}\right)\cdot\bm{n}\right) \frac{1}{2}\jumporiented{\bm{u}_h}}\right)\\
=
& - \sum_{e=1}^{N_{\mathrm{el}}} \left(\frac{1}{2} \intele{\Div{\left(\bm{u}_h - \bm{u}_{\mathrm{G},h}\right)}}{\bm{u}_h\cdot\bm{u}_h}\right)\\
& + \sum_{e=1}^{N_{\mathrm{el}}} \left(
    \frac{1}{2}\inteleface{\bm{u}_h\cdot\bm{u}_h}{\left(\bm{u}_h-\bm{u}_{\mathrm{G},h}\right)\cdot \bm{n}}
  - \inteleface{\bm{u}_h}{\left(\left(\avg{\bm{u}_h}-\bm{u}_{\mathrm{G},h}\right)\cdot\bm{n}\right) \frac{1}{2}\jumporiented{\bm{u}_h}}\right)\\
= 
& - \frac{1}{2}\intdomain{\Div{\left(\bm{u}_h-\bm{u}_{\mathrm{G},h}\right)}}{\bm{u}_h\cdot\bm{u}_h}\\
& \underbrace{- \frac{1}{2}\intinteriorfaces{\jumporiented{\bm{u}_h\cdot\bm{u}_h}}{\bm{u}_{\mathrm{G},h}\cdot \bm{n}}
  + \intinteriorfaces{\jumporiented{\bm{u}_h}\cdot\avg{\bm{u}_h}}{\bm{u}_{\mathrm{G},h}\cdot \bm{n}}}_{=0 \; (\text{linear transport})}\\
& + \frac{1}{2}\intinteriorfaces{\bm{u}_h^-\cdot \bm{n}^-}{\bm{u}_h^-\cdot\bm{u}_h^-}
  + \frac{1}{2}\intinteriorfaces{\bm{u}_h^+\cdot \bm{n}^+}{\bm{u}_h^+\cdot\bm{u}_h^+}
  - \intinteriorfaces{\jumporiented{\bm{u}_h}\cdot\avg{\bm{u}_h}}{\avg{\bm{u}_h}\cdot \bm{n}}\; ,
\end{split}\label{eq:ReformulationConvectiveTerm_Step1}
\end{align}
where the summation over all elements has been performed in the second step. It can easily be verified that the surface integrals related to the grid velocity add up to zero, as expected for a linear transport term. The third row on the right-hand side of the above equation is more difficult to summarize, since none of the two slots is the same for both sides of the face (originating from the nonlinearity of the transport term), so we need to insert a relation like~$\bm{u}_h^+ = \bm{u}_h^- - \jumporiented{\bm{u}_h}$ in order to be able to combine these terms
\begin{align}
\begin{split}
& + \frac{1}{2}\intinteriorfaces{\bm{u}_h^-\cdot \bm{n}^-}{\bm{u}_h^-\cdot\bm{u}_h^-}
  + \frac{1}{2}\intinteriorfaces{\bm{u}_h^+\cdot \bm{n}^+}{\bm{u}_h^+\cdot\bm{u}_h^+}
  - \intinteriorfaces{\jumporiented{\bm{u}_h}\cdot\avg{\bm{u}_h}}{\avg{\bm{u}_h}\cdot \bm{n}}\\
=
& \underbrace{+ \frac{1}{2}\intinteriorfaces{\bm{u}_h^- \cdot \bm{u}_h^-}{\bm{u}_h^- \cdot \bm{n}^-}
- \frac{1}{2}\intinteriorfaces{\bm{u}_h^+\cdot \bm{u}_h^-}{\bm{u}_h^-\cdot \bm{n}^-}}_{\text{combine}}
+ \frac{1}{2}\intinteriorfaces{\bm{u}_h^+\cdot \jumporiented{\bm{u}_h}}{\bm{u}_h^-\cdot \bm{n}^-}\\
& \underbrace{+ \frac{1}{2}\intinteriorfaces{\bm{u}_h^+\cdot \bm{u}_h^-}{\jumporiented{\bm{u}_h}\cdot \bm{n}^-}
- \frac{1}{2}\intinteriorfaces{\bm{u}_h^+\cdot \jumporiented{\bm{u}_h}}{\jumporiented{\bm{u}_h}\cdot \bm{n}^-}}_{\text{combine}}
- \intinteriorfaces{\jumporiented{\bm{u}_h}\cdot\avg{\bm{u}_h}}{\avg{\bm{u}_h}\cdot \bm{n}}\\
=
& \underbrace{+ \frac{1}{2}\intinteriorfaces{\jumporiented{\bm{u}_h}\cdot \bm{u}_h^-}{\bm{u}_h^-\cdot \bm{n}^-}
+ \frac{1}{2}\intinteriorfaces{\bm{u}_h^+\cdot \jumporiented{\bm{u}_h}}{\bm{u}_h^-\cdot \bm{n}^-}}_{\text{combine}}
+ \frac{1}{2}\intinteriorfaces{\bm{u}_h^+\cdot\bm{u}_h^+}{\jumporiented{\bm{u}_h}\cdot \bm{n}^-}\\
&- \intinteriorfaces{\jumporiented{\bm{u}_h}\cdot\avg{\bm{u}_h}}{\avg{\bm{u}_h}\cdot \bm{n}}\\
=
& \underbrace{+ \intinteriorfaces{\jumporiented{\bm{u}_h}\cdot \frac{\bm{u}_h^- + \bm{u}_h^+}{2}}{\bm{u}_h^-\cdot \bm{n}^-}
- \intinteriorfaces{\jumporiented{\bm{u}_h}\cdot\avg{\bm{u}_h}}{\avg{\bm{u}_h}\cdot \bm{n}}}_{\text{combine}}
+ \frac{1}{2}\intinteriorfaces{\bm{u}_h^+\cdot \bm{u}_h^+}{\jumporiented{\bm{u}_h}\cdot \bm{n}^-}\\
=
& \underbrace{+\intinteriorfaces{\jumporiented{\bm{u}_h}\cdot \avg{\bm{u}_h}}{\left(\bm{u}_h^- - \avg{\bm{u}_h}\right)\cdot \bm{n}^-}
+ \frac{1}{2}\intinteriorfaces{\bm{u}_h^+ \cdot\bm{u}_h^+}{\jumporiented{\bm{u}_h}\cdot \bm{n}^-}}_{\text{combine}}\\
= &
+ \frac{1}{2}\intinteriorfaces{\jumporiented{\bm{u}_h}\cdot \avg{\bm{u}_h} + \bm{u}_h^+ \cdot\bm{u}_h^+}{\jumporiented{\bm{u}_h}\cdot \bm{n}^-} \\
= & + \frac{1}{2}\intinteriorfaces{\avg{\bm{u}_h\cdot \bm{u}_h}}{\jumporiented{\bm{u}_h}\cdot \bm{n}^-}\; ,
\end{split}
\label{eq:ReformulationConvectiveTerm_Step2}
\end{align}
where one can easily verify that~$\jumporiented{\bm{u}_h}\cdot \avg{\bm{u}_h} + \bm{u}_h^+ \cdot\bm{u}_h^+ = \avg{\bm{u}_h\cdot \bm{u}_h}$. Inserting equation~\eqref{eq:ReformulationConvectiveTerm_Step2} into equation~\eqref{eq:ReformulationConvectiveTerm_Step1} completes the proof.\qed
\end{proof}

\section*{Acknowledgments}
The research presented in this paper was partly funded by the German Research Foundation (DFG) under the project ``High-order discontinuous Galerkin for the exa-scale'' (ExaDG) within the priority program ``Software for Exascale Computing'' (SPPEXA), grant agreement no. KR4661/2-1 and WA1521/18-1.


\bibliography{paper}

\end{document}